\documentclass[aps,prx,reprint,superscriptaddress,longbibliography,nofootinbib]{revtex4-2}

\usepackage{hyperref}
\usepackage{enumitem}
\usepackage{xcolor}
\usepackage{epstopdf}
\usepackage[normalem]{ulem} 
\usepackage{comment}
\usepackage{subfigure}
\usepackage{tikz}
\usepackage{tabularx}

\usepackage[T1]{fontenc}
\usepackage[utf8]{inputenc}
\usepackage{babel}

\usepackage{amsmath,amssymb,amsthm,mathtools}
\usepackage{bm}
\usepackage{braket}

\usepackage{graphicx}
\usepackage{dcolumn}
\usepackage{booktabs}
\usepackage{multirow}

\theoremstyle{plain}
\newtheorem{theorem}{Theorem}[section]
\newtheorem*{theorem*}{Theorem}
\newtheorem{lemma}[theorem]{Lemma}
\newtheorem{proposition}[theorem]{Proposition}

\theoremstyle{definition}
\newtheorem{definition}[theorem]{Definition}

\newtheorem{axiom}{Axiom}

\usepackage{soul}

\newcommand{\cneg}{\mathord{\sim}}

\begin{document}

\title{Bell's \emph{go} theorem: regaining local causality with a new notion of realism}

\author{Lorenzo Catani}\email{lorenzo.catani4@gmail.com}
\affiliation{INESC TEC, Rua da Universidade, 4710-057 Braga, Portugal}
\affiliation{International Iberian Nanotechnology Laboratory, Av. Mestre José Veiga s/n, 4715-330 Braga, Portugal}

\author{Nesta van der Schaaf}
\affiliation{Université Paris-Saclay, CNRS, CentraleSupélec, ENS Paris-Saclay, Inria,
Laboratoire Méthodes Formelles, 91190, Gif-sur-Yvette, France}
\author{Rui Soares Barbosa}
\affiliation{International Iberian Nanotechnology Laboratory, Av. Mestre José Veiga s/n, 4715-330 Braga, Portugal}

\begin{abstract}

Bell's theorem is popularly stated as the inconsistency of quantum theory with local realism. More precisely, realism here is mathematically described by the ontological models framework, where the elements of reality are represented by points in a measurable space, and probabilities obey the Kolmogorov axioms. In the Bell scenario, locally causal ontological models posit a common cause supporting a global valuation---a joint assignment of outcomes to all measurements at once, across mutually incompatible contexts. Within this notion of realism, Bell's theorem implies that any realist explanation of the quantum predictions must be conspiratorial: any underlying mechanism, whether superluminal, superdeterministic, or retrocausal, must be finely tuned so as never to manifest at the operational level.
In this work, we redefine realism by developing a new ontological framework that generalizes the standard one. The elements of reality are no longer points, but parts of a point-free  generalization of a topological space (sublocales of a locale). These parts, which we call records, are primitive and inherently relational. A record shared by two incompatible contexts establishes the sameness of a measurement across them, without such pairwise identities composing into a global valuation, and the common cause is then a web of such records.  Within this framework, Bell's no-go theorem becomes a \emph{go} theorem: local causality, restated for records, is no longer at odds with the quantum statistics. This is possible because modularity, the inclusion--exclusion law of classical probability, fails across incompatible contexts, so that the underlying probability theory generalizes Kolmogorov's. We exhibit an explicit family of models in the new framework, reproducing several quantum strategies, including the optimal CHSH violation. More broadly, locally causal models of this kind attain every Bell--CHSH value up to the algebraic maximum. A realist explanation of the quantum predictions need not be conspiratorial; every causal influence posited by these models is reflected in the operational predictions.

\end{abstract} 

\maketitle

\tableofcontents

\section{Introduction}
\label{sec:Intro}


If one asks researchers in the foundations of quantum theory to name the most important result in the field, the answer is almost unanimous: Bell's no-go theorem~\cite{Bell_1964}. It formally establishes, via the violation of an inequality, a way in which quantum theory departs from the classical worldview, by showing that the statistics predicted by quantum theory are inconsistent with those of any ontological model satisfying local causality.
Ontological models~\cite{Harrigan2010}, also known as hidden-variable models, constitute a mathematically rigorous framework for providing a realist explanation of the statistical predictions of a theory.
The scenario in which the theorem is formulated involves two spacelike-separated parties, Alice and Bob, who measure parts of a system prepared in their common past, choosing settings $x$ and $y$ and obtaining outcomes~$a$ and~$b$. Each round realizes a measurement \emph{context}, consisting of the pair of measurements jointly performed, one for each party. Local causality requires a model of the correlations between $a$ and $b$ to respect relativistic causality (no event being influenced by anything outside its past light cone) and, by Reichenbach's principle, to explain them by a common cause in the shared past, conditioning on which the two outcomes become independent.
Despite the theorem's established importance, and more than sixty years after its first formulation, there is still no consensus on what lesson to draw from it.

\smallskip


Bell's no-go theorem forces us to abandon one of its two assumptions.
One possibility is to abandon local causality while retaining the ontological models framework as the mathematical embodiment of realism.
A realist model of this kind must then involve a superluminal causal influence---say, from Alice's setting $x$ to Bob's outcome $b$.
Yet the quantum statistics are no-signalling, so this influence leaves no operational trace, with Bob's marginal outcome distribution being exactly the same for any choice of Alice's measurement. The question this realist explanation cannot escape is then: why should nature arrange a genuine causal influence so as to be precisely undetectable at the operational level?
More generally, every classical causal explanation of the quantum predictions becomes unavoidably conspiratorial in this way, whether the mechanism is superluminal, retrocausal, or superdeterministic,\footnote{We take, as in~\cite{CataniLeifer2023}, the measurement-independence assumption (capturing no-retrocausality and no-superdeterminism) to be an assumption of the ontological models framework.} positing causal influences that are not reflected at the level of the operational predictions~\cite{Wood2015}.
This mismatch between the statistical predictions and the realist explanation of the theory is known as the problem of fine-tuning, since the influence the model posits is concealed, and no-signalling (or, in different scenarios, some other operational equivalence) is recovered only through unjustified adjustments to the parameters of the ontological model.
Taking this problem seriously, the lesson of the theorem is that the ontological models framework itself must be replaced, as our way of modelling reality, by one that reproduces the quantum statistics without fine-tuning.

\smallskip


How, though, can such a framework be developed? One difficulty is that the ontological models framework seems to capture what realism itself means. Its elements of reality, complete specifications of the properties of physical systems, are individual units, each individuated on its own, independently of any other and of any whole they may belong to. They are thus mathematically represented by points of a measurable space, and the statistics of a theory are recovered through classical (Kolmogorov) probability theory. This simplicity does not limit the framework's generality. It carries no commitment as to what the basic entities are, be they particles, fields, wave functions, or something else.

The framework is, moreover, theory-independent, providing a realist account of any operational theory, not just quantum theory. It has also enabled remarkable progress, allowing one to rigorously pose questions such as whether the quantum state describes reality directly or merely our knowledge of it (the $\psi$-ontic vs.\ $\psi$-epistemic debate~\cite{LeiferOntology,PBR}), and to formally define departures from classicality through the no-go theorems, of which Bell's is one example among many~\cite{KochenSpecker1967,Spekkens2005,LeiferMaroney2013,Chaturvedi2020}.
Some realist interpretations of quantum theory can be cast within the framework~\cite{BohmianMechanics}, while others lie outside it~\cite{Wallace}; we discuss these in App.~\ref{Appendix:Comparison}.
To the best of our knowledge, however, no existing proposal to replace the framework meets the requirements the task imposes. We articulate them as three desiderata. The new framework must (i) embody a precise notion of reality, represented through a rigorous mathematical structure; (ii) be theory-independent, applicable to any operational theory for which one seeks a realist explanation; and (iii) reproduce the quantum statistics without fine-tuning.


In this work, we introduce the \emph{localic ontological models framework}: a new, theory-independent mathematical framework that generalizes the standard one and embodies an inherently relational notion of reality.
Within it, local causality no longer forces a contradiction with the statistics of quantum theory, so that all three desiderata are met, the third in the paradigmatic case of the Bell scenario.
Specifically, we define locally causal localic models in the Bell scenario and show that the bound attainable by the Bell--CHSH expression~\cite{CHSH69} rises from the classical value all the way to the algebraic maximum. On one and the same locale we exhibit explicit representations of quantum strategies, including the one achieving the Tsirelson bound.
The lesson we draw from Bell's theorem is then not that local causality fails, but that the standard framework smuggles in a hidden commitment about the common cause, namely that it must be a global valuation, a joint assignment of outcomes to all measurements at once, despite their belonging to mutually incompatible contexts, which can never be jointly realized.

We keep the causal assumptions of the theorem, local causality and measurement independence, restated for the new framework. What we redefine is realism, and with it what the common cause is. We replace the independently individuated units that support a global valuation with \emph{records}, relational units through which incompatibility is reflected at the ontological level. The common cause becomes a web of records fixed in the shared past, and the record a round realizes screens the outcomes off.


In the localic ontological models framework, the elements of reality (the records) are no longer points of a measurable space but \emph{parts} (sublocales) of a \emph{locale}, a point-free generalization of a topological space~\cite{PicadoPultr2012,Vickers1989}; hence the name. A part is individuated by how it restricts the whole, not by a list of points, so a locale can even possess parts that are not empty and yet contain no point. It is exactly these non-spatial parts that do the new work.\footnote{A locale is \emph{spatial} when its frame is the set of open subsets of some topological space, so it is an ordinary topological space in disguise. Nevertheless, even spatial locales may possess parts that are \emph{non-spatial}, and so cannot be described by collections of points; the extreme case is a pointless part, containing no points at all. Section~\ref{sec:pointfree} exhibits one inside the spatial locale coming from the real line.} 
They establish the sameness of a measurement across the two incompatible contexts in which it appears, while lying in those two alone; in the standard framework, by contrast, every point lies in all contexts at once and answers every measurement, which is what makes it a global valuation.

Preparations are still represented by weights over the elements of reality and measurements by response functions, with the operational statistics recovered by integration, now over records.  The departure from Kolmogorov probability lies in the weight, which remains non-negative but is not obliged to satisfy the axioms in full. Within each context the weight is classical, additive over the disjoint outcomes and normalized. Across incompatible contexts, which overlap precisely in the records they share, the weights need not fit into a single additive account, and the \emph{modularity} law of classical probability may fail there. The failure is thereby both explained and confined: it descends from each context being normalized on its own, and it shows only on incompatible contexts taken together, which no round realizes, precisely where no experiment can ever check it. The standard framework is recovered as a special case, in which each record is made of points, each point answers every measurement, and the weight of a preparation is a single probability distribution over the points (Sec.~\ref{sec:localic_OMF}).


The family of localic models with which we prove our results has a symmetric structure (Fig.~\ref{fig:localicOM_Bell}). Each of the four Bell contexts contains \emph{core} parts, which belong to that context alone, and each pair of contexts sharing a measurement is linked by \emph{glue} parts, i.e., records belonging to both. It is the glues that make factorizability, the mathematical expression of local causality, a substantive condition, as a glue carries a response function in each of its two contexts and the condition requires the two to give the same answer to the measurement those contexts share. The glue structure itself is fixed, once for every state and measurement, and states and measurements differ only in how they distribute weight over it and read its records out.

As a single weight parameter varies, the Bell--CHSH value takes every value from the classical bound up to the algebraic maximum, with each model so obtained remaining locally causal.
We also exhibit representations of the quantum strategies on this locale. In the optimal strategy, the weight representing the maximally entangled state populates the glues, and the response functions representing the optimal incompatible measurements read out exactly the sectors that enhance the winning correlations, yielding the Tsirelson value $2\sqrt{2}$. The state \emph{plants} the weight and the measurements \emph{harvest} it. Keeping the same state but measuring local Paulis, the harvest misaligns and the stabilizer strategy stays within the classical bound: the difference between the two strategies lies entirely in the response functions, not in the ontological representation of the state. Finally, a product state populates no glue at all, and its statistics are equally well reproduced by a standard locally causal ontological model.


This work is the first step of a research programme, and what it establishes is an existence result. It exhibits locally causal localic models reproducing the statistics of quantum strategies in the Bell scenario, with three worked out explicitly.
Nonlocality is thereby shown to be an artefact of the notion of realism the standard ontological models framework encodes, rather than a necessity for realist explanations of the Bell correlations. 
The models reproduce the no-signalling statistics with no influence to conceal, since each party's response at a record depends on that party's setting alone, and no equally fine-tuned retrocausal or superdeterministic mechanism is needed either. No-signalling, derived from local causality in the standard framework, is here required of the weight (Sec.~\ref{sec:go}), and we expect the rule composing systems to derive it, as we discuss in Sec.~\ref{sec:programme}.

That models of this kind reach the algebraic maximum shows that, once the global valuation is given up, local causality bounds nothing by itself, and it relocates the central question: what singles out quantum theory is no longer the degree of nonlocality an ontological model must posit to reproduce its correlations---a degree that operational approaches seek to explain through information-theoretic principles~\cite{Popescu1994,hardy2001quantum,ChiribellaDArianoPerinotti2011,MasanesMuller2011}---but which allocations of weight to the records nature permits. We speculate in Sec.~\ref{sec:programme} on the origin of the Tsirelson bound, the composition rule for locales, and the extension of the same recipe to contextuality and extended Wigner's friend scenarios.

The framework may sound abstract, mostly because point-free topology is not part of the standard toolbox of physics, but the story it tells about Bell's theorem is simple, and we now tell it in full, with no mathematics: the global valuation becomes a shared plan, the records need no translation, and the failure of modularity becomes the observation that weights across incompatible contexts have lost their reason to add.

\subsection*{A simple story}


Imagine that two friends, Alice and Bob, are sent to opposite ends of the Earth. Each is later asked a question and gives an answer. When the answers are compared, they are correlated. The natural explanation is simple: before separating, the friends agreed on a plan, a complete list of answers for each question. This shared past explains the correlation, and nothing needs to travel between them to provide the correlated answers.


Imagine now the same scenario, but where the questions are what we call incompatible measurements.\footnote{We use ``question'' and ``measurement'' interchangeably: a question put to the system is a measurement performed on it, and its outcome is the answer.} These are measurements that have no counterpart in classical everyday life, but exist in quantum theory; \emph{incompatible} means they cannot both be performed in the same experiment.  Alice chooses between two such measurements $A_0$ and $A_1$ (for example, the spin of an electron along the $X$ or the $Z$ direction), and Bob likewise has two of his own, $B_0$ and $B_1$. Whenever Alice and Bob each choose one measurement, their two choices define a particular experimental situation, which we call a \emph{context}. There are four: $\{A_0,B_0\}$, $\{A_0,B_1\}$, $\{A_1,B_0\}$, and $\{A_1,B_1\}$. A context is the set of measurements performed together in a single round, so in any one round exactly one context is realized, and different rounds may realize different ones. Notice that two contexts may still share one measurement. For instance, $\{A_0,B_0\}$ and $\{A_0,B_1\}$ share Alice's measurement $A_0$ and differ only in Bob's; since Bob cannot perform $B_0$ and $B_1$ together, these two contexts can never occur in the same round. We may picture two contexts as two incompatible perspectives, two experimental situations each individually realizable, but never both at once.


Bell's theorem concerns explanations of this kind, which rest on a shared past. If the correlations between Alice's and Bob's outcomes come from a plan laid down before the two separated, then, taken across these contexts together, the correlations cannot exceed a certain limit.
Quantum systems, however, can be correlated more strongly than this limit allows. Faced with this, one \emph{may} conclude that the strong correlations require some influence between Alice's region and Bob's. Since the two are far apart, such an influence would have to travel faster than light, even though no signal could actually be sent between these two regions,\footnote{To send a message one must control what is transmitted, but neither Alice nor Bob can control their own outcomes: each, looking only at their own side, sees nothing but an unpredictable sequence of results, and the correlation surfaces only when the two later compare their results. Nothing can thus be sent from one region to the other, in accordance with relativity, which forbids any faster-than-light signal.} giving rise to what we called earlier the problem of fine-tuning.


But look at what such a shared plan assumes. A plan gives every measurement a single outcome, fixed in advance,\footnote{We take the plan to fix a definite answer for each measurement only because that is the most vivid picture; it costs no generality. Were the friends instead to share a random recipe fixing merely the \emph{probability} of each answer, the resulting correlations would be exactly those of some mixture of definite plans~\cite{Fine}. What matters below is therefore not that the answers are fixed in advance, but the next feature, that one assignment covers all the measurements at once.} and it does this for all of them at once---even though they belong to different, mutually incompatible contexts. It is one universal list, and because the contexts overlap, each measurement belonging to two of them, it makes every context agree with every other on the outcomes of the measurements they share. Yet no experiment can perform incompatible measurements together, so none can ever record such a list, not even in principle. It can only be \emph{posited} to exist. 


In this work, we reject this notion of plan and replace it with a different notion of shared past, corresponding to a web of records. The role of a record is to identify the same measurement across the two incompatible contexts it appears in. While a plan fixes an outcome for every measurement in every context at once, a record does far less. Suppose Alice measures $A_0$. That same measurement appears in two different contexts, according to which measurement Bob chooses alongside it, and a record simply says that these two appearances of~$A_0$ are one and the same. Such a record is a relational unit, and a web of such records ties the overlapping contexts without ever fusing them into one universal list: the overlaps are kept, the global list is not.


One may wonder how a record can establish the sameness of a measurement across two contexts without amounting to a plan. The difference lies in what is registered. In a plan, the sameness of $A_0$ across its two contexts comes for free, because one entry answers every measurement at once. Such an entry is not a full account of a round, which consists of the pair of questions posed and the answers they gave. By registering the answers without the questions, the entry belongs to all four contexts at once, and this is the step we refuse. Only one of the four contexts is realized in a round, and which questions were posed is part of what happened. A full account therefore belongs to the one context realized and to no other, and an entry belonging to all four contains answers to questions the round never posed.

A record shared by two contexts registers only part of what happened instead. Take the two contexts containing $A_0$. Their shared record registers Alice's question and nothing of the question posed to Bob with it, and it answers $A_0$ alike in both. Leaving Bob's question out is exactly what lets it belong to both contexts, so that it can be part of a round of either. Bob's answer results from the record together with the question he posed, $B_0$ in one context and $B_1$ in the other. The shared record thus belongs to two alternatives that cannot both occur, the round that took place and the one that did not, and this gives it a counterfactual character.

Even taken together, shared records do not amount to a plan. Each ties the two contexts that share a question and only those, since contexts that share no question have nothing to be the same about, and $\{A_0,B_0\}$ and $\{A_1,B_1\}$ share none. Every pair sharing a question is tied, each context to two others, so the four form a ring. But the four ties are different records, and none answers more than the measurements of its own two contexts. The record tying the two contexts that contain $A_0$, for instance, answers one pair in each of them, $(A_0,B_0)$ and $(A_0,B_1)$, with the same answer to $A_0$ in both, so three measurements and never $A_1$. 
A full account of a round answers fewer still, just the two measurements of its own context. Nothing answers all four measurements at once, so the contexts are never gathered into a single account.


Carrying sameness this thinner way changes how the probabilities combine. In standard classical probability, the chances of the mutually exclusive outcomes of one measurement add up. When Alice measures along $Z$, the electron comes out either up or down, and the two chances sum to one. Up and down are alternatives within one and the same situation, so adding their chances is exactly right. Across two incompatible contexts that reason is gone. The measurements $A_0$ and $A_1$, or $B_0$ and $B_1$, cannot even be performed together, so their outcomes are not alternatives within any one situation. Within each context the chances add as usual, since there the measurements are compatible; across contexts, nothing requires them to, and in fact additivity fails. This is what we give up of classical probability, and it is given up exactly where no experiment can object. To check additivity across two contexts one would have to perform them together, which is what incompatibility forbids. The chances stay non-negative, and within each context they still add to one.


The failure of additivity across incompatible contexts opens room for stronger correlations. The chances of incompatible contexts do not add into a single total, and that is the same fact as there being no single description covering them all. Bell's limit constrains such a description, so where there is none it does not apply. Notice that the shared record is not what breaks additivity; on the contrary, it is the only thing that brings two contexts closer to a single description. Classical probability combines two descriptions by counting what they share once, so if what two contexts share has chance $s$, their combined total is $1+1-s$. This equals~$1$, as a single description requires, only if $s=1$, when the two share everything, as all contexts do in a plan. With no shared record $s=0$ and the total is two, and a shared record brings it closer to one by exactly its chance.

This is also why records are needed and why a plan is too much. Without records the four contexts would be four unrelated descriptions, with nothing carried over from the shared past and so nothing explained; with a plan they would be a single description, and Bell's limit would apply. Records sit between the two: enough sameness for the shared past to explain the correlations, never enough to make the four into a single description.

Bell's limit arises from the ring the contexts form. Suppose we want the answers of Alice and Bob to follow a pattern, agreeing in three contexts and disagreeing in the fourth. No list of answers can do this. Each shared measurement carries its answer from one context to the next, so agreement in three contexts around the ring forces agreement in the fourth. A plan therefore satisfies the pattern in at most three contexts in four, and that is the limit. Records carry an answer only between the two contexts they tie, never around the whole ring, so nothing forces the fourth context to agree, and the four contexts, each with a unit of chance of its own, can satisfy the pattern in full. A suitable choice of quantum state and incompatible measurements satisfies it more often than any plan allows. Yet, given the record, Alice's outcome depends only on her own measurement and Bob's only on his, and neither depends on what is done in the distant region.


What had seemed to demand a spooky connection between Alice and Bob rested on the universal list a plan assumes. Without that list, the chances of incompatible contexts no longer add into a single description, and correlations stronger than any plan allows become possible. The conclusion that tempted us---a faster-than-light influence between the two regions---no longer follows. From their past Alice and Bob carry no list of answers fixed in advance. They carry a web of records, tying contexts two at a time without gathering them into one, and that is enough to explain the correlations entirely.

\smallskip

The remainder of the paper translates this story into mathematics.

\section{Background}
\label{sec:background}

This section makes mathematically precise the two assumptions from which Bell's theorem proceeds, the standard ontological models framework and local causality, and states the theorem itself. Both were already at work in the simple story. The framework is where the story's notion of a shared past lives. What Alice and Bob share from their past is carried by a physical system, taken to possess properties of its own, and the framework represents the complete specification of these properties as a point of a measurable space, with all of the properties determinate at once. That specification is the common cause of the friends' correlated answers. This all-at-once determination is what makes the story's universal list available. Local causality is the other half of the story. It rules out influence between the two distant regions and asks the shared past to account for the correlation on its own, so that conditioning on it leaves none. (Recall that measurement independence, ruling out retrocausality and superdeterminism, is also built into the ontological models framework.) The section closes with the no-go theorem, the limit the two assumptions jointly impose on the correlations and its violation by quantum theory.

\subsection{Standard ontological models framework}
\label{sec:standard_OMF}

The programme of explaining the quantum statistics through underlying variables is as old as quantum theory itself. Its most influential argument is due to Einstein, Podolsky and Rosen, who concluded, from the correlations of an entangled pair, that the quantum-state description of physical systems is incomplete, i.e., that there are elements of reality it fails to capture~\cite{EPR}. Von Neumann had offered a proof that no completion by ``hidden variables'' is possible~\cite{VonNeumann}, but an assumption in it that hidden variable models have no reason to satisfy was identified by Hermann in 1935~\cite{CrullBacciagaluppi2016} and, independently, by Bell~\cite{Bell1966}. Bohm settled the question of existence by exhibiting a hidden-variable model, anticipated by de Broglie, reproducing the predictions of nonrelativistic quantum mechanics~\cite{Bohm1952,Bohm1952b}; the model, however, is manifestly nonlocal. Bell's theorem~\cite{Bell_1964,bell1976} established that this nonlocality is no artefact of Bohm's construction: no hidden-variable model satisfying local causality reproduces the quantum statistics.
The modern formulation of hidden-variable models, which we adopt, is the \emph{ontological models framework}~\cite{Spekkens2005,Harrigan2010}. The mathematical object is unchanged; the new name is more apt, for the underlying states of a theory need not be hidden. In classical Hamiltonian mechanics, for example, they are the points of phase space, in principle perfectly accessible.

What notion of reality does the framework encode? An ontological model attempts to explain the success of a physical theory by positing that the experiment is performed on physical systems presumed to possess properties of their own, regardless of whether any experiment probes them and regardless of what anyone knows about them; the \emph{ontic state} of a system is a full specification of its properties. Two features of this notion of reality deserve emphasis, because they are precisely the ones the present work modifies. First, it is not a relational notion: the elements of reality are individual units, each individuated on its own, independently of any other and of any whole they may belong to. Relations among them, where present, are external to what they are. Second, and relatedly, the framework posits determinate facts without requiring that they be registered anywhere. Not only, as already said, independently of measurement and of anyone's knowledge, but without anything in the universe standing in any relation to them---a system possesses its properties even if no record of them exists.

The ontological models framework provides a mathematical representation of this notion of reality. The ontic state of a system is represented by a \emph{point} $\lambda$; the collection of all possible ontic states is a \emph{set} $\Lambda$, the \emph{ontic state space}. A property holds or fails at each individual ontic state, and is therefore represented by a measurable part of the state space, the \emph{subset} of $\Lambda$ collecting exactly the ontic states at which it holds. Such a part is fixed \emph{extensionally}, i.e., by nothing but which points belong to it. The logic of properties is accordingly the Boolean algebra of measurable subsets:\footnote{Here and throughout, \emph{Boolean}, \emph{intuitionistic} and \emph{paraconsistent} describe the algebra of parts of the ontic state space, read as a logic of propositions about the ontic state (App.~\ref{Appendix:Paraconsistency}).} each point settles every property, and all of them at once---$\lambda\in U$ or $\lambda\in\Lambda\setminus U$, exactly one of the two. This is the sense in which an ontic state is a complete specification of the properties of a system: it assigns a truth value to every one of them. Uncertainty about which ontic state the system is in is represented by a probability measure on $\Lambda$, obeying the Kolmogorov axioms; its finite additivity is equivalently expressed by the \emph{modular law}
\begin{equation}
\label{eq:modularity}
p(U\cup V)+p(U\cap V)\;=\;p(U)+p(V)
\end{equation}
for all measurable parts $U,V$, of which additivity on disjoint parts is the special case $U\cap V=\varnothing$. We display the law because it will be central to what follows. Point-set structure, a Boolean algebra of parts, Kolmogorov probability: this is the mathematics grounding an ontological model.

The framework is theory-independent: it presupposes no particular physical theory, but applies to any theory admitting a formulation in operational terms. Specifically, the operational description of an experimental scenario\footnote{For simplicity, we here consider prepare-and-measure scenarios (transformations can also be included, but they are not needed for our purposes, and we omit them).} consists of a set $\mathcal{P}$ of preparations, a set $\mathcal{M}$ of measurements with outcomes $k$, and the probabilities $\wp(k|P,M)$ of obtaining the outcome $k$ when the measurement $M\in\mathcal{M}$ is performed on a system prepared according to $P\in\mathcal{P}$;\footnote{Similarly to~\cite{bong2020strong}, we maintain a notational distinction between empirical probability distributions, which can be obtained by a single agent from the collected data and are denoted by $\wp$, and theoretical probability distributions, which are required to exist given certain assumptions but are potentially not observable by any agent, and are denoted by $p$ (or by $\mu$, $\xi$ for the distinguished roles introduced below).} a specification of this kind is what is standardly called an \emph{operational theory}.

\begin{definition}[Ontological model]
\label{def:ontological_model}
An \emph{ontological model} of a physical theory, specified operationally as above, consists of:
\begin{enumerate}[label=(\roman*)]
\item a measurable space $\Lambda$, the \emph{ontic state space}, whose elements $\lambda\in\Lambda$, the \emph{ontic states}, each represent a complete specification of the physical properties of the system;
\item for every preparation $P\in\mathcal{P}$, a probability distribution $\mu(\lambda|P)$ on $\Lambda$, the \emph{epistemic state} representing $P$;
\item for every measurement $M\in\mathcal{M}$ and outcome $k$, a conditional probability distribution $\xi(k|\lambda,M)$, the \emph{response function} representing $M$: the probability of obtaining $k$ when $M$ is performed on a system in the ontic state $\lambda$, whatever the preparation that produced it;
\end{enumerate}
such that the empirical predictions are reproduced via the classical law of total probability,
\begin{equation}
\label{eq:opdata}
\wp(k|P,M)\;=\;\sum_{\lambda\in\Lambda}\xi(k|\lambda,M)\,\mu(\lambda|P).
\end{equation}
\end{definition}
Note also that the epistemic state depends on the preparation alone: writing $\mu(\lambda|P)$ rather than $\mu(\lambda|P,M)$ already encodes \emph{measurement independence}, the absence of superdeterministic and retrocausal mechanisms, as part of the framework rather than as a separate assumption~\cite{CataniLeifer2023}.

The epistemic states are so called because they represent states of knowledge about which ontic state the system is in, in contrast with the ontic states themselves. The ontic state spaces employed in this work are finite, so we write sums; in general, Eq.~\eqref{eq:opdata} is an integral over $\Lambda$. We also use, as is customary in the Bell literature, the generic letter $p$, with the arguments making the role clear: $p(\lambda)$ for the epistemic state of a fixed preparation, $p(a|x,\lambda)$ for a response function.

\subsubsection*{Ontological models of the Bell scenario}
\label{sec:standard_bell}

In the Bell scenario, two spacelike-separated parties, Alice and Bob, each hold one part of a bipartite system prepared, in their common past, according to a preparation $P$, and each operates on their own side, or \emph{wing}, of the experiment. Alice chooses a setting $x\in\{0,1\}$ and performs the corresponding measurement $A_x$, obtaining an outcome $a\in\{0,1\}$; Bob chooses $y\in\{0,1\}$ and performs $B_y$, obtaining $b\in\{0,1\}$. Repeating the protocol yields the empirical statistics $\wp(a,b|x,y)$. These statistics are \emph{no-signalling} when each party's marginal does not depend on the other party's setting,
\begin{equation}
\label{eq:nosignalling}
\textstyle\sum_b \wp(a,b|x,y)=\wp(a|x),\qquad
\textstyle\sum_a \wp(a,b|x,y)=\wp(b|y),
\end{equation}
as they are in quantum theory, and as any statistics compatible with relativistic causality at the operational level must be.

As in the simple story, each round realizes a \emph{context}, the set of measurements jointly performed in that round (here one of Alice's and one of Bob's), so the contexts are the four sets $\{A_x,B_y\}$, which we label by the pair of settings $(x,y)$. Two measurements are \emph{incompatible} when they cannot be jointly performed: there is no single measurement whose outcome statistics return those of both, by marginalization, for every preparation. Two contexts are \emph{incompatible} when they cannot be jointly realized in a single round, which is the case exactly when, among the measurements by which the two contexts differ, some pair is incompatible (measurements on different wings are always jointly performable, one on each side). Throughout we consider the Bell scenarios in which Bell inequalities can be violated. There each party's two measurements are incompatible, so the four contexts are pairwise incompatible. They are not, for that, unrelated: two contexts differing in a single setting, which we call \emph{neighbouring}, share one measurement; for instance, $\{A_0,B_0\}$ and $\{A_0,B_1\}$ share Alice's $A_0$ and differ by the incompatible pair $B_0,B_1$, which is why they can never occur in the same round. The two \emph{diagonal} pairs of contexts, differing in both settings ($\{A_0,B_0\}$ with $\{A_1,B_1\}$, and $\{A_0,B_1\}$ with $\{A_1,B_0\}$), share no measurement. This pattern of sharing is displayed in Fig.~\ref{fig:standardOM_Bell}.


What must an ontological model of the physical theory associated with the Bell scenario (hereafter, simply, ontological model of the Bell scenario) supply? Exactly the data of Definition~\ref{def:ontological_model}: an ontic state space for the bipartite system, an epistemic state representing the preparation, and response functions for the measurements of the two wings, reproducing $\wp(a,b|x,y)$ through Eq.~\eqref{eq:opdata}. In the present scenario, the measurement independence already built into Definition~\ref{def:ontological_model} reads as independence from the choice of settings,
\begin{equation}
\label{eq:measurement_independence}
\mu(\lambda|P,x,y)\;=\;\mu(\lambda|P)\;\equiv\;p(\lambda).
\end{equation}
The analysis of locally causal models can be conducted, without loss of generality, on a single canonical model, and the precise sense in which nothing is lost by this choice is Fine's theorem, recalled in the next subsection. Its ontic state space collects the sixteen deterministic assignments of outcomes to the four measurements,
\begin{equation}
\label{eq:standard_Lambda}
\Lambda:=\{0,1\}^{4},\qquad \lambda=(a_0,a_1,b_0,b_1),
\end{equation}
where $a_x$ is the outcome assigned to Alice's measurement $A_x$ and $b_y$ the outcome assigned to Bob's $B_y$. The response function of the context $(x,y)$ is outcome-deterministic (valued in $\{0,1\}$) and simply reads the assignment off:
\begin{equation}
\label{eq:standard_responses}
\xi(a,b|\lambda,x,y)\;=\;\delta_{a,a_x}\,\delta_{b,b_y}.
\end{equation}
An ontic state of this model is, then, exactly what the simple story called a \emph{plan}: a universal list assigning outcomes to all four measurements at once, including pairs of them no single round can pose together.
An epistemic state $p(\lambda)$ is a mixture of plans. By Eqs.~\eqref{eq:opdata} and~\eqref{eq:standard_responses}, the empirical statistics are reproduced as
\begin{equation}
\begin{split}
\label{eq:standard_reproduction}
\wp(a,b|x,y)\;&=\;\sum_{\lambda\in\Lambda}p(\lambda)\,\delta_{a,a_x}\,\delta_{b,b_y}
\\ &=\;\sum_{\lambda\in\Lambda^{xy}_{ab}}p(\lambda)\;=:\;p\big(\Lambda^{xy}_{ab}\big),
\end{split}
\end{equation}
where
\begin{equation}
\label{eq:sectors}
\Lambda^{xy}_{ab}:=\{\lambda\in\Lambda:\ a_x=a,\ b_y=b\}
\end{equation}
is the \emph{outcome sector}, or simply \emph{sector}, of the outcome pair $(a,b)$ in the context $(x,y)$. It is the set of plans that answer $(a,b)$ when the context $(x,y)$ is realized. Each sector contains exactly four of the sixteen points and, for fixed $(x,y)$, the four sectors partition $\Lambda$: the sectors of a context are disjoint and their weights add, an instance of the modular law~\eqref{eq:modularity}.

Across contexts the picture is different, as sectors of neighbouring contexts \emph{overlap}. For the same outcome pair,
\begin{equation}
\label{eq:overlap}
\begin{gathered}
\Lambda^{00}_{00}=\{(0,a_1,0,b_1):a_1,b_1\in\{0,1\}\},\\
\Lambda^{01}_{00}=\{(0,a_1,b_0,0):a_1,b_0\in\{0,1\}\},\\
\Lambda^{00}_{00}\cap\Lambda^{01}_{00}=\{(0,a_1,0,0):a_1\in\{0,1\}\}\neq\varnothing.
\end{gathered}
\end{equation}
More generally, the region where Alice's $A_x$ yields the outcome $a$, namely $\{\lambda:a_x=a\}$, is one and the same subset of $\Lambda$ regardless of which measurement Bob performs. In the standard ontological model, the sameness of a measurement across the two contexts it belongs to is thus set-theoretic identity of regions, carried by shared points. Nothing in the model has to arrange it. It is automatic: a point of $\{0,1\}^4$ is, as a single element of reality, a global assignment, i.e., a \emph{global valuation} fixing outcomes for all four measurements; so a point shared by two sectors is not merely shared by those two, but it lies in one sector of every context, and thereby answers all four measurements at once. This is the mathematical form of the observation made in the simple story: in a plan, the sameness of $A_0$ across its two contexts is purchased together with something much stronger---a single element of reality belonging to all four contexts at once.
Figure~\ref{fig:standardOM_Bell} shows the sectors, their overlaps, and the shared points that carry them.

Finally, the response functions~\eqref{eq:standard_responses} already exhibit the key feature: the joint response factorizes into a factor for each wing, each depending only on the local setting---Alice's on $x$ alone, Bob's on $y$ alone. This factorization, imposed on ontological models of the scenario, is the content of local causality, which we now formalize.

\begin{figure*}[t]
    \centering
    \includegraphics[width=\textwidth]{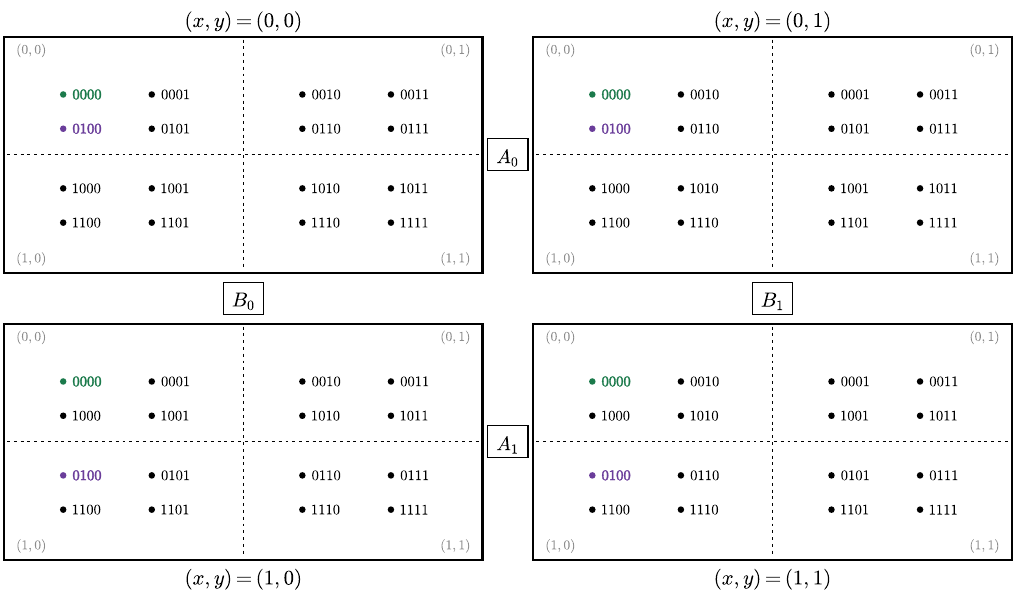}
    \caption{\textbf{The standard ontological model of the Bell scenario.} The four measurement contexts $(x,y)$ are arranged in a $2\times 2$ grid following the choice of settings by Alice (rows) and Bob (columns). Neighbouring contexts share the measurement labelling the edge between them; diagonal contexts share none. Within each context, the four outcome sectors are arranged in a further $2\times 2$ sub-grid indexed by the outcome pair $(a,b)$, which labels its outer corner in grey; the sector labelled $(a,b)$ in the context $(x,y)$ is $\Lambda^{xy}_{ab}$ of Eq.~\eqref{eq:sectors}. The ontic states are the sixteen points $\lambda=(a_0,a_1,b_0,b_1)$ of $\Lambda=\{0,1\}^4$, the plans of the simple story, each fixing outcomes for all four measurements at once. Within a context the four sectors partition $\Lambda$, so every point appears once in every context and four times in the figure, the four copies of one string denoting one and the same point. Across neighbouring contexts the sectors overlap, as the two points of $\Lambda^{00}_{00}\cap\Lambda^{01}_{00}$ of Eq.~\eqref{eq:overlap} show, highlighted in green ($0000$) and purple ($0100$). The first occupies the same sector in all four contexts and the second does not, since which sector a point falls in depends on the context. Cross-contextual sameness of a measurement is thus carried by shared points, and every shared point is a global assignment. Figure~\ref{fig:localicOM_Bell} is the localic counterpart of this picture.}
    \label{fig:standardOM_Bell}
\end{figure*}

\subsection{Bell's local causality}
\label{sec:local_causality}

In the simple story, locality entered as a statement about causes. Once the friends separate, nothing that happens in one region influences the other, and whatever correlates their answers must already be contained in what they share. Bell's local causality is the formalization of this statement within the ontological models framework. It combines relativistic causality, which places the causes of an event in its past light cone, with Reichenbach's common cause principle~\cite{Reichenbach1956}, which traces every correlation to a direct causal link or to a common cause, conditioning on which the correlation disappears. In the Bell scenario the direct link is excluded by the spacelike separation of the wings, and the common cause is the ontic state $\lambda$ of the shared system (Fig.~\ref{fig:dags}, left). By measurement independence, Eq.~\eqref{eq:measurement_independence}, $\lambda$ cannot involve the settings. Local causality then demands that $\lambda$ does the whole explanatory job: once $\lambda$ is completely specified, no correlation between the wings remains.
\begin{definition}[Local causality]
\label{def:local_causality}
An ontological model of the Bell scenario is \emph{locally causal} if
\begin{equation}
\label{eq:factorizability}
p(a,b|x,y,\lambda)\;=\;p(a|x,\lambda)\,p(b|y,\lambda)
\qquad\text{for all } a,b,x,y,\lambda.
\end{equation}
\end{definition}
Summing Eq.~\eqref{eq:factorizability} over one party's outcome shows that, given the ontic state, each party's outcome distribution depends on their own setting alone,
\begin{equation}
\label{eq:parameter_independence}
p(a|x,y,\lambda)=p(a|x,\lambda),
\qquad
p(b|x,y,\lambda)=p(b|y,\lambda),
\end{equation}
a condition known as \emph{parameter independence}~\cite{Jarrett1984,Shimony1986}.

Definition~\ref{def:local_causality} reflects Bell's mature formulation of the condition, phrased by him in terms of \emph{local beables}---the elements a theory posits as existing, as opposed to what is observed, each attached to a spacetime region. Once the beables in the backward light cone of an event are specified, its probability is unaltered by anything spacelike separated from it~\cite{bell1976,BellUnspeakable}. His 1964 paper~\cite{Bell_1964} instead proceeded via determinism together with locality; the relation between the two formulations is analysed in~\cite{Wiseman2017}. Inserting Eq.~\eqref{eq:factorizability} into the reproduction formula~\eqref{eq:opdata}, the correlations of a locally causal model take the form
\begin{equation}
\label{eq:locality}
    \wp(a,b|x,y) = \sum_\lambda p(\lambda)\, p(a|x,\lambda)\,p(b|y,\lambda);
\end{equation}
conversely, correlations are consistent with local causality precisely when they admit such a decomposition.\footnote{Notice that no-signalling, Eq.~\eqref{eq:nosignalling}, is an automatic consequence of Eq.~\eqref{eq:locality}: summing over $b$ leaves $\sum_\lambda p(\lambda)\,p(a|x,\lambda)$, in which $y$ does not appear.}  This is the story's claim in formulas: given the shared past, Alice's answer depends on her question alone, and likewise for Bob.

Which correlations admit such a decomposition is settled by Fine's theorem.
\begin{theorem}[Fine~\cite{Fine,Fine1982}]
\label{thm:fine}
For the correlations $\wp(a,b|x,y)$ of the Bell scenario, the following are equivalent:
\begin{enumerate}[label=(\roman*)]
\item $\wp(a,b|x,y)$ admits of a locally causal ontological model;
\item there exists a joint probability distribution $p(a_0,a_1,b_0,b_1)$ over the outcomes of all four measurements, returning the correlations as marginals;
\item $\wp(a,b|x,y)$ admits of a locally causal ontological model on the canonical state space~\eqref{eq:standard_Lambda}, with the deterministic response functions~\eqref{eq:standard_responses}.\footnote{Only the responses are deterministic, not the epistemic state. Fine's construction absorbs the randomness of the responses into the mixture over $\lambda$, which must therefore remain free to carry it.}
\end{enumerate}
\end{theorem}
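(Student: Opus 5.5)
I will prove the cycle of implications (iii)$\Rightarrow$(i)$\Rightarrow$(ii)$\Rightarrow$(iii).

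The first implication is immediate. The canonical model with the deterministic responses~\eqref{eq:standard_responses} is an ontological model of the Bell scenario. Its responses factorize as $\delta_{a,a_x}\delta_{b,b_y}$, so it satisfies Definition~\ref{def:local_causality}.

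For (i)$\Rightarrow$(ii), I start from the decomposition~\eqref{eq:locality} supplied by a locally causal model. I then define the candidate joint distribution by letting each local response act independently on every setting at once:
\begin{equation*}
p(a_0,a_1,b_0,b_1):=\sum_\lambda p(\lambda)\,p(a_0|0,\lambda)\,p(a_1|1,\lambda)\,p(b_0|0,\lambda)\,p(b_1|1,\lambda).
\end{equation*}
This is non-negative, and it is normalized because each response function sums to one over its outcome. Marginalizing over $a_{1-x}$ and $b_{1-y}$ removes the two unused factors, again by normalization of the responses. What remains is exactly $\sum_\lambda p(\lambda)p(a|x,\lambda)p(b|y,\lambda)=\wp(a,b|x,y)$. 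For a general measurable $\Lambda$ the sum becomes an integral, and the only care needed is measurability of $\lambda\mapsto p(a|x,\lambda)$, which is part of what a response function is.

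For (ii)$\Rightarrow$(iii), I take the given joint distribution itself as the epistemic state on $\Lambda=\{0,1\}^4$, with the deterministic responses~\eqref{eq:standard_responses}. Then Eq.~\eqref{eq:standard_reproduction} gives $\sum_{\lambda\in\Lambda^{xy}_{ab}}p(\lambda)$. This is precisely the $(x,y)$-marginal of $p$ evaluated at $(a,b)$, and by hypothesis that equals $\wp(a,b|x,y)$. Measurement independence holds because $p(\lambda)$ carries no dependence on $x,y$, and factorizability holds as already noted.

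The only substantive step is (i)$\Rightarrow$(ii): it converts indeterministic local responses into a single global valuation by taking the product over all settings. This is the point where counterfactual measurements are assigned values, so I would state it carefully. The key observation is that the product construction is legitimate only because each factor depends on its own setting alone, which is exactly what factorizability guarantees. That requirement is also the reason the randomness of the responses ends up absorbed into the mixture over $\lambda$, as the footnote to the theorem indicates.
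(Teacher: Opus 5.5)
Your proof is correct and takes essentially the paper's route. The paper treats (ii)$\Leftrightarrow$(iii) as a change of notation, and obtains the substantive step (i)$\Rightarrow$(ii) exactly as you do: it multiplies the four local response values at each $\lambda$ and averages over $p(\lambda)$, as described in App.~\ref{Appendix:Fine}.
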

The equivalence of (ii) and (iii) is a change of notation, since a joint distribution \emph{is} an epistemic state on the canonical state space, $p(\lambda)=p(a_0,a_1,b_0,b_1)$, and marginalizing over the unperformed measurements is summing over a sector~\cite{Fine,abramsky2011sheaf}, for example,
$\wp(a,b|x{=}0,y{=}0)=\sum_{a_1,b_1}p(a,a_1,b,b_1)=p\big(\Lambda^{00}_{ab}\big)$,
which is Eq.~\eqref{eq:standard_reproduction}. The substantive implication is (i)$\Rightarrow$(ii): allowing the responses to be stochastic does not enlarge the set of correlations a locally causal model can reproduce. This is the precise sense in which the canonical model of Sec.~\ref{sec:standard_OMF} involves no loss of generality. Within the standard framework, a locally causal explanation of the correlations can always be taken to be a mixture of plans. Whether the quantum correlations admit of such an explanation is the question Bell answered in the negative, to which we now turn.

\subsection{Bell's no-go theorem}
\label{sec:bell_nogo}

The limit the two assumptions jointly impose on the correlations is a bound on the Clauser--Horne--Shimony--Holt (CHSH) combination~\cite{CHSH69} of the correlators,
\begin{equation}
\begin{split}
\label{eq:CHSH_S}
& E_{xy}:=\sum_{a,b}(-1)^{a+b}\,\wp(a,b|x,y),
\\
& S:=E_{00}+E_{01}+E_{10}-E_{11}.
\end{split}
\end{equation}

\begin{theorem}[Bell--CHSH inequality~\cite{Bell_1964,CHSH69}]
\label{thm:chsh}
If $\wp(a,b|x,y)$ admits of a locally causal ontological model, then $|S|\le 2$.
\end{theorem}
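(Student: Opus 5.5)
The plan is to reduce to the canonical model via Fine's theorem (Theorem~\ref{thm:fine}), so that the general statement collapses to a finite check over the sixteen plans. By the implication (i)$\Rightarrow$(iii), if $\wp(a,b|x,y)$ admits a locally causal ontological model, it also admits one on $\Lambda=\{0,1\}^4$ with the deterministic responses~\eqref{eq:standard_responses}. The correlations are then reproduced as in Eq.~\eqref{eq:standard_reproduction} by some probability distribution $p(\lambda)$ over plans $\lambda=(a_0,a_1,b_0,b_1)$.

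In that model each correlator becomes an average:
\begin{equation*}
E_{xy}=\sum_{\lambda}p(\lambda)\,(-1)^{a_x+b_y}.
\end{equation*}
Write $\alpha_x:=(-1)^{a_x}$ and $\beta_y:=(-1)^{b_y}$, both in $\{\pm1\}$. Then
\begin{equation*}
S=\sum_\lambda p(\lambda)\,S(\lambda),
\end{equation*}
where
\begin{equation*}
S(\lambda):=\alpha_0\beta_0+\alpha_0\beta_1+\alpha_1\beta_0-\alpha_1\beta_1=\alpha_0(\beta_0+\beta_1)+\alpha_1(\beta_0-\beta_1).
\end{equation*}

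The key pointwise step is the bound $|S(\lambda)|\le 2$ for every plan. Since $\beta_0,\beta_1\in\{\pm1\}$, exactly one of $\beta_0+\beta_1$ and $\beta_0-\beta_1$ vanishes, and the other equals $\pm2$. Hence $S(\lambda)=\pm2$ for every $\lambda$. This is the formal counterpart of the ring argument in the simple story: a single global valuation cannot make three contexts agree and the fourth disagree.

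Convexity then finishes the argument. Because $p$ is a probability distribution (non-negative and normalized), the triangle inequality gives
\begin{equation*}
|S|\le\sum_\lambda p(\lambda)\,|S(\lambda)|\le 2.
\end{equation*}

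There is no real obstacle once Fine's theorem is granted. The only substantive ingredient is the reduction from stochastic to deterministic responses, and that is exactly what Theorem~\ref{thm:fine} provides.

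If one preferred to avoid invoking Fine's theorem, the argument can be run directly on Eq.~\eqref{eq:locality}. Set $A_x(\lambda):=\sum_a(-1)^a p(a|x,\lambda)$ and $B_y(\lambda):=\sum_b(-1)^b p(b|y,\lambda)$, both in $[-1,1]$. Factorizability~\eqref{eq:factorizability} gives $E_{xy}=\sum_\lambda p(\lambda)A_x(\lambda)B_y(\lambda)$. The pointwise bound then follows from
\begin{equation*}
|A_0(B_0+B_1)+A_1(B_0-B_1)|\le|B_0+B_1|+|B_0-B_1|=2\max(|B_0|,|B_1|)\le2.
\end{equation*}
This route makes clear that factorizability, together with the existence of a single $\lambda$ answering all four measurements, is what is used.
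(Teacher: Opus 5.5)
Your proposal is correct, and your second route is exactly the paper's proof. The paper sets $A_x(\lambda),B_y(\lambda)\in[-1,1]$ directly from Eq.~\eqref{eq:locality} and bounds $|S(\lambda)|$ via $|B_0+B_1|+|B_0-B_1|=2\max(|B_0|,|B_1|)\le 2$, with no reduction to deterministic responses.

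Your primary route is genuinely different. It first passes through Theorem~\ref{thm:fine}, (i)$\Rightarrow$(iii), to the sixteen deterministic plans, where the pointwise step becomes the parity observation $S(\lambda)=\pm2$. The paper makes this observation too, but only as a remark after the proof. The route is legitimate and not circular. The direction of Fine's theorem you use is the product construction $p(a_0,a_1,b_0,b_1)=\sum_\lambda p(\lambda)\,p(a_0|0,\lambda)\,p(a_1|1,\lambda)\,p(b_0|0,\lambda)\,p(b_1|1,\lambda)$, which makes no use of the inequality.

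What your primary route buys is a trivial pointwise step, and it makes visible that every plan already sits at the extreme value. What it costs is that the absorption of stochastic responses is outsourced to a cited theorem. The paper's direct argument is self-contained and shows explicitly that determinism plays no role. It leaves as the only ingredients factorizability, one $\lambda$ bearing all four responses $A_0,A_1,B_0,B_1$, and one normalized weight $p$. That is precisely the diagnosis the paper later exploits when it replaces points by records.
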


\begin{proof}
By Eq.~\eqref{eq:locality}, set $A_x(\lambda):=\sum_a(-1)^a\,p(a|x,\lambda)$ and $B_y(\lambda):=\sum_b(-1)^b\,p(b|y,\lambda)$, both in $[-1,1]$ as averages of $\pm1$. Then $E_{xy}=\sum_\lambda p(\lambda)\,A_x(\lambda)B_y(\lambda)$, so that $S=\sum_\lambda p(\lambda)\,S(\lambda)$ with, suppressing the argument $\lambda$,
\begin{equation*}
\begin{split}
S(\lambda)&=A_0B_0+A_0B_1+A_1B_0-A_1B_1 \\ &=A_0(B_0+B_1)+A_1(B_0-B_1).
\end{split}
\end{equation*}
Since $|A_0|,|A_1|\le1$ and $|B_0+B_1|+|B_0-B_1|=2\max(|B_0|,|B_1|)\le2$, we have $|S(\lambda)|\le2$ for every $\lambda$, and $|S|\le2$ follows because $p$ is a probability distribution.
\end{proof}

Notice that no determinism was assumed in reaching the bound, given that the responses $A_x,B_y$ may take any value in $[-1,1]$. In the canonical model of Sec.~\ref{sec:standard_OMF} they are $A_x(\lambda)=(-1)^{a_x}$ and $B_y(\lambda)=(-1)^{b_y}$, so $S(\lambda)=\pm2$ exactly: each plan scores the extreme value, and mixing plans cannot do better. What the derivation requires, in either form, is that one and the same $\lambda$ supply all four responses $A_0,A_1,B_0,B_1$ at once.

The CHSH expression can equivalently be read as the success probability of a game in which the parties win a round when $a\oplus b=xy$ ($\oplus$ denoting addition mod~$2$). With uniformly random settings,
\begin{equation}
\label{eq:pwin}
p_{\mathrm{win}}:=\frac14\sum_{x,y}\,\sum_{\substack{a,b:\\ a\oplus b=xy}}\wp(a,b|x,y)
\;=\;\frac12+\frac{S}{8},
\end{equation}
so Theorem~\ref{thm:chsh} reads $p_{\mathrm{win}}\le 3/4$. The parity identity $\bigoplus_{x,y}(a_x\oplus b_y)=0\neq 1=\bigoplus_{x,y}\,xy$ shows directly that no plan wins all four contexts.

Quantum theory violates the bound. Two qubits in a maximally entangled state, with suitable choices of the four measurements, yield $S=2\sqrt{2}$ (the maximal quantum value~\cite{Cirelson1980}), that is, $p_{\mathrm{win}}=\cos^2(\pi/8)\approx0.854$, and the violation has been confirmed experimentally~\cite{Hensen,Shalm2015,Giustina2015}. Quantum theory is therefore said to manifest \emph{nonlocality}: it does not admit of a locally causal ontological model~\cite{Bell_1964}. By Theorem~\ref{thm:fine}, this is equivalently the statement that no joint probability distribution over the outcomes of the four measurements returns the quantum correlations as its marginals.

The bound of Theorem~\ref{thm:chsh} is not the largest value $S$ can take. Since $|E_{xy}|\le1$, the algebraic maximum is $|S|=4$, and it is attained without signalling. The Popescu--Rohrlich box~\cite{Popescu1994}, $\wp(a,b|x,y)=\tfrac12$ when $a\oplus b=xy$ and $0$ otherwise, has uniform marginals on both wings and wins the game in every context, $p_{\mathrm{win}}=1$. Quantum theory therefore sits strictly between the local bound $2$ and the algebraic maximum $4$.

Let us record where each assumption did its work. Local causality itself entered only once, as the factorization~\eqref{eq:factorizability}. Everything else was supplied by the framework: that one and the same ontic state bears the four responses $A_0,A_1,B_0,B_1$---in the canonical model, a plan; and that the four contexts are averaged with one and the same weight $p(\lambda)$, additive in the sense of Eq.~\eqref{eq:modularity}. The theorem therefore bounds the conjunction of local causality with the standard ontological models framework. The usual response to the violation keeps the framework and gives up local causality; we keep local causality and measurement independence, and modify the framework instead.

\section{The localic ontological models framework}
\label{sec:localic_framework}

Section~\ref{sec:standard_OMF} singled out two features of the notion of reality encoded in the standard ontological models framework as the ones this work modifies. The first is that its elements of reality are individuated each on its own, independently of any whole they belong to. The second is that they carry determinate facts without those facts being registered anywhere. We replace both, and with them the mathematics that carried them, namely point-set structure, a Boolean algebra of parts and Kolmogorov probability. The elements of reality become the records of the simple story, represented mathematically by the parts of a locale.
In the same order as Sec.~\ref{sec:standard_OMF}, Sec.~\ref{sec:records} motivates and sets out the notion of reality, Sec.~\ref{sec:pointfree} introduces the mathematical structure that represents it, and Sec.~\ref{sec:localic_OMF} defines a localic ontological model of an operational theory. Two independent consequences then follow: a record shared by two incompatible contexts contains no points (Sec.~\ref{sec:pointlessness}), and the weights of incompatible contexts admit no global account that is both modular and normalized unless the contexts share the whole of their weight (Sec.~\ref{sec:weight}). The Bell scenario comes only in Sec.~\ref{sec:localic_bell}, and nothing before it is special to Bell. Like the framework it generalizes, ours applies to any operational theory, and its axioms and theorems are stated for an arbitrary scenario. Desiderata~(i) and~(ii) of the introduction, a precise notion of reality carried by a rigorous mathematical structure and theory-independence, are therefore met here. Desideratum~(iii), reproducing the quantum statistics without fine-tuning, is the business of Sec.~\ref{sec:go}.

\subsection{A reality made of records}
\label{sec:records}

Is there any difference, in the world, between a property that is nowhere recorded and no property at all? The standard framework says there is a difference: a system possesses its properties even with nothing in the universe standing in any relation to them. We say there is none: for something to exist is for something to be recorded. We call the elements of reality \emph{records}, and no observer and no agent enters into what it is to be one---registration is a matter of physical interaction alone.

Records register what occurred in a round of an experiment, both the procedures performed on a system and the outcomes they gave, and may register the whole of a round or only a part of it. Treating the procedures as part of what occurred is the step the standard framework does not take. There an ontic state specifies the system alone. A measurement enters only as a further argument of the response functions (Definition~\ref{def:ontological_model}), which are defined at every ontic state for every measurement of the theory, performed or not. What is described is in either case a single round, and the procedures that could have been performed instead are alternative ways for that round to come about, not further rounds. Registering the procedures does not make what exists depend on which of them are chosen. What records there could be and how they stand to one another are fixed once and for all, and what the procedures of a round settle is only which of those records that round realizes.

Records are of two kinds. Some register everything that occurred in a round, procedures and outcomes. No record of this kind is common to two rounds in which different procedures were performed, since it would then register both. Others register only a part of a round, the identity of a procedure across the rounds in which it could have been performed. Such a record registers a given measurement as one and the same measurement, and says nothing of which other measurements the round performed. It is a constituent of the round that occurred, and would equally have been a constituent of a round in which that same measurement was performed together with a different one. Being the very same record in both is what the sameness of a measurement across rounds consists in. Such records matter most where the two rounds realize incompatible contexts, for only they can then be common to both. Unlike a record of the first kind, one of the second does not by itself fix the outcome of the whole context, which is fixed by it together with what the round performed. In the simple story it was records of this second kind that mattered, and we only spoke of those, but the word covers every element of reality, records of the first kind included.

The sameness of a measurement in every context containing it is something any ontology must account for. A measurement appearing in two contexts yields the same statistics in both, and if nothing in the ontology corresponds to that sameness, the agreement is an unexplained coincidence. This is particularly important when the correlations are to be given a common-cause explanation, as they are here. To explain correlations across contexts by a common cause is to explain how one and the same measurement behaves in each of them. If the contexts have nothing in common there is no common cause to speak of, the ontology falls apart into as many unrelated worlds as there are contexts, and the demand that a cause screen the outcomes off constrains nothing across them (Sec.~\ref{sec:localic_local_causality}). The demand for such sameness is not peculiar to our framework. The standard one meets it automatically, since the region where a given measurement yields a given outcome is one and the same subset of $\Lambda$ in every context containing that measurement (Sec.~\ref{sec:standard_bell}).

The sharing between contexts there is carried by the points those contexts have in common, and a reality made of records cannot be shared that way. A point is a complete specification of a round, so a point serving as a record would register which measurements were performed and what they gave. A record of the second kind registers a measurement and not which others the round performed, and leaving this out is precisely what lets it belong to two rounds at once. No point is a record of the second kind, and where the elements of reality are points there are no such records to be had.

So not all elements of reality can be points. What, then? To register only part of what occurred in a round is to stand to that round as a part does to a whole, and what a record is, is fixed by which part it registers. The elements of reality are thus not self-standing units but parts, individuated by how they restrict a whole, and the relations by which an element is individuated are internal to what it is. Our notion of reality is relational in exactly this sense. Of the two features of the standard framework this work modifies (Sec.~\ref{sec:standard_OMF}), one was that its elements carry facts registered nowhere; relationality is the other, and it costs no further assumption, following from the requirement that whatever is real be recorded. It also suggests how the sameness of a measurement might survive: two contexts could share a part without sharing any point of it. Whether a mathematics of parts is available that delivers this, parts that are not empty and yet contain no point, is what the next subsection~answers.

\subsection{From points to parts}
\label{sec:pointfree}

A reality made of records asks for a mathematics of parts rather than points. In the framework of Sec.~\ref{sec:standard_OMF}, the ontic state space is a set of points, and its parts, also made of points, are measurable subsets. Topology is standardly built in the same way, from a set of points of which some subsets are declared \emph{open}, but it can also be built directly from the opens themselves. The usual motivation for doing so, different from ours, is operational. No experiment pins down a point, since an apparatus has a finite resolution and what it detects is a region, a point being only the idealization of a region shrunk below every resolution~\cite{Vickers1989}.

The opens of a topological space are ordered by inclusion, and in that order the union of any family of them is its least upper bound and the intersection of a finite family is its greatest lower bound. Both bounds are defined by the order alone, so they survive when the points are forgotten, and they are then called \emph{join} ($\vee$) and \emph{meet}~($\wedge$). An ordered collection in which every family has a join, and hence also a meet, is a \emph{complete lattice}; the join of the empty family is its least element $0$, which here is the empty set. A complete lattice is a \emph{frame} when finite meets distribute over arbitrary joins,
\begin{equation}
\label{eq:frame}
U\wedge\bigvee_{i\in I}V_{i}\;=\;\bigvee_{i\in I}\,(U\wedge V_{i})
\end{equation}
for every element $U$ and every family $\{V_{i}\}_{i\in I}$, finite or not, a law the opens of a space obey because intersection distributes over union. A frame read as a space whose points are not presupposed is a \emph{locale}~$L$. Frame and locale are one and the same object, and we write $\mathcal{O}(L)$ when the locale $L$ is regarded as a frame, a lattice of elements we keep calling opens. The theory of locales is \emph{point-free topology}~\cite{PicadoPultr2012,Vickers1989,Johnstone1983}.

Point-free topology generalizes point-set topology, the familiar theory of topological spaces and their subspaces, without discarding it. The opens of any topological space form a frame, so every topological space defines a locale, while not every locale arises in this way. Points survive too, as a derived notion modelled on the point-set case. In a topological space, a point $x$ decides every open $U$ by lying in it or not. It lies in a finite intersection exactly when it lies in each of the opens intersected, and in a union exactly when it lies in at least one of the opens united, however many they are. A \emph{point} of a locale is any way of deciding its opens with these two properties,\footnote{Equivalently, a point is a frame homomorphism $\mathcal{O}(L)\to\{0,1\}$, a map into the frame of truth values preserving finite meets and arbitrary joins, which gives for every open $U$ the yes-or-no answer to ``$x\in U$'' of a would-be point $x$.} and $\mathrm{pt}(L)$ denotes the set of points of~$L$. When any two distinct opens are told apart by some point, lying in one and not in the other, the locale is \emph{spatial}, and this happens exactly when its frame is the frame of opens of a topological space. At the other extreme, a locale can have no points at all without being empty (the empty locale is the one whose frame has $0$ as its only element).

The opens are not yet the parts we need, since even on a space they are only some of its subsets. A part of a locale cannot be a set of points, as there may be none to collect. It is given instead by a \emph{quotient} of the frame of the locale, obtained by declaring certain opens of the whole to be identified, compatibly with arbitrary joins and finite meets, and the quotient frame is the frame of the part. The opens encode the distinctions present in the whole, and restricting attention to a part amounts to giving up some of them. Opens that differ only in distinctions given up become one and the same open of the part. Naming what can no longer be told apart is thus naming the part. If nothing is identified the part is the whole, and if everything is, it is the empty part $0$. These quotients of $\mathcal{O}(L)$ are the parts of $L$, its \emph{sublocales},\footnote{A part is a locale $L'$ embedded in $L$, and its frame is a quotient of $\mathcal{O}(L)$ because maps of locales run opposite to maps of their frames. A continuous map $X\to Y$ sends each open of $Y$ back to its preimage, an open of $X$, so a map of locales $L'\to L$ is by definition a frame homomorphism $\mathcal{O}(L)\to\mathcal{O}(L')$, and it embeds $L'$ in $L$ exactly when this homomorphism is surjective, which makes $\mathcal{O}(L')$ a quotient of $\mathcal{O}(L)$. The identification of the main text is a \emph{frame congruence} on $\mathcal{O}(L)$. Equivalently, the part is given by a \emph{nucleus} $j\colon\mathcal{O}(L)\to\mathcal{O}(L)$ (inflationary, idempotent and preserving finite meets), sending each open to the largest open identified with it, whose fixed points are the opens of the part~\cite[Ch.~III]{PicadoPultr2012}.} and ordered by the relation of being a part they form a complete lattice $\mathsf{S}(L)$.

Sublocales thus meet the demand of Sec.~\ref{sec:records} that a part be individuated by how it restricts the whole. A subset of a space can be given by listing its points, with no mention of the whole it belongs to, and it is determined by them, which is what made the mereology of Sec.~\ref{sec:standard_OMF} extensional. A sublocale, by contrast, is a mode of restriction of the whole and cannot be given without it, so that what a part is and where it stands in the whole are one fact. Nor is a sublocale settled by its points, for two of them may have exactly the same points and still differ. The mereology of a locale is not extensional.

The lattice $\mathsf{S}(L)$ is a \emph{coframe}, in which finite joins distribute over arbitrary meets, and in general it is not a Boolean algebra. Every open $U\in\mathcal{O}(L)$ carries an open part $\mathfrak{o}(U)\in\mathsf{S}(L)$, which identifies two opens $V$ and $W$ exactly when $U\wedge V=U\wedge W$, giving up every distinction outside~$U$, and a closed part~$\mathfrak{c}(U)\in\mathsf{S}(L)$, which identifies them exactly when $U\vee V=U\vee W$, giving up every distinction inside it. The two are disjoint and together cover the whole, so the open and the closed subsets of point-set topology all have their counterparts. A part, however, can be neither open nor closed.

Indeed, a part can even be nonzero and contain no point, as Sec.~\ref{sec:records} asked, and the real line already carries the paradigmatic example. This part of the line identifies two opens exactly when they have the same closure, as with $(0,2)$ and $(0,1)\cup(1,2)$, which differ only at the point $1$. Each open is thereby identified with the interior of its closure, the largest open with the same closure, and the opens equal to the interior of their closure, called \emph{regular}, are the opens of the part. The \emph{negation} $\neg U$ of an open $U$ is the largest open disjoint from it (on a space, the interior of its complement), so the interior of the closure of $U$ is $\neg\neg U$, and we write the part as $\mathbb{R}_{\neg\neg}$. It is \emph{dense}, in that it identifies no nonempty open with the empty one, just as a dense subset meets every nonempty open, and so it is not the empty part. Moreover, by Isbell's density theorem it is the smallest dense part of the line~\cite{Isbell1972}.

Yet $\mathbb{R}_{\neg\neg}$ itself has no points at all. For any real number $x$, the punctured line $\mathbb{R}\setminus\{x\}$ is open and has the same closure as the whole line, so in $\mathbb{R}_{\neg\neg}$ the two are one and the same open. Since $x$ lies in the whole line and not in the punctured one, as a point of $\mathbb{R}_{\neg\neg}$ it would have to lie in that open and not lie in it. No real number, then, is a point of $\mathbb{R}_{\neg\neg}$.\footnote{More generally, for a Hausdorff space $X$ the points of the sublocale $X_{\neg\neg}$ of its regular opens are exactly the isolated points of $X$, and the line has none.} The line thereby exhibits the sharing anticipated in Sec.~\ref{sec:records}. Every subset $A$ of the line determines a sublocale, the locale that $A$ defines as a subspace, which identifies two opens exactly when they meet $A$ in the same set and whose points are those of $A$. The rationals and the irrationals have no point in common and each is dense in the line, so the sublocale each determines is dense and contains $\mathbb{R}_{\neg\neg}$, as every dense sublocale does. As sublocales, the rationals and the irrationals share~$\mathbb{R}_{\neg\neg}$, while as subsets they share nothing.

The whole, from here on, is a locale $L_\Lambda$, and the records of Sec.~\ref{sec:records} are its parts, the sublocales $L_\lambda\in\mathsf{S}(L_\Lambda)$, so that the coframe $\mathsf{S}(L_\Lambda)$ takes over the role of $\Lambda$ in Sec.~\ref{sec:standard_OMF}. The records we will consider are finitely many, but $L_\Lambda$ cannot be finite. Some of them will be nonzero and have no points (Sec.~\ref{sec:pointlessness}), while every nonzero part of a finite locale has one, since its frame is the frame of opens of a finite topological space~\cite{Birkhoff1937}. Two lattices are in play, and they should not be conflated. The frame $\mathcal{O}(L_\Lambda)$ is a complete Heyting algebra and its logic is intuitionistic; its elements are the opens, which encode the distinctions present in the whole. The coframe $\mathsf{S}(L_\Lambda)$ is a complete co-Heyting algebra and its logic is paraconsistent;\footnote{App.~\ref{Appendix:Paraconsistency} defines both algebras and discusses the logic.} its elements are the records, and it is the coframe that carries the ontology.

\subsection{Localic ontological models}
\label{sec:localic_OMF}
Given the notion of reality of Sec.~\ref{sec:records} and the mathematics of Sec.~\ref{sec:pointfree} apt to represent it, we now define a localic ontological model of an operational theory, mirroring Definition~\ref{def:ontological_model}. One of its ingredients has no counterpart there. In the standard framework an ontic state specifies the system alone, so it answers every measurement of the theory, performed or not. A record instead registers which experimental procedures were performed, measurements among them, and which records a round performing a given procedure can realize is settled before any round is run. A localic model therefore assigns to each procedure~$\pi$, preparation or measurement, a region~$L_{\Lambda^{\pi}}$, so that a round performing~$P$ and~$M$ realizes a record below ${L_{\Lambda^{PM}}:=L_{\Lambda^{P}}\wedge L_{\Lambda^{M}}}$. Here $\mathcal{M}$ collects the measurements a single round can perform, as in Definition~\ref{def:ontological_model}, and measurements performed together are represented by the one they jointly constitute.
The records below $L_{\Lambda^{PM}}$ are the alternatives of a round performing $P$ and $M$, and the epistemic state of $P$ behaves as a probability over them, one measurement at a time. Nothing is asked of it at the join of the regions of two measurements that cannot be performed together, since no round realizes such a join (Sec.~\ref{sec:weight}).

\begin{definition}[Localic ontological model]
\label{def:localic_model}
A \emph{localic ontological model} of a physical theory, specified operationally as in Sec.~\ref{sec:standard_OMF}, consists of:
\begin{enumerate}[label=(\roman*)]
\item a locale $L_\Lambda$ whose sublocales $L_\lambda\in\mathsf{S}(L_\Lambda)$, the \emph{records}, each represent an ontic state; among them, a region $L_{\Lambda^{\pi}}$ for every procedure $\pi$ of the theory, the one within which a round performing $\pi$ realizes a record, and for every measurement $M\in\mathcal{M}$ a finite family $\mathcal{A}_{M}$ of pairwise disjoint records joining to $L_{\Lambda^{M}}$;
\item for every preparation $P\in\mathcal{P}$, a weight $q(L_\lambda|P)\in[0,1]$, the \emph{epistemic state} representing $P$, defined on sublattices $\mathcal{L}_{PM}$ of the records below $L_{\Lambda^{PM}}$, one for every measurement $M\in\mathcal{M}$,\footnote{It is a sublattice, closed under joins and meets, because the modular law of Eq.~\eqref{eq:ctx_modularity} involves the weights of the join and the meet of any two records it relates. We ask of $\mathcal{L}_{PM}$ only that it contain $0$ and $L_{\Lambda^{PM}}$, its meets with the members of $\mathcal{A}_{M}$, over which Eq.~\eqref{eq:localic_opdata} integrates, and its meets with the regions of the other measurements, on which Sec.~\ref{sec:weight} relies.} and monotone, modular and normalized on each,
\begin{equation}
\label{eq:ctx_modularity}
\begin{split}
q(S\vee T|P)+q(S\wedge T|P) &=q(S|P)+q(T|P), \\
q(L_{\Lambda^{PM}}|P) &=1,\qquad q(0|P)=0,
\end{split}
\end{equation}
for all $S,T\in\mathcal{L}_{PM}$;
\item for every measurement $M\in\mathcal{M}$ and outcome $k$, a \emph{response function} $q(k|M,L_\lambda)\in[0,1]$ defined at the members of $\mathcal{A}_{M}$, with $\sum_{k}q(k|M,L_\lambda)=1$ at each of them;
\end{enumerate}
such that the empirical predictions are reproduced by integration over records,
\begin{equation}
\label{eq:localic_opdata}
\wp(k|P,M)\;=\;\int \mathrm{d}L_\lambda\;q(L_\lambda|P)\,q(k|M,L_\lambda).
\end{equation}
\end{definition}

\noindent We treat one preparation $P$ at a time and work within its region, replacing every region $L_{\Lambda^{M}}$ by $L_{\Lambda^{PM}}$ and every member $L_\lambda$ of $\mathcal{A}_{M}$ by $L_{\Lambda^{PM}}\wedge L_\lambda$, which keeps the response of $L_\lambda$. These meets are the records over which Eq.~\eqref{eq:localic_opdata} integrates. They are pairwise disjoint and join to $L_{\Lambda^{PM}}$, where the weight is normalized. With $P$ fixed we suppress it in the notation, writing $q(L_\lambda)$ for the weight and $L_{\Lambda^{M}}$, $\mathcal{L}_{M}$ and $\mathcal{A}_{M}$ for the region, the lattice and the family so obtained.

Outside its region a measurement has no response: not a value left unspecified, but no value at all, since no round performing $M$ realizes a record there. Within the region the responses and the weight are organized as in Definition~\ref{def:ontological_model}, with the members of $\mathcal{A}_{M}$ in place of the ontic states, at which the responses are given, and $\mathcal{L}_{M}$ in place of the measurable subsets, on which the weight is defined. We set the integrand of Eq.~\eqref{eq:localic_opdata} to zero at every record outside $\mathcal{A}_{M}$, so that it is defined on all of $\mathsf{S}(L_\Lambda)$, while the response itself stays undefined there.

Clause~(i) specifies the records, the regions and the families before any round is run, so that what a round settles is only which record it realizes, as in Sec.~\ref{sec:records}. The weight then carries no measurement index, so a record on which it is defined for two measurements receives one value for both, which is measurement independence, Eq.~\eqref{eq:measurement_independence}, holding by construction. Nor does clause~(ii) ask anything of a preparation that Definition~\ref{def:ontological_model} does not. There too the weight is normalized on the region of every measurement; the conditions are invisible only because those regions all coincide with $\Lambda$.

Equation~\eqref{eq:localic_opdata} is the law of total probability of Eq.~\eqref{eq:opdata}, transposed to records. Its integral is the Choquet integral~\cite{Denneberg1994}, which is defined for monotone weights that need not be modular, coincides with the ordinary integral for weights that are, and asks of the integrand only its upper level sets, the parts where it is at least a given value, and of the weight only its values on them. Since the integrand is given at the members of $\mathcal{A}_{M}$ and not at points, each upper level set is a join of members of $\mathcal{A}_{M}$ and lies in the lattice $\mathcal{L}_{M}$ of the measurement performed, where clause~(ii) makes the weight modular and nothing departs from ordinary averaging. The family $\mathcal{A}_{M}$ is finite by clause~(i), as the ontic state spaces of Sec.~\ref{sec:standard_OMF} are, and its members are pairwise disjoint, so the integral is the sum $\sum_{L_\lambda\in\mathcal{A}_{M}}q(L_\lambda)\,q(k|M,L_\lambda)$ for any responses.\footnote{In what follows we take the responses to be deterministic, valued in $\{0,1\}$, so that each member of $\mathcal{A}_{M}$ answers exactly one outcome. Equation~\eqref{eq:localic_opdata} then reduces to the weight of the join of the members answering $k$; over the outcomes these joins are pairwise disjoint and join to $L_{\Lambda^{M}}$. In the Bell scenario deterministic responses lose no generality (Theorem~\ref{thm:localic_fine}).}

Replacing points by records does not by itself change the probability theory. If every procedure of the theory is performed on the whole of $L_\Lambda$, and the weight of a preparation is defined on one and the same lattice of records for every measurement, then clauses~(i) and~(iii) give every measurement a response throughout $L_\Lambda$, and clause~(ii) makes $q$ a single modular weight normalized on $L_\Lambda$, a finitely additive probability measure carried by records instead of by sets of points.

The probability theory departs from the standard one only where one of those two conditions fails. In this work it is the first that fails. The measurements are not performed on the whole of $L_\Lambda$, and two of them have regions neither of which contains the other. Like the responses, the weight is then defined only within regions, and a record spanning two of them, such as their join, lies in neither of their lattices. Weighing it would mean mixing the weights of the two regions in proportion to how likely each measurement was to be performed, which an ontological model does not specify. Equation~\eqref{eq:ctx_modularity} asks modularity and normalization of each region separately; we call this \emph{contextual modularity}, after the contexts of Sec.~\ref{sec:standard_bell} (the measurements a round can perform). Normalization is a fact each context has, not a fact the world has, and whether a global account can be supplied notwithstanding is addressed in Sec.~\ref{sec:weight}.  We write $L_{\Lambda^{c}}$, $\mathcal{L}_{c}$ and $\mathcal{A}_{c}$ for the region of the context $c$, the records on which its weight is defined, and the family at which its responses are given.

Definition~\ref{def:ontological_model} is the case of Definition~\ref{def:localic_model} in which every procedure is performed on the whole of $L_\Lambda$, the weight of a preparation is defined on one and the same lattice of records for every measurement, and every nonzero record has a point. The last condition makes the mereology extensional, since every record is then spatial and is the join of the one-point records below it, so that a record is nothing over and above the points in it,\footnote{If an open $U$ is not below an open $V$ while every point of $U$ lies in $V$, then $\mathfrak{o}(U)\wedge\mathfrak{c}(V)$ is a nonzero record with no points; so the hypothesis makes $L_\Lambda$ spatial, and, the parts of a record being the records below it, makes every record spatial too. A spatial locale is the join of the one-point parts below it~\cite{PicadoPultr2012}.} which is exactly the feature of Sec.~\ref{sec:standard_OMF} that Sec.~\ref{sec:records} rejected. Under the three conditions, setting $\Lambda:=\mathrm{pt}(L_\Lambda)$ recovers Definition~\ref{def:ontological_model} clause by clause, Eq.~\eqref{eq:ctx_modularity} becoming Eq.~\eqref{eq:modularity} and Eq.~\eqref{eq:localic_opdata} becoming Eq.~\eqref{eq:opdata}. The two frameworks are set side by side in Table~\ref{tab:frameworks}. Wherever two measurements are incompatible, this reduction is precluded by the axioms the next subsection presents.

\begin{table*}[t]
\centering
\setlength{\tabcolsep}{10pt}
\renewcommand{\arraystretch}{1.8}
\begin{tabular}{|l|l|l|}
\cline{2-3}
\multicolumn{1}{l|}{}
&
\parbox[c]{5.3cm}{\centering\textbf{Ontological models}}
&
\parbox[c]{5.7cm}{\centering\textbf{Localic ontological models}}
\\[2mm]
\hline
\parbox[c]{3.0cm}{\centering Ontic state space}
&
\parbox[c]{5.3cm}{\centering Measurable space $\Lambda$}
&
\parbox[c]{5.7cm}{\centering Coframe $\mathsf{S}(L_\Lambda)$}
\\[2mm]
\hline
\parbox[c]{3.0cm}{\centering Ontic states}
&
\parbox[c]{5.3cm}{\centering Points $\lambda$}
&
\parbox[c]{5.7cm}{\centering Sublocales $L_\lambda$}
\\[2mm]
\hline
\parbox[c]{3.0cm}{\centering Epistemic states}
&
\parbox[c]{5.3cm}{\centering Measures, modular and normalized on $\Lambda$, Eq.~\eqref{eq:modularity}}
&
\parbox[c]{5.7cm}{\centering Weights, modular and normalized on each $\mathcal{L}_{PM}$, Eq.~\eqref{eq:ctx_modularity}}

\\[2mm]
\hline
\parbox[c]{3.0cm}{\centering Response functions}
&
\parbox[c]{5.3cm}{\centering Distributions over outcomes at every point of $\Lambda$}
&
\parbox[c]{5.7cm}{\centering Distributions over outcomes at each member of $\mathcal{A}_{M}$}
\\[2mm]
\hline
\parbox[c]{3.0cm}{\centering Reproduction of predictions}
&
\parbox[c]{5.3cm}{\centering Law of total probability over points, Eq.~\eqref{eq:opdata}}
&
\parbox[c]{5.7cm}{\centering Law of total probability over records, Eq.~\eqref{eq:localic_opdata}}
\\[2mm]
\hline
\end{tabular}
\caption{The two frameworks side by side. Beyond what the table shows, the mereology of the left-hand column is extensional, a part being determined by the points in it; that of the right-hand column need not be, a sublocale being determined by how it restricts the whole, and the lattice of sublocales is in general not Boolean. On the left every procedure is performed on the whole of $\Lambda$; on the right each procedure $\pi$, preparation or measurement, is performed on its own region $L_{\Lambda^{\pi}}$, and different procedures may have different regions. A round performing $P$ and $M$ realizes a record below both regions, $\mathcal{L}_{PM}$ is the lattice of records below $L_{\Lambda^{PM}}$ on which the weight representing $P$ is modular and normalized, and $\mathcal{A}_{M}$ is the family of pairwise disjoint records, joining to $L_{\Lambda^{M}}$, at which the responses of $M$ are given. The right-hand column reduces to the left when every procedure is performed on the whole of $L_\Lambda$, the weight is defined on one and the same lattice of records for every measurement, and every nonzero record has a point.}
\label{tab:frameworks}
\end{table*}

\subsection{Shared records must be pointless}
\label{sec:pointlessness}

In Sec.~\ref{sec:records} we distinguished two kinds of records. Some register everything that occurred in a round, which procedures were performed and which outcomes they gave; others register only a part of a round, the identity of a procedure across the rounds in which it could have been performed. We now focus on the second kind, which is needed as soon as a measurement belongs to two contexts that no single round can realize together, and ask what such a record must be like in the framework of Definition~\ref{def:localic_model}.

Nothing in that definition makes one exist. Its clause~(i) gives each procedure a region and imposes no relation between the regions of different measurements, so two contexts sharing a measurement may have zero meet, and the sameness of that measurement is then carried only by its label, the same name, such as $A_0$, appearing in both contexts. Section~\ref{sec:records} argued that it must be carried by the ontology, since otherwise it would be a coincidence that the shared measurement yields the same statistics in both contexts, and, where a common cause is wanted, there would be nothing across the contexts for it to run through. We adopt this conclusion as an axiom.

\begin{axiom}[Relationality]
\label{ax:relationality}
Cross-contextual identity is grounded in, and only in, a shared measurement.\footnote{We state the axiom for a shared measurement, that being the case treated here; a preparation belonging to two ensembles that no single round realizes together would demand the same of that preparation.} If two contexts share a measurement, some nonzero part of the model lies below both, and this part is what carries the sameness of that measurement across them. If two incompatible contexts share none, no nonzero part lies below both.
\end{axiom}

\noindent The axiom has two halves. The first says that a shared measurement has an ontic correlate, one and the same part lying below both contexts. The second says that nothing else does, since two incompatible contexts with no measurement in common have nothing to be the same about. We call this \emph{relationality} because the sameness of a measurement in two contexts here consists in a part the two have in common, and so is a fact about neither taken by itself. It goes further than the relationality of Secs.~\ref{sec:records} and~\ref{sec:pointfree}, where the claim was that a part \emph{can} be individuated by how it restricts a whole; the axiom says which parts a scenario has, and this does not follow from any fact about locales.

A round realizes one context, and $L_\Lambda$ describes one round with all its alternatives, as $\Lambda$ does in Sec.~\ref{sec:standard_OMF}. The regions $L_{\Lambda^{c}}$ are the alternatives one round can realize.  Which outcome the context $c$ would have given, had it been performed, is answered by the members of the family $\mathcal{A}_{c}$ (Sec.~\ref{sec:localic_OMF}). The specific models we consider within the framework supply a second family. The \emph{outcome-fact region} $F^{c}_{o}$ is the part of $L_{\Lambda^{c}}$ at which $c$ was performed and gave $o$, which is the fact a record of the first kind registers. A record registering that $c$ gave $o$ settles what $c$ answers, so $F^{c}_{o}$ lies below the join of the members of $\mathcal{A}_{c}$ answering $o$. The fact regions, unlike that family, need not join to the context, and Definition~\ref{def:localic_model} imposes nothing on them. A record of the first kind lies in a fact region, registering both that its context was performed and what it gave. A record of the second kind can answer an outcome of $c$, as a member of $\mathcal{A}_{c}$, while lying in no fact region of $c$; it then registers no performance of $c$, and nothing prevents it from belonging to a second context as well.

The question this subsection answers is whether a part shared by two incompatible contexts can contain a point. We write $\mathrm{pt}(S)$ for the set of points lying in a part $S$ of $L_\Lambda$; a point lies in a meet exactly when it lies in both parts. A point decides every open by lying in it or not, and the opens encode every distinction present in the whole (Sec.~\ref{sec:pointfree}). It is therefore a complete description of a round, which is what an ontic state is in the standard framework (Sec.~\ref{sec:standard_OMF}). As a record (Sec.~\ref{sec:records}), a point registers a round that has occurred, with the context it performed and the outcome it obtained. This reading of points motivates two conditions, which the next two axioms state. First, the round a point registers realizes one context and gives one outcome of it, so a point does not report two results that no single round could produce together (Exclusivity). Second, that round did occur, so a point lying in a context registers that the context was performed and what it gave (Completeness).

\begin{axiom}[Exclusivity]
\label{ax:exclusivity}
No point lies in two outcome-fact regions whose facts no single round can realize together,
\begin{equation}
\label{eq:exclusivity}
\mathrm{pt}\bigl(F^{c}_{o}\bigr)\cap\mathrm{pt}\bigl(F^{c'}_{o'}\bigr)=\varnothing ,
\end{equation}
in the two cases: (i) $c=c'$ and $o\neq o'$, one round having one result; (ii) $c$ and $c'$ incompatible, one round realizing only one of the two contexts.
\end{axiom}

\begin{axiom}[Completeness]
\label{ax:completeness}
Every point of a context lies in some outcome-fact region of that context,
\begin{equation}
\label{eq:completeness}
\mathrm{pt}\bigl(L_{\Lambda^{c}}\bigr)\subseteq\bigcup\nolimits_{o}\mathrm{pt}\bigl(F^{c}_{o}\bigr).
\end{equation}
\end{axiom}

\noindent The standard ontological models framework has no fact regions of its own, since an ontic state specifies what each measurement would return and not which was performed. In its canonical model, Eq.~\eqref{eq:sectors}, the only possible fact regions are the sectors, and with them Completeness and clause~(i) of Exclusivity hold. Clause~(ii) is the departure. An ontic state carries an outcome for every measurement, incompatible ones included, and so lies in a sector of each of two contexts no single round realizes together; on the reading of a point as a completed round adopted above, it thereby reports an outcome-fact of each, which is what clause~(ii) forbids. Both axioms are conditions on points and say nothing about parts that contain none. Granted the two axioms, the records of the second kind can now be characterized.

\begin{theorem}[Shared records are pointless]
\label{thm:pointless}
Under Exclusivity and Completeness, $\mathrm{pt}\bigl(L_{\Lambda^{c}}\wedge L_{\Lambda^{c'}}\bigr)=\varnothing$ for any two incompatible contexts $c,c'$. In particular every part lying below two incompatible contexts is pointless.
\end{theorem}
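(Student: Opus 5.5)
The argument is a direct contradiction on a single point, using the two axioms in sequence. Suppose $x\in\mathrm{pt}\bigl(L_{\Lambda^{c}}\wedge L_{\Lambda^{c'}}\bigr)$ for incompatible contexts $c,c'$. Since a point lies in a meet exactly when it lies in both parts, as recalled just before the axioms, $x$ lies in $L_{\Lambda^{c}}$ and in $L_{\Lambda^{c'}}$.

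Completeness, Eq.~\eqref{eq:completeness}, applied to each context separately, then yields outcomes $o$ of $c$ and $o'$ of $c'$ with $x\in\mathrm{pt}(F^{c}_{o})$ and $x\in\mathrm{pt}(F^{c'}_{o'})$. Hence $x\in\mathrm{pt}(F^{c}_{o})\cap\mathrm{pt}(F^{c'}_{o'})$. Because $c$ and $c'$ are incompatible, clause~(ii) of Exclusivity, Eq.~\eqref{eq:exclusivity}, says this intersection is empty. This contradiction gives $\mathrm{pt}\bigl(L_{\Lambda^{c}}\wedge L_{\Lambda^{c'}}\bigr)=\varnothing$.

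For the \emph{in particular} clause, take a part $S$ with $S\le L_{\Lambda^{c}}$ and $S\le L_{\Lambda^{c'}}$. Then $S\le L_{\Lambda^{c}}\wedge L_{\Lambda^{c'}}$. Points are monotone under the order of $\mathsf{S}(L_\Lambda)$: a point is itself a (one-point) sublocale, and lying in $S$ means being below $S$, so it is then below anything above $S$. Therefore $\mathrm{pt}(S)\subseteq\mathrm{pt}\bigl(L_{\Lambda^{c}}\wedge L_{\Lambda^{c'}}\bigr)=\varnothing$, and $S$ is pointless.

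The only step needing care is the meet property of points. If one reads a point as a frame homomorphism $\mathcal{O}(L_\Lambda)\to\{0,1\}$, the claim that it factors through the meet of two sublocales exactly when it factors through each needs a short justification. I would argue it via the nucleus description: the one-point sublocale is below each of the two parts, hence below their meet, since the meet is the greatest lower bound in $\mathsf{S}(L_\Lambda)$. The paper states this property outright, so it can simply be cited.

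Nothing else is hard. In particular, the theorem does not claim such a meet is nonzero; nonzeroness comes from Relationality, which is not used here.
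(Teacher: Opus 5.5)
Your proof is correct and follows the paper's argument essentially step for step. A point in the meet lies in both context regions, Completeness places it in a fact region of each, clause~(ii) of Exclusivity forbids that, and monotonicity of points under the order of $\mathsf{S}(L_\Lambda)$ gives the \emph{in particular} clause; your justification of the meet and monotonicity properties via one-point sublocales is sound and makes explicit what the paper takes for granted.
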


\begin{proof}
Let $p$ be a point lying in $L_{\Lambda^{c}}\wedge L_{\Lambda^{c'}}$. It lies in each of the two contexts, so Completeness places it in $F^{c}_{o}$ for some $o$ and in $F^{c'}_{o'}$ for some $o'$, which clause~(ii) of Exclusivity forbids. A point lying in a part lies in every part above it, so the same holds of any part below the two contexts.
\end{proof}

\noindent Note that the proof uses no weight, no value of any response function, no disjointness of the regions, and no finiteness of the outcome set. It also obtains more than Sec.~\ref{sec:records} did, where the argument was that a record of the second kind, registering the identity of one procedure and not everything the round did, is not itself the complete description a point must be; here \emph{nothing} in such a record is a point, whatever it registers.

Were records sets of points, as in the standard framework, no record of the second kind could satisfy the three axioms. A nonempty part shared by two incompatible contexts would contain a point, which Completeness places in a fact region of each, against clause~(ii) of Exclusivity. Completeness and Exclusivity would thus leave two incompatible contexts no nonempty part in common, even when they share a measurement, which is what Relationality forbids. A nonzero part of a locale, by contrast, need not contain any point. Completeness can then hold at every point of a context while a nonzero part of that context lies outside all of its fact regions, and a record shared by two incompatible contexts has room to be such a part.

Three pictures are therefore available of what two incompatible contexts sharing a measurement can share, and the axioms close two of them. (Recall that a spatial part of a locale is determined by its points, so a part that is both spatial and pointless is $0$, and a pointless part that is nonzero cannot be spatial.) So the three are:
\begin{enumerate}[label=(\Alph*),leftmargin=*,itemsep=1pt,topsep=3pt]
\item their common part has a point: one element of reality registering an outcome for each of two incompatible contexts, which is the standard picture, and Exclusivity fails.
\item their common part is pointless and spatial, hence $0$: nothing carries the sameness of the shared measurement, and Relationality fails.
\item their common part is nonzero and pointless, hence non-spatial.
\end{enumerate}
Only (C) survives: Theorem~\ref{thm:pointless} closes (A), Relationality closes (B). An example of a nonzero pointless part is already in Sec.~\ref{sec:pointfree}, where the sublocales of the real line determined by the rationals and by the irrationals share no point and yet share a nonzero part.

Wherever two measurements are incompatible, then, the reduction of Sec.~\ref{sec:localic_OMF} to Definition~\ref{def:ontological_model} is indeed precluded. It asks that every procedure be performed on the whole of $L_\Lambda$ and that every nonzero record have a point. Where two incompatible contexts share a measurement, (C) denies the second; where they share none, Relationality keeps their regions apart, which denies the first. A record is then not determined by the points in it. The extensionality that Sec.~\ref{sec:records} rejected fails here, and not by stipulation. Pointlessness and non-vanishing enter differently: the first is derived, from Exclusivity and Completeness, the second posited, by Relationality. Neither entails the other, and both are needed. Without the first the shared part registers an outcome of each context, and without the second there is no shared part. The model of Sec.~\ref{sec:localic_bell}, on which the rest of the paper is built, shows that the three axioms can hold together, each shared record there being a nonzero pointless part and each fact region having points of its own.

\subsection{Modularity fails across incompatible contexts}
\label{sec:weight}

Across two incompatible contexts, modularity is not a well posed equation. It relates four values,
\[q(S)+q(T)=q(S\vee T)+q(S\wedge T),\]
and at $S=L_{\Lambda^{c}}$ and $T=L_{\Lambda^{c'}}$ three of them are already given. First, each context on its own weighs $1$. Second, their meet lies in both $\mathcal{L}_{c}$ and $\mathcal{L}_{c'}$ and carries no measurement index (Sec.~\ref{sec:localic_OMF}), so the two contexts agree on its value,
\begin{equation}
\label{eq:glueweight}
g_{cc'}:=q\bigl(L_{\Lambda^{c}}\wedge L_{\Lambda^{c'}}\bigr)\le 1,
\end{equation}
the bound holding because the meet lies below a region of weight $1$ and $q$ is monotone. The fourth value, $q(L_{\Lambda^{c}}\vee L_{\Lambda^{c'}})$, is of a different kind. When $g_{cc'}<1$, which is the case of interest, no context region lies above the join. Such a region would contain both $L_{\Lambda^{c}}$ and $L_{\Lambda^{c'}}$, and within it the weight is modular, so the join would receive $1+1-g_{cc'}>1$, more than the weight $1$ of the region containing it. The join, lying within no context region, therefore receives no weight under Definition~\ref{def:localic_model}, and no round realizes it, since that would take a round performing both incompatible contexts at once.

Modularity across incompatible contexts can then be asked only of an extension of $q$, a weight defined also at the join and giving it the inclusion--exclusion value. We call a \emph{completion} of $q$ any monotone $\hat q$ that extends it to a sublattice of $\mathsf{S}(L_\Lambda)$ containing the joins of context regions, and the next theorem shows that, wherever $g_{cc'}<1$, no completion valued in $[0,1]$ is modular.

\begin{theorem}[Completions of the contextual weights]
\label{thm:completions}
Let $c,c'$ be contexts, whose regions have weight $q(L_{\Lambda^{c}})=q(L_{\Lambda^{c'}})=1$, with $g_{cc'}<1$, and let $\hat q$ be a completion.
\begin{enumerate}[label=(\roman*),leftmargin=*,itemsep=1pt,topsep=3pt]
\item If $\hat q$ is modular on the pair $L_{\Lambda^{c}},L_{\Lambda^{c'}}$, then $\hat q\bigl(L_{\Lambda^{c}}\vee L_{\Lambda^{c'}}\bigr)=2-g_{cc'}>1$, so $\hat q$ takes a value outside the range Definition~\ref{def:localic_model} asks of a weight.
\item If $\hat q$ is valued in $[0,1]$, then $\hat q\bigl(L_{\Lambda^{c}}\vee L_{\Lambda^{c'}}\bigr)=1$, and the modular identity fails on the pair by exactly $1-g_{cc'}>0$.
\end{enumerate}
No completion is both modular and valued in $[0,1]$.
\end{theorem}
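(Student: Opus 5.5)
The proof is a direct computation from the three values of the modular identity that are already fixed, together with monotonicity of the completion. Write $S:=L_{\Lambda^{c}}$, $T:=L_{\Lambda^{c'}}$. By hypothesis $\hat q(S)=\hat q(T)=1$, since $\hat q$ extends $q$. The meet $S\wedge T$ lies in both $\mathcal{L}_{c}$ and $\mathcal{L}_{c'}$ (Definition~\ref{def:localic_model} asks each lattice to contain the meets with the other contexts' regions). The weight there carries no measurement index, so $\hat q(S\wedge T)=g_{cc'}$, as in Eq.~\eqref{eq:glueweight}.

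For (i), I would substitute these values into the modular identity $\hat q(S\vee T)+\hat q(S\wedge T)=\hat q(S)+\hat q(T)$. This gives $\hat q(S\vee T)=2-g_{cc'}$, and $g_{cc'}<1$ makes this strictly greater than $1$.

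For (ii), I would first note that $S\le S\vee T$. Monotonicity of $\hat q$ then gives $\hat q(S\vee T)\ge\hat q(S)=1$, while the range hypothesis gives $\hat q(S\vee T)\le 1$, so $\hat q(S\vee T)=1$. The modular defect is then
\[
\hat q(S)+\hat q(T)-\hat q(S\vee T)-\hat q(S\wedge T)=2-1-g_{cc'}=1-g_{cc'}>0 .
\]

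The final sentence of the statement follows from either part. A modular completion would need the value $2-g_{cc'}>1$ at the join by (i), which contradicts being valued in $[0,1]$.

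There is no real obstacle. The only point needing care is the claim that $\hat q(S\wedge T)$ is the single number $g_{cc'}$ rather than two context-dependent values. This rests on the observation in Sec.~\ref{sec:localic_OMF} that the weight carries no measurement index, so a record in both lattices gets one value. I would also check that a completion is defined on a sublattice containing $S$, $T$, their meet and their join, so that all four quantities make sense. The definition of completion guarantees this for the join, and it contains the meet because it extends $q$.
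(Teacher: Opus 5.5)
Your proposal is correct and matches the paper's proof essentially verbatim. Part (i) substitutes the three fixed values $1,1,g_{cc'}$ into the modular identity, and part (ii) uses the sandwich $1=\hat q(L_{\Lambda^{c}})\le\hat q(L_{\Lambda^{c}}\vee L_{\Lambda^{c'}})\le 1$ from monotonicity and the range bound. Your extra checks, that the meet carries the single value $g_{cc'}$ and that $\hat q$ is defined at both the meet and the join, are exactly what the paper settles just before the theorem (Eq.~\eqref{eq:glueweight} and the definition of a completion).
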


\begin{proof}
(i) Modularity gives $\hat q(L_{\Lambda^{c}}\vee L_{\Lambda^{c'}})=1+1-g_{cc'}$, which exceeds $1$ because $g_{cc'}<1$. (ii) Monotonicity and the bound $\hat q\le 1$ give $1=\hat q(L_{\Lambda^{c}})\le\hat q(L_{\Lambda^{c}}\vee L_{\Lambda^{c'}})\le 1$; the modular identity then reads $1+1=1+g_{cc'}$, failing by $1-g_{cc'}$.
\end{proof}

\noindent The proof is elementary. It uses no axiom of Sec.~\ref{sec:pointlessness}, nothing about the locale beyond Definition~\ref{def:localic_model}, no finiteness and, for clause~(i), not even monotonicity. Nor need the two contexts share anything, since at $g_{cc'}=0$ the identity fails by $1$. The argument for clause~(ii), moreover, gives more than the theorem states. A completion valued in $[0,1]$ gives weight exactly $1$ to every record above a context region, the top of its domain included, so it is normalized globally, although only normalization within each context was assumed. A modular completion, by contrast, cannot be scaled down into $[0,1]$ without giving every context region less than the weight $1$ that $q$ assigns it.

In the models of Sec.~\ref{sec:localic_bell}, where both kinds of completion exist, giving up modularity while keeping the weight valued in $[0,1]$ and keeping modularity while letting it exceed~$1$ are two conventions for one and the same physical situation. All completions agree wherever $q$ is defined and differ only at the join of two incompatible contexts, which no round realizes, and they make the same predictions, since Eq.~\eqref{eq:localic_opdata} evaluates the weight only at records below the region of the measurement performed. What is not a convention is that no single account is both modular and valued in $[0,1]$. In the standard framework the question does not arise, since every context region is $\Lambda$ itself, whose weight is $1$ by definition of a probability measure.

Notice that the amount by which the identity fails, ${1-g_{cc'}}$, which we call the \emph{defect} on the pair, decreases as two contexts come to share more and reaches its maximum of $1$ where they share nothing at all. Shared records therefore reduce the defect instead of carrying it. Where two contexts share the whole of their weight, so that $g_{cc'}=1$, the defect vanishes and the theorem no longer applies, with the weight of each context lying entirely on the record they share. The standard framework sits there, every context region being $\Lambda$, and total sharing is the price of modularity.

It is tempting to credit the failure of modularity to the pointless parts, on the thought that a weight carried by something with no points has nothing to add up. The thought fails twice. Pointless parts do not make a weight non-modular, since on a countably based \emph{fitted} locale (one satisfying a mild separation condition) any measure on the opens extends canonically to a modular weight on every part, pointless ones included~\cite{Simpson2012}.\footnote{Being modular, no such extension of a probability measure reproduces the contextual data of two incompatible contexts, since clause~(i) would give their join $2-g_{cc'}>1$ while the extension is bounded by the measure of the whole. The canonical construction supplies modularity at the price of the contextual weights.} Nor does non-modularity need pointless parts, since Theorem~\ref{thm:completions} asks nothing about points, and a model whose incompatible contexts shared a part with points in it would carry the same defect, of the same size. The mereology of Sec.~\ref{sec:pointfree} thus permits non-modularity without requiring it.\footnote{An independent, structural motivation for non-modular probability weights could come from taking the co-Heyting mereology of the lattice of sublocales as fundamental. In the classical setting, Kolmogorov's probability measures appear structurally as the ($\sigma$-additive) states of the effect algebra~\cite{foulis1994EffectAlgebrasUnsharpa} associated with the $\sigma$-algebra of measurable subsets of the ontic state space. In future work we aim to study a generalization of effect algebras that captures the co-Heyting mereology of the lattice of sublocales, just as effect algebras capture the fundamentally Boolean mereology of $\sigma$-algebras. The states of such generalized effect algebras should then give a structural justification for a theory of non-modular probability weights.} What makes it unavoidable, wherever two contexts share less than the whole of their weight, is the way Definition~\ref{def:localic_model} defines $q$, normalizing each context on its own, over the alternatives of a round (Sec.~\ref{sec:localic_OMF}), and assigning no value at the join of two of them.

\medskip
Sections~\ref{sec:pointlessness} and~\ref{sec:weight} have one premise in common, that some contexts cannot be performed in the same round, and each reflects it in the model in its own way. Reading a point as a complete record of a round gives Exclusivity and Completeness, and with them pointlessness. Normalizing the weight within each context, over the alternatives of one round, gives non-modularity. Each conclusion from the two subsections stands without the other. The framework is now complete and general, with nothing in it tied to a particular scenario, and the next subsection applies it to the scenario where local causality is at stake.

\subsection{Localic ontological models of the Bell scenario}
\label{sec:localic_bell}

Sections~\ref{sec:records}--\ref{sec:weight} are stated for an arbitrary scenario. We now bring them to the scenario of Sec.~\ref{sec:standard_bell}, where Alice and Bob choose settings $x,y\in\{0,1\}$, the four contexts $(x,y)$ are pairwise incompatible, neighbouring contexts share one measurement, and diagonal ones share none. Three things enter here in turn. The localic ontological models framework of Definition~\ref{def:localic_model} gives each context a region carrying a normalized weight, and imposes no relation between the regions of different contexts. The axioms of Sec.~\ref{sec:pointlessness} fix which contexts share a record and what kind of record that must be. They leave open how the sharing is distributed among the sectors of a context, and there we make one choice, which defines the family of models the rest of the paper works with. No such choice arises in the standard framework. There the elements of reality are the assignments $\lambda=(a_0,a_1,b_0,b_1)$, each answering every measurement, and the sectors follow from them by Eq.~\eqref{eq:sectors}, with nothing left to choose. A record, by contrast, answers only the measurements of the contexts it lies below, so the model has to specify what two contexts have in common. The weights and the response functions are characterized in Sec.~\ref{sec:go}; this subsection specifies the locale. Figure~\ref{fig:localicOM_Bell} shows the one we arrive at, and what follows can be read against it.

Neighbouring contexts share a measurement, and Relationality, which grounds cross-contextual sameness in a shared measurement and in nothing else, places a nonzero record below both. Since the two are incompatible, Theorem~\ref{thm:pointless} leaves that record without points. Diagonal contexts are incompatible as well but share no measurement, so no nonzero record lies below both. The pattern of sharing is therefore the ring of the simple story, each context tied to its two neighbours and to no other.

We now implement our choice of specific localic models by characterizing the records. For each context $(x,y)$ and each outcome pair $(a,b)$ the model posits a \emph{core} $K^{xy}_{ab}$, a record of the first kind (Sec.~\ref{sec:records}), registering that $(x,y)$ was performed and which outcome pair it gave. For each pair of neighbouring contexts and each outcome pair it posits a \emph{glue}, a record of the second kind, registering only the identity of the measurement the two contexts share,
\begin{equation}
\label{eq:glues}
G^{00\leftrightarrow 01}_{ab},\quad
G^{10\leftrightarrow 11}_{ab},\quad
G^{00\leftrightarrow 10}_{ab},\quad
G^{01\leftrightarrow 11}_{ab},
\end{equation}
one for each of the four neighbouring pairs, which share, in this order, the measurements $A_0$, $A_1$, $B_0$ and $B_1$. These thirty-two records are pairwise disjoint, so that the two glues incident on one sector, e.g., $G^{00\leftrightarrow 01}_{ab}$ and $G^{00\leftrightarrow 10}_{ab}$ in the context $(0,0)$, are distinct records with nothing in common. The models built on these records form what we call the family of localic ontological models with \emph{symmetric glues}, every neighbouring pair of contexts having one glue for each outcome pair.

A round performing $(x,y)$ realizes one of the records below the region of that context, and each sector of the context holds three of them: its own core, and one glue shared with each of the two neighbouring contexts,
\begin{equation}
\label{eq:localic_sectors}
L_{\Lambda^{xy}_{ab}}:=K^{xy}_{ab}\;\vee\;G^{xy\leftrightarrow x'y'}_{ab}\;\vee\;G^{xy\leftrightarrow x''y''}_{ab},
\end{equation}
where $(x',y')$ and $(x'',y'')$ are the two contexts neighbouring $(x,y)$. The four sectors of a context are disjoint and join to its region, $L_{\Lambda^{xy}}=\bigvee_{a,b}L_{\Lambda^{xy}_{ab}}$, as in Sec.~\ref{sec:standard_bell}. That makes twelve records to a context, four cores and eight glues, four shared with each neighbour.\footnote{The preparation is performed in every round, so its region is the whole of $L_\Lambda$, and the region of a context is that of the joint measurement $\{A_x,B_y\}$, in the notation of Sec.~\ref{sec:localic_OMF}.} These twelve are the family $\mathcal{A}_{xy}$ at which clause~(iii) of Definition~\ref{def:localic_model} gives the responses of the context.

Equation~\eqref{eq:localic_sectors} gives a glue the same label in both of the contexts it lies below. The glue $G^{00\leftrightarrow 01}_{ab}$, for instance, lies below $(0,0)$ and $(0,1)$, and its $b$ belongs to $B_0$ in one of them and to $B_1$ in the other. The subscript is therefore a label on records and by itself says nothing about outcomes. What a record answers in a context is determined by the response functions of that context, so a record of the sector labelled $(a,b)$ need not answer $(a,b)$ there. Throughout the models of Sec.~\ref{sec:family} every record answers its own label in each of the contexts it lies below; for the stabilizer strategy of Sec.~\ref{sec:quantum_strategies} some do not. In the canonical model of Sec.~\ref{sec:standard_bell}, instead, a sector collects the points that answer its label, Eq.~\eqref{eq:sectors}, so a point of the sector labelled $(a,b)$ answers $(a,b)$ by construction.

Two context regions meet in the glues they share,
\begin{equation}
\label{eq:localic_meets}
L_{\Lambda^{xy}}\wedge L_{\Lambda^{x'y'}}=\bigvee_{a,b}G^{xy\leftrightarrow x'y'}_{ab},
\end{equation}
with $(x',y')\ \text{neighbouring}\ (x,y)$. Diagonal contexts, which share none, meet in $0$.\footnote{Both follow from the disjointness of the thirty-two records: the joins are finite and $\mathsf{S}(L_\Lambda)$, being a coframe, is a distributive lattice, so the meet of two context regions is the join of the meets of their records, and distinct records meet in $0$.} Every three of the four contexts include a diagonal pair, so no nonzero record lies below three of them.

Let us compare these records with the canonical model of Fig.~\ref{fig:standardOM_Bell}. There the sameness of $A_0$ across the two contexts containing it is carried by the points those two contexts share, and every such point lies in one sector of every context, answering all four measurements at once (Sec.~\ref{sec:standard_bell}). Here it is carried by the meet of the two context regions, the join of the glues of Eq.~\eqref{eq:localic_meets}, which has no points and lies below two contexts and no more. The grid of contexts and sectors survives unchanged, and Fig.~\ref{fig:localicOM_Bell} keeps it. The change is in what sits in the overlaps, and how much. There two contexts share the whole of $\Lambda$, here four of their twelve records, or none.

So far we have characterized the records but not exhibited a locale that carries them. Give each of the thirty-two records $r$ a copy $J_r$ of the open interval $(0,1)$ of the real line, and let 
\begin{equation}
\label{eq:realization}
L_\Lambda:=\mathcal{O}\Bigl(\textstyle\bigsqcup_{r}J_r\Bigr)
\end{equation}
be the locale of their disjoint union. A core is the whole of its own copy, the open part $\mathfrak{o}(J_r)$, whose points are the reals of the interval. A glue is the smallest dense part of its own copy, $\mathfrak{o}(J_r)_{\neg\neg}$, the interval's version of the $\mathbb{R}_{\neg\neg}$ of Sec.~\ref{sec:pointfree}, which overlaps every open subinterval of $J_r$ and contains no point of any.\footnote{The locale itself is spatial, as Eq.~\eqref{eq:realization} is the frame of opens of a topological space. What has no points are some of its parts (the glues).} The outcome-fact regions of Sec.~\ref{sec:pointlessness} are made of cores: $F^{xy}_{o}$ is the join of the cores of $(x,y)$ that answer $o$ there. A glue enters none of them, since, as a record of the second kind, it answers an outcome in each of its two contexts without registering that either was performed. Where every core answers its own label, as throughout the models of Sec.~\ref{sec:family}, this is simply $F^{xy}_{ab}=K^{xy}_{ab}$.

\begin{proposition}
\label{prop:realization}
In the locale of Eq.~\eqref{eq:realization} the thirty-two records are pairwise disjoint, each core is spatial, and each glue is nonzero and has no points. Relationality, Exclusivity and Completeness all hold, whatever response functions are assigned to the members of $\mathcal{A}_{xy}$, so the axioms of Sec.~\ref{sec:pointlessness} are jointly satisfiable.
\end{proposition}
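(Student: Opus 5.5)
The plan is to reduce everything to facts about sublocales of a disjoint union, using the standard decomposition of such a locale. Write $X:=\bigsqcup_r J_r$. The frame $\mathcal{O}(X)$ is the product $\prod_r\mathcal{O}(J_r)$. The opens $J_r$ are pairwise disjoint, complemented opens joining to the top, so the open parts $\mathfrak{o}(J_r)$ are complemented in $\mathsf{S}(L_\Lambda)$ and pairwise disjoint. Moreover, $\mathfrak{o}(J_r)\wedge\mathfrak{o}(J_s)=\mathfrak{o}(J_r\wedge J_s)=\mathfrak{o}(\varnothing)=0$ for $r\neq s$, since open parts preserve finite meets. Every record, core or glue, lies below its own $\mathfrak{o}(J_r)$: for the glue, $\mathfrak{o}(J_r)_{\neg\neg}$ is by construction a sublocale of the open part. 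Pairwise disjointness of the thirty-two records follows immediately.

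\textbf{Cores.} Each core $\mathfrak{o}(J_r)$ is an open part of a spatial locale. Its frame is $\downarrow J_r\cong\mathcal{O}((0,1))$, the opens of a topological space, so it is spatial and its points are the reals of the interval.

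\textbf{Glues.} Each glue is the Booleanization of $\mathcal{O}((0,1))$, identified inside $L_\Lambda$ via the embedding of the open part. It is nonzero because it is dense: no nonempty open subinterval is identified with $\varnothing$, as $\neg\neg U=\varnothing$ forces $U=\varnothing$. For pointlessness I would repeat the argument of Sec.~\ref{sec:pointfree}. For $t\in(0,1)$, the opens $(0,1)\setminus\{t\}$ and $(0,1)$ have the same regularization, so no real $t$ can be a point. It remains to check that the glue has no other points. A point of $L_\Lambda$ lying in the glue lies in $\mathfrak{o}(J_r)$, hence is a point of $J_r$, and $(0,1)$ is sober, so its points are exactly the reals. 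Points of a sublocale are points of the ambient locale lying in it, so the glue has no points at all. I expect this sobriety step to be the only point needing care. The footnote to Sec.~\ref{sec:pointfree}, on isolated points of Hausdorff spaces, covers it directly, and I would cite that.

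\textbf{Axioms.} Relationality follows from Eq.~\eqref{eq:localic_meets} and the disjointness just shown, together with distributivity of the coframe. Neighbouring contexts meet in a join of four nonzero glues, hence in a nonzero part. Diagonal contexts meet in a join of pairwise meets of distinct records, all $0$. The shared part is exactly the glue family, the record of the shared measurement.

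For Exclusivity and Completeness, first compute the points of a context region. Since points lie in a finite join iff they lie in one of the joinands, and glues have no points, $\mathrm{pt}(L_{\Lambda^{xy}})=\bigcup_{a,b}\mathrm{pt}(K^{xy}_{ab})$.

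For any response functions, $F^{xy}_o$ is the join of the cores of $(x,y)$ answering $o$. Deterministic responses at the twelve members of $\mathcal{A}_{xy}$ sort each core into exactly one outcome. For stochastic responses I would take $F^{xy}_o$ to collect the cores assigned outcome $o$, in whatever single-valued way the model's fact regions are specified; the paper only needs that each core lies in exactly one fact region. Then Completeness holds: every point of the context lies in some core, hence in some $F^{xy}_o$.

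Clause (i) of Exclusivity holds because different outcomes use different cores of the same context, which are disjoint and so share no points. Clause (ii) holds because cores of different contexts are distinct records, hence disjoint, hence share no points.

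Finally, Theorem~\ref{thm:pointless} is consistent with this: the shared parts are exactly the pointless glues. So the three axioms are jointly satisfied.
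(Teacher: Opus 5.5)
Your proposal is correct and follows the paper's proof essentially step for step. The records are disjoint because distinct records sit on distinct copies $J_r$, the cores are spatial as open parts, and the glues are nonzero by density and pointless because an open interval has no isolated points (your sobriety check just makes explicit what the paper takes for granted when it says the copy's points are the reals). Relationality follows via Eq.~\eqref{eq:localic_meets}, and Exclusivity because fact regions are joins of disjoint families of cores. The only minor variation is in Completeness: you use that the points of a finite join of sublocales are those of its joinands, while the paper meets the context region with the copy carrying the point. Both work, and both tacitly read ``the outcome a core answers'' deterministically, as your caveat on stochastic responses correctly notes.
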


\begin{proof} Each glue lies in its copy, whose points are the reals of the interval. An open interval has no isolated point, so the glue identifies the interval with the interval minus any one of its points, as $\mathbb{R}_{\neg\neg}$ does for the line (Sec.~\ref{sec:pointfree}), and so none of those points is a point of the glue. The glue is nonzero, being dense in a nonzero copy. Each core is an open part of a topological space, hence spatial. Records on distinct copies meet in $0$, which gives their disjointness and with it Eq.~\eqref{eq:localic_meets} and Relationality. Exclusivity holds because the fact regions it compares, of two outcomes of one context or of two contexts, are joins of disjoint families of cores, which share no point. For Completeness, let $p$ be a point of $L_{\Lambda^{xy}}$. It lies on one copy, and the meet of $L_{\Lambda^{xy}}$ with that copy is the member of $\mathcal{A}_{xy}$ the copy carries, or $0$ if it carries none. That meet contains $p$, so it is neither $0$ nor a glue; it is a core of $(x,y)$, and $p$ lies in the fact region of the outcome that core answers.
\end{proof}

\noindent On this locale modularity fails whatever weight a model gives the records. Two diagonal contexts already show it. They meet in $0$, so their shared weight is $g_{cc'}=0$, and by Theorem~\ref{thm:completions} every completion valued in $[0,1]$ violates the modular identity $q(S)+q(T)=q(S\vee T)+q(S\wedge T)$ on their two regions by exactly $1$. The logic of records is likewise settled by this locale. In Boolean logic a proposition and its negation exclude one another. For records, the counterpart of the negation is the smallest record that completes a record to the whole, and the two may overlap, which is what makes the logic paraconsistent (App.~\ref{Appendix:Paraconsistency}). On this locale they overlap only in glues, so that records built from cores behave as in Boolean logic, and the departure sits exactly in the records that two incompatible contexts share.

None of the three conditions under which the framework reduces to the standard one (Sec.~\ref{sec:localic_OMF}, Table~\ref{tab:frameworks}) can hold here. The regions cannot all be the whole of $L_\Lambda$, since two diagonal contexts would then share everything, against Relationality, and with the regions distinct the weight is defined on a different lattice for each context, so the condition of a single lattice goes too. Nor can every nonzero record have a point, since the record two neighbouring contexts share has none. Notice that incompatibility is what rules the three conditions out. The argument against making every region the whole of $L_\Lambda$ needs two incompatible contexts that share no measurement, and the argument against points needs two that share one. With no incompatible contexts, clause~(ii) of Exclusivity has no instance and a shared record may have points. Every region may then be the whole of $L_\Lambda$, with the weight defined on a single lattice of records, so that contextual modularity, Eq.~\eqref{eq:ctx_modularity}, reduces to the modularity of Eq.~\eqref{eq:modularity}, and a standard ontological model satisfies all three axioms.

We now have a framework, three axioms that our notion of reality of Sec.~\ref{sec:records} asks of any model, and a family of localic ontological models of the Bell scenario, with symmetric glues, satisfying them. How is local causality stated for such models, and what constraints does it place on the statistics they can reproduce?


\begin{figure*}[t]
\centering
\includegraphics[width=0.95\linewidth]{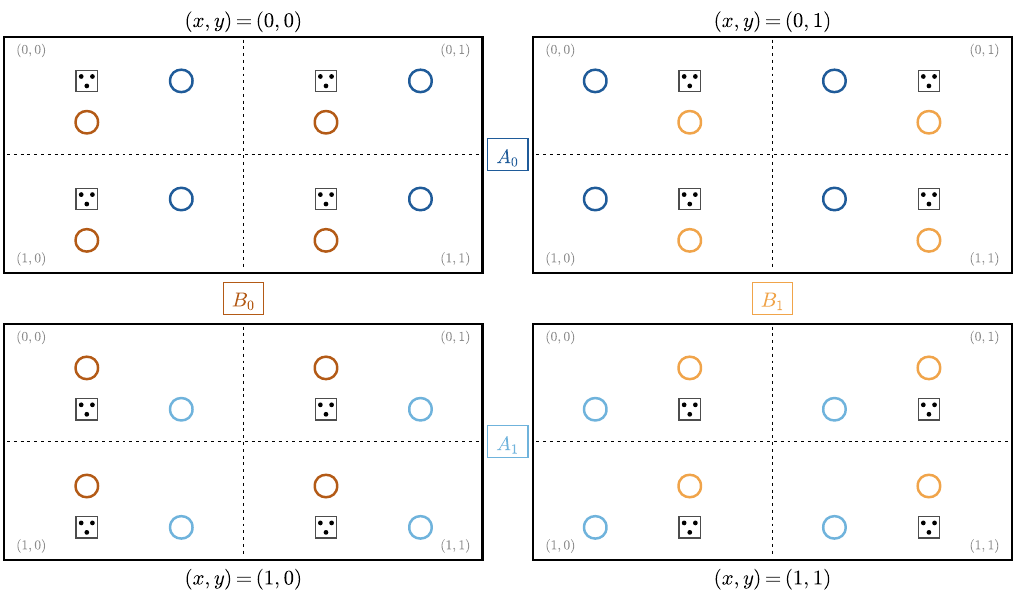}
\caption{\textbf{A localic ontological model of the Bell scenario, with symmetric glues.} The grid is that of Fig.~\ref{fig:standardOM_Bell}: the four contexts $(x,y)$ with Alice's setting by row and Bob's by column, and within each context the four sectors $L_{\Lambda^{xy}_{ab}}$ of Eq.~\eqref{eq:localic_sectors}, indexed in grey by the outcome pair $(a,b)$. Each sector holds three of the thirty-two records of the locale: one \emph{core} (square), which has points, drawn as dots inside it, and lies below its own context and no other; and two \emph{glues} (circles), which have no points, each lying below its own context and one of the two contexts neighbouring it. There are sixteen cores and sixteen glues, and every glue is drawn twice, once in each of the two contexts it lies below, on the side of its sector facing the other one; the two drawings of one glue therefore sit in the same sector of two neighbouring contexts, facing each other across the gap. The colour of a glue names the measurement its two contexts share, matching the label in the gap between them, which is coloured to match ($A_0$ and $A_1$ in dark and light blue, $B_0$ and $B_1$ in dark and light orange). Colour therefore names a pair of contexts and not an individual record. The four circles of one colour in a context are four distinct glues, one to a sector, and a glue is identified by its colour together with its sector and its context. In Fig.~\ref{fig:standardOM_Bell} two contexts sharing a measurement also share points, and those points are what carry the sameness of the measurement. Here they share glues, which carry it without having points at all. Diagonal contexts share no measurement and no record, so no nonzero record lies below three contexts. The figure shows the locale alone, before any weight or response functions. The drawing is schematic: the boxes are not disjoint subsets of a set, since each glue lies below both of the contexts whose boxes carry its two drawings, and a coframe with non-spatial parts admits no faithful picture on paper.}
\label{fig:localicOM_Bell}
\end{figure*}

\section{Bell's \emph{go} theorem}
\label{sec:go}

In the simple story, the web of records took the place of the plan as the common cause, with no spooky influence between Alice and Bob. At each record, Alice's answer depends on her own question alone and Bob's on his. Locality was never abandoned; what the story gave up was the single list answering all four measurements at once. This section makes that step of the story precise.  Section~\ref{sec:localic_local_causality} defines local causality in the localic ontological models framework of Sec.~\ref{sec:localic_framework}. Section~\ref{sec:family} then supplies the weights and the response functions that Sec.~\ref{sec:localic_bell} deferred. Every model in the resulting family of localic models with symmetric glues satisfies local causality, and their Bell--CHSH values run from the classical bound of Theorem~\ref{thm:chsh} to the algebraic maximum.  Section~\ref{sec:quantum_strategies} represents quantum strategies on the same locale, the strategy attaining the Tsirelson bound among them, which completes, for the Bell scenario, desideratum~(iii) of the introduction. The localic ontological models framework thus turns Bell's no-go theorem into a \emph{go} theorem.

\subsection{Local causality in the localic framework}
\label{sec:localic_local_causality}

Local causality rests on two principles, relativistic causality and Reichenbach's principle (Sec.~\ref{sec:local_causality}). Neither of these says whether an element of reality is a point of a set or a part of a locale, so both also apply to localic ontological models. The common cause is now the web of records of Sec.~\ref{sec:records}, weighted by the preparation, in place of $\Lambda$ with its epistemic state. It is fixed before any round is run and involves no setting, measurement independence holding by construction in Definition~\ref{def:localic_model}. A round realizes one record of the web, and local causality demands of that record what Definition~\ref{def:local_causality} demanded of~$\lambda$: once the record is completely specified, no correlation between the wings remains. Of the ingredients of Definition~\ref{def:localic_model}, only the response functions enter this demand, and clause~(iii) defines them at the members of $\mathcal{A}_{M}$ and nowhere else, here the family $\mathcal{A}_{xy}$ of each context.

\begin{definition}[(Localic) local causality]
\label{def:localic_local_causality}
A localic ontological model of the Bell scenario is \emph{locally causal} if
\begin{equation}
\label{eq:localic_factorizability}
q(a,b|x,y,L_\lambda)\;=\;q(a|x,L_\lambda)\,q(b|y,L_\lambda),
\end{equation}
for all $a,b,x,y$ and $L_\lambda\in\mathcal{A}_{xy}$, and if
\begin{equation}
\label{eq:localic_coherence}
q(a|x,L_\lambda)\;=\;q(a|x,L_{\lambda'})
\end{equation}
whenever $L_\lambda\in\mathcal{A}_{x0}$ and $L_{\lambda'}\in\mathcal{A}_{x1}$ overlap, $L_\lambda\wedge L_{\lambda'}\neq 0$, and likewise for Bob.
\end{definition}

Summing Eq.~\eqref{eq:localic_factorizability} over one party's outcome gives parameter independence, Eq.~\eqref{eq:parameter_independence}, now asked at a record, i.e., given the record, each party's outcome distribution depends on their own setting alone.\footnote{Factorizability is equivalent to parameter independence together with outcome independence, $q(a|b,x,y,L_\lambda)=q(a|x,y,L_\lambda)$. With the deterministic responses we work with (Sec.~\ref{sec:localic_OMF}) the second is automatic, the joint response being concentrated on one outcome pair and therefore equal to the product of its own marginals, so parameter independence is the whole of what Eq.~\eqref{eq:localic_factorizability} asks. The same holds of Definition~\ref{def:local_causality}, which is why Bell's 1964 argument could proceed from determinism together with locality (Sec.~\ref{sec:local_causality}).} Equation~\eqref{eq:localic_coherence} extends the same demand to the records lying below two contexts without being members of both. Where a member of $\mathcal{A}_{x0}$ and a member of $\mathcal{A}_{x1}$ overlap, their common part is a record of both contexts, and Eq.~\eqref{eq:localic_coherence} asks that the two members, at which the responses are given, answer $A_x$ alike. The responses of two contexts that share Alice's setting therefore agree for $A_x$ on all that the two have in common, whereas the measurements performed on Bob's side in those contexts are different ones and their responses are unrelated. Definition~\ref{def:local_causality} needs no counterpart of Eq.~\eqref{eq:localic_coherence}, because there all four contexts have their responses at one and the same collection of ontic states.

The demand of local causality is thus the same as in Sec.~\ref{sec:local_causality}, restated for records. What changes is the kind of object it constrains, and with it which contexts it relates to one another. In Definition~\ref{def:local_causality} the constraint falls on a $\lambda$ lying below all four contexts at once, so satisfying it there relates the responses of each context to those of the other three. Here, Definition~\ref{def:localic_local_causality} relates two contexts only through the records lying below both of them, and does not relate at all two contexts sharing no record. Which contexts share a record follows from the axioms of Sec.~\ref{sec:pointlessness}, and how much weight the shared records carry is taken up in Sec.~\ref{sec:family}.

Local causality plays out differently at the two kinds of record of Sec.~\ref{sec:records}. A glue registers only part of a round, the identity of the measurement its two contexts share, and can be realized whichever of the two is performed. There, Eq.~\eqref{eq:localic_factorizability} asks that the answer to that measurement be one and the same, and this is where local causality constrains a model (Sec.~\ref{sec:family}). A core registers that its context was performed and which outcome pair it gave, the pair being the one the responses of that context assign to it. Since a core lies below one context, parameter independence is trivially satisfied at it. We recall that the record a round realizes, core or glue, is settled with the round. Fixed in advance are the records, the weight the preparation gives them and what each answers, while nothing prior to the round determines which record the round is at. To ask for an element of reality that settles this before the round is to ask for a property possessed without being recorded, which Sec.~\ref{sec:records} rejected.

Inserting Eq.~\eqref{eq:localic_factorizability} into the reproduction formula~\eqref{eq:localic_opdata}, the correlations of a locally causal localic model take the form
\begin{equation}
\label{eq:localic_locality}
    \wp(a,b|x,y) \;=\; \sum_{L_\lambda\in\mathcal{A}_{xy}} q(L_\lambda)\, q(a|x,L_\lambda)\,q(b|y,L_\lambda),
\end{equation}
the integral of Eq.~\eqref{eq:localic_opdata} being the sum over $\mathcal{A}_{xy}$ (Sec.~\ref{sec:localic_OMF}). Summing Eq.~\eqref{eq:locality} over $b$ leaves an average over $\Lambda$, which is the same set of ontic states for every context, weighted by the same $p(\lambda)$, so $y$ disappears from it and no-signalling follows. The same sum over $b$ in Eq.~\eqref{eq:localic_locality} is equally well defined, the members of $\mathcal{A}_{xy}$ being pairwise disjoint and the weight modular within a context, and it leaves the weight of the join of those members that answer $a$ for $A_x$. What it does not leave is an expression free of $y$, since $\mathcal{A}_{xy}$ and $\mathcal{A}_{xy'}$ are different collections of records. Whether Bob selects $y$ or $y'$ alters no answer Alice's records give, and yet it alters which records her average runs over, so the two averages need not agree. No-signalling does not follow from Eq.~\eqref{eq:localic_factorizability} in the way it followed from Eq.~\eqref{eq:factorizability}.

This comes as no surprise in a reality made of records. Since records register the procedures performed, the statistics of two contexts are averages over different collections of records. Alice's marginals $\wp(a|x,y)$ and $\wp(a|x,y')$, which coincide when no-signalling holds, have in common only the glues lying below both contexts. Each of these carries the same weight in both contexts, by measurement independence, and one answer for $A_x$, by parameter independence, and so contributes the same amount to both marginals. The difference between the two marginals therefore comes entirely from the unshared records, i.e., the cores and the other glues of each context, whose answers to $A_x$ local causality leaves unconstrained.

No causal influence has been hidden in that difference. On Alice's side nothing depends on Bob's setting: her response function has only her own setting and the record among its arguments, the weight carries no setting index, and the records themselves are given before any round is run (clause~(i) of Definition~\ref{def:localic_model}). So when the unshared records contribute equally as well, and Alice's marginal comes out the same at $y$ as at $y'$, it is not because an influence from Bob's setting has been cancelled---there is no such influence in the model. 

The agreement of Alice's two marginals comes instead from a constraint on how one preparation distributes its weight among the records of the four contexts, namely that Alice's wing have a weight of its own whichever measurement Bob performs, and likewise for Bob. We call it \emph{marginal no-signalling}, after the marginal weight it asks of each wing. It is a condition on the model, which the reproduction formula turns into the no-signalling of the statistics, Eq.~\eqref{eq:nosignalling}, and which a standard locally causal ontological model satisfies identically. We state it in App.~\ref{Appendix:Fine}, and the models of the next subsection meet it for every value of the parameters specifying them. 
Notice that the constraint of marginal no-signalling is not arbitrary. A dependence of Alice's marginal on Bob's setting would be a correlation with neither a direct cause nor a common cause, which Reichenbach's principle forbids, so relativistic causality and Reichenbach's principle demand the constraint of a localic model just as they demand local causality. Reproducing the no-signalling statistics costs no unjustified adjustment of the ontology, which is the no-fine-tuning requirement the introduction demanded of a satisfactory realist explanation.

Local causality imposes the same causal structure on records as it does on points---the one drawn in Fig.~\ref{fig:dags}. A directed acyclic graph settles what the relata are and which of them each depends on, an edge from one node to another meaning that the first is an argument of the probability of the second. The graph does not settle what sort of object sits at each node, nor where the weight and the responses of a context are defined, all of $\Lambda$ on the left-hand panel and the region of the context on the right-hand panel. The change made in Sec.~\ref{sec:localic_framework} is of exactly that kind, which is why the two panels have the same graph. The right-hand panel does not, however, carry a Markov condition~\cite{Pearl}, which would relate the graph to a single joint distribution over its nodes, since no completion of the weight is a probability measure (Theorem~\ref{thm:completions}). A counterpart of the condition is expected from the composition rule discussed in Sec.~\ref{sec:programme}.

\begin{figure}[t]
\centering
\includegraphics[width=\linewidth]{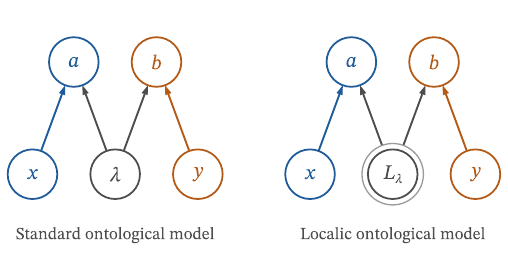}
\caption{\textbf{The causal structure of the Bell scenario, in both frameworks.} Each setting acts on its own wing, no edge joins the two wings, and no edge joins a setting to the common cause. The common cause differs between the panels: the ontic state is a point $\lambda\in\Lambda$ on the left and a record $L_\lambda\in\mathsf{S}(L_\Lambda)$ on the right, drawn with a second, outer circle to mark that a record can have parts where a point has none. Each panel is the causal structure of a locally causal model, local causality being imposed as Definition~\ref{def:local_causality} on the left and as Definition~\ref{def:localic_local_causality} on the right. Read as a classical causal model, the graph on the left entails no-signalling. On the right the weight carries no setting index and each record answers its own party's setting alone, but which records a round can realize is fixed by the context performed, and no-signalling then needs a further condition on the weight, marginal no-signalling, stated in App.~\ref{Appendix:Fine} and met by the models of Sec.~\ref{sec:family}.}
\label{fig:dags}
\end{figure}

\subsection{A family of locally causal models beyond the Bell bound}
\label{sec:family}

We now complete the localic ontological models with symmetric glues of Sec.~\ref{sec:localic_bell} by giving, for these models, the two ingredients of Definition~\ref{def:localic_model} that the locale and the records do not determine: the response functions and the epistemic state (the weight). We choose the two to deliver a family that is locally causal, reproduces no-signalling statistics, and takes every Bell--CHSH value from the classical bound to the algebraic maximum. Proposition~\ref{prop:realization} holds for any choice of the two ingredients, so the axioms of Sec.~\ref{sec:pointlessness} are secured in advance and nothing below has to be checked against them again.

Recall the records of those models, drawn in Fig.~\ref{fig:localicOM_Bell}. Each context has twelve: for each of its four outcome pairs a \emph{core}, lying below that context alone, and two \emph{glues}, each shared with a neighbouring context and having no points. The three records carrying one label make the sector of Eq.~\eqref{eq:localic_sectors}; the twelve make up $\mathcal{A}_{xy}$, the records at which Definition~\ref{def:localic_model} requires a response; and sixteen cores together with sixteen glues make thirty-two in total.

In this subsection, we choose response functions that give the same answer as the label. Specifically, we take each record to answer, in each context it lies below, the outcome pair labelling its sector there,
\begin{equation}
\label{eq:familyresponses}
q\bigl(a',b'\,|\,x,y,L_\lambda\bigr)\;=\;\delta_{a'a}\,\delta_{b'b} ,
\end{equation}
for every $L_\lambda\in\mathcal{A}_{xy}$ labelled $(a,b)$, so that a core $K^{xy}_{ab}$ answers $(a,b)$ in $(x,y)$ and a glue $G^{xy\leftrightarrow x'y'}_{ab}$ answers $(a,b)$ in both of its contexts. The join of the members of $\mathcal{A}_{xy}$ answering $(a,b)$ is then the sector $L_{\Lambda^{xy}_{ab}}$ itself, and since we consider deterministic responses, Eq.~\eqref{eq:localic_opdata} returns the weight of that sector, $\wp(a,b|x,y)=q(L_{\Lambda^{xy}_{ab}})$ (Sec.~\ref{sec:localic_OMF}). The statistics are read directly off the weight, which is what motivates the choice. Nothing forces it, of course. Labels are labels: what a record answers is what the response functions of a context assign to it (Sec.~\ref{sec:localic_bell}), and Sec.~\ref{sec:quantum_strategies} uses records that answer differently from their labels.

Local causality holds in every model of the family, for any weight that populates the records. Section~\ref{sec:localic_local_causality} showed that with deterministic responses the whole of Eq.~\eqref{eq:localic_factorizability} is parameter independence, Eq.~\eqref{eq:parameter_independence}, and that at a record lying below two contexts this demands the same response in both to the measurement those two contexts share. Our responses meet the demand by inspection. A core lies below one context and faces no demand at all. A glue lies below two, which share exactly one measurement, and it answers one and the same label in both, so the party performing the shared measurement returns the same outcome in both contexts. The other party's two answers belong to two different measurements, which Eq.~\eqref{eq:parameter_independence} leaves unrelated. Equation~\eqref{eq:localic_coherence} asks nothing further, given that two members of different families are either one and the same glue or disjoint.

What remains to choose is the epistemic state, which by Definition~\ref{def:localic_model} is the weight a preparation gives to the records. We take one that treats the four contexts alike, giving one weight to the two sectors whose label satisfies $a\oplus b=xy$, and so wins the CHSH game of Eq.~\eqref{eq:pwin} in that context, and another to the two that lose. Such a state is fixed on the sectors by a single number: writing $w$ for the weight of a winning sector and $\ell$ for that of a losing one, normalization within a context reads $2w+2\ell=1$ and leaves $\ell=\tfrac12-w$. We parametrize that number by $\eta$, so that $\eta=0$ places the model at the classical bound and $\eta=1$ at the algebraic maximum:
\begin{equation}
\label{eq:wl}
w:=\frac{3+\eta}{8},\qquad \ell:=\frac{1-\eta}{8},\qquad \eta\in[0,1].
\end{equation}

A glue sits in the sector carrying its label in each of its two contexts, and the two contexts need not agree on whether that label wins. Three of the four have $xy=0$ and win on $a=b$, and the fourth wins on $a\neq b$. The pairs $00\leftrightarrow01$ and $00\leftrightarrow10$ therefore join contexts that agree, and a glue of theirs can sit in a winning sector of both; the pairs $01\leftrightarrow11$ and $10\leftrightarrow11$ join contexts that disagree, and every glue of theirs sits in a winning sector of one context and a losing sector of the other. Two numbers thus suffice, and we set
\begin{equation}
\begin{split}
\label{eq:glueweights}
&q\bigl(G^{00\leftrightarrow 01}_{ab}\bigr)=q\bigl(G^{00\leftrightarrow 10}_{ab}\bigr)=u\,\delta_{ab},
\\ &
q\bigl(G^{01\leftrightarrow 11}_{ab}\bigr)=q\bigl(G^{10\leftrightarrow 11}_{ab}\bigr)=v\,(1-\delta_{ab}),
\end{split}
\end{equation}
placing $u$ on the labels that win in both contexts of the first two pairs, and $v$ on the labels that win in $(1,1)$. Each glue receives one value, which serves both of the contexts below which it lies (Sec.~\ref{sec:localic_OMF}).

Each core then carries what its sector needs beyond the two glues that sector holds,
\begin{equation}
\label{eq:coreweights}
\begin{array}{c|cccc}
q\bigl(K^{xy}_{ab}\bigr) & (0,0) & (0,1) & (1,0) & (1,1) \\ \hline
a=b & w-2u & w-u & w-u & \ell \\
a\neq b & \ell & \ell-v & \ell-v & w-2v
\end{array}
\end{equation}
and Fig.~\ref{fig:family_weights} shows the whole allocation on the grid of Fig.~\ref{fig:localicOM_Bell}. These are non-negative exactly when
\begin{equation}
\label{eq:ranges}
0\le u\le \frac{w}{2},\qquad 0\le v\le \ell .
\end{equation}
The bound on $u$ comes from $(0,0)$, where two glues of weight $u$ share a winning sector of weight $w$. The bound on $v$ comes from $(0,1)$ and $(1,0)$, where a glue of weight $v$ has to fit inside a losing sector of weight $\ell$; since $\ell\le w/2$, this is tighter than what $(1,1)$ asks of the same glue.

Adding the weights of the three records of each sector gives the statistics. In the context $(0,0)$ a winning sector holds a core of weight $w-2u$ and two glues of weight $u$, and a losing sector holds a core of weight $\ell$ and two glues of weight $0$; the other three contexts work the same way, and in every case
\begin{equation}
\label{eq:family_table}
\wp(a,b|x,y)\;=\;q\bigl(L_{\Lambda^{xy}_{ab}}\bigr)\;=\;
\begin{cases} w, & a\oplus b=xy, \\[2pt] \ell, & \text{otherwise.}\end{cases}
\end{equation}
Normalization within each context is $2w+2\ell=1$ and needs no compensating negative contribution anywhere. Each value of $a$ meets one winning and one losing sector, and likewise each value of $b$, so both marginals equal $w+\ell=\tfrac12$ in every context.

The correlators are $E_{00}=E_{01}=E_{10}=2(w-\ell)=(1+\eta)/2$ and $E_{11}=-(1+\eta)/2$, so that
\begin{equation}
\label{eq:family_S}
S\;=\;2+2\eta .
\end{equation}
As $\eta$ runs over $[0,1]$, Eq.~\eqref{eq:family_table} traces the segment joining the statistics at the classical bound, where $w=3/8$, to those of the Popescu--Rohrlich box of Sec.~\ref{sec:bell_nogo}, where $w=1/2$, and $S$ runs from $2$ to $4$ with it. At $\eta=\sqrt2-1$ it takes the Tsirelson value $2\sqrt2$, and Eq.~\eqref{eq:family_table} gives there exactly the statistics of the optimal quantum strategy, as Sec.~\ref{sec:quantum_strategies} shows.

Among the localic models of the family we set aside those in which the glues carry no weight. Suppose every record lying below two contexts had weight zero, $u=v=0$. The records of positive weight below a context would then lie below that context alone, and distinct contexts would have disjoint sets of them, each summing to one. Nothing would relate the statistics of the four contexts. Any four distributions, one to a context, signalling ones included, would be reproduced, and the identifications the experiment itself makes, that $A_0$ is one and the same measurement in two of the contexts, would have no counterpart entering the statistics. Such a model corresponds to four worlds, one to a context, joined only by records that carry no weight. It restates the statistics of each context on its own and says nothing about how the four hang together---which is what a realist account of a Bell scenario has to supply---and a value $S>2$ obtained there would leave local causality with nothing to constrain.\footnote{The case in which just one of $u$ and $v$ vanishes is set aside for a similar reason. The context $(0,0)$ has $(0,1)$ and $(1,0)$ as neighbours and shares $2u$ with each, while $(1,1)$ has the other two and shares $2v$ with each. At $u=0$ the context $(0,0)$ therefore shares no record of positive weight with any other. Local causality then leaves its statistics free with respect to those of the other three, and its contribution to $S$ can be set at will; at $v=0$ the same holds of $(1,1)$.} Zero glue weight is the opposite extreme from the standard ontological models framework, where two contexts share the whole of their weight and the defect of Theorem~\ref{thm:completions} vanishes (Sec.~\ref{sec:weight}). The models of Bell's \emph{go} theorem accordingly give positive weight to records below every neighbouring pair of contexts.

\begin{theorem}[Bell's \emph{go} theorem]
\label{thm:go}
For every Bell--CHSH value $S\in[2,4]$ there is a localic ontological model of the Bell scenario, satisfying the axioms of Sec.~\ref{sec:pointlessness} and giving positive weight to records below every neighbouring pair of contexts, that is locally causal and reproduces no-signalling statistics with that value of~$S$.
\end{theorem}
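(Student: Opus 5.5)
The proof will be by explicit construction, using the family of localic models with symmetric glues assembled in Sec.~\ref{sec:family}. Given $S\in[2,4]$, I set $\eta:=(S-2)/2\in[0,1]$. I then define $w,\ell$ by Eq.~\eqref{eq:wl}, and the glue and core weights by Eqs.~\eqref{eq:glueweights} and~\eqref{eq:coreweights}. For the glue parameters I fix a choice that is strictly positive for every $\eta$, for instance $u=w/4$ and $v=\ell/2$, and I treat the endpoint $\eta=1$ separately below. The locale is the one of Eq.~\eqref{eq:realization}, and the responses are those of Eq.~\eqref{eq:familyresponses}. By Proposition~\ref{prop:realization}, Relationality, Exclusivity and Completeness then hold whatever the weights and responses, so the axioms of Sec.~\ref{sec:pointlessness} need no further checking.

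Next I verify the clauses of Definition~\ref{def:localic_model} for the weight. Each context region $L_{\Lambda^{xy}}$ is the join of twelve pairwise disjoint records. I take $\mathcal{L}_{xy}$ to be the Boolean algebra of joins of these twelve, which contains $0$, the region itself, and its meets with the other regions (joins of glues, by Eq.~\eqref{eq:localic_meets}). On $\mathcal{L}_{xy}$ I define $q$ additively from the record weights. This makes $q$ automatically monotone and modular there. Each glue receives one value, so the weights are consistent on shared records. Normalization follows from Eq.~\eqref{eq:family_table} together with $2w+2\ell=1$. The responses are deterministic, so Eq.~\eqref{eq:localic_opdata} reduces to sector weights, which reproduces Eq.~\eqref{eq:family_table}. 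The correlators computed in Sec.~\ref{sec:family} then give $S=2+2\eta$, which is the prescribed value.

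For local causality, Eq.~\eqref{eq:localic_factorizability} is immediate because the responses are deterministic. Eq.~\eqref{eq:localic_coherence} holds because two overlapping members of $\mathcal{A}_{x0}$ and $\mathcal{A}_{x1}$ must be the same glue (distinct records are disjoint), and a glue answers its label, hence the same $a$, in both contexts; the same argument applies for Bob. For no-signalling, both marginals equal $w+\ell=\tfrac12$ in every context. I would still check marginal no-signalling as stated in App.~\ref{Appendix:Fine}. I expect it to follow from this uniform marginal structure, since each wing receives weight $\tfrac12$ per outcome whatever the other setting.

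The main obstacle is the positivity requirement at the endpoint $\eta=1$. There $\ell=0$, so Eq.~\eqref{eq:ranges} forces $v=0$, and with it the context $(1,1)$ loses positive weight on its shared records. The symmetric-glue family with label-answering responses therefore cannot meet the requirement at $S=4$. My first attempt at this point is to relax the choice of labels for the $01\leftrightarrow11$ and $10\leftrightarrow11$ glues. I would place their weight on labels $a=b$ in the contexts $(0,1)$ and $(1,0)$ while letting the glue answer, in $(1,1)$, a pair that wins there, using responses that differ from the labels, as Sec.~\ref{sec:quantum_strategies} allows. This must be done so that the shared measurement still receives the same answer; for example, the glue shared via $B_1$ answers $b$ for $B_1$ in both contexts and $a$ is flipped on Alice's side. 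That is consistent because Alice's two measurements, $A_0$ and $A_1$, differ. If this works, I would use that modified assignment uniformly for all $\eta$. Otherwise I would settle for the range $S\in[2,4)$ via the family and handle $S=4$ with the modified responses.
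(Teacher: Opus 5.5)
Your proposal follows the paper's proof essentially step by step. For $S\in[2,4)$ you use the symmetric-glue family with label-answering responses and positive $u,v$. At $S=4$ you use exactly the paper's repair: the glues of the two pairs containing $(1,1)$ answer a winning pair in both of their contexts while returning one outcome for the shared measurement. What your sketch leaves unstated is the weight at $S=4$. The paper takes it to be $t\in(0,\tfrac14]$ on two glues of every neighbouring pair, $\tfrac12-2t$ on the two winning cores of each context, and $0$ elsewhere. The uniform variant you contemplate does work: with those responses the bound $v\le\ell$ of Eq.~\eqref{eq:ranges} becomes $v\le w/2$, so positive glue weights exist for every $\eta\in[0,1]$, and the paper's $S=4$ model is the member at $\eta=1$.
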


\begin{proof}
For the values $S\in[2,4)$ everything is already in place. The localic ontological models with symmetric glues defined above have non-negative weights by Eq.~\eqref{eq:ranges}, and the twelve of each context sum to~${2w+2\ell=1}$. Each glue receives one value, serving both contexts below which it lies, and the responses are deterministic and normalized at each member of~$\mathcal{A}_{xy}$. The twelve records of a context are pairwise disjoint and $\mathsf{S}(L_\Lambda)$ is distributive, so summation over them extends the weight to the lattice they generate. On that lattice the weight is monotone and modular, and it is normalized at the context region, which is what clause~(ii) of Definition~\ref{def:localic_model} asks; the same lattice contains the meets with the other context regions, by Eq.~\eqref{eq:localic_meets}. The axioms of Sec.~\ref{sec:pointlessness} hold by Proposition~\ref{prop:realization}, local causality by the verification of parameter independence given above, and every neighbouring pair shares $2u$ or $2v$, both positive.  Equation~\eqref{eq:family_table} gives both marginals the value $\tfrac12$ in every context, which is Eq.~\eqref{eq:nosignalling}, and Eq.~\eqref{eq:family_S} gives $S=2+2\eta$. Letting $\eta$ run over $[0,1)$ attains every value in $[2,4)$.

The localic models above reproduce the case $S=4$ only with $\eta=1$, where $\ell=0$ and Eq.~\eqref{eq:ranges} forces $v=0$, which is the case we have just set aside. The need for $v=0$ is not an artefact of how we parametrized the weight. It arises for every weight on this locale, as long as each record answers the outcome pair labelling its sector. To see this, recall that the weight of a sector is then the probability of the pair labelling it, so that a record answering a pair of probability zero carries no weight. In the Popescu--Rohrlich statistics the pairs with $a\neq b$ have probability zero in $(0,1)$ and in $(1,0)$, and those with $a=b$ have probability zero in $(1,1)$. A glue on a pair of contexts containing $(1,1)$ carries one label into both of them, and whichever label it carries is null in one of the two. Every such glue therefore has weight zero, and both pairs containing $(1,1)$ are left unpopulated.

Letting a record answer something other than its label removes the obstruction. The label of a record is not an outcome commitment, and what a record answers in a context is determined by the response functions of that context (Sec.~\ref{sec:localic_bell}). On the pair $01\leftrightarrow 11$ the shared measurement is $B_1$, so parameter independence pins Bob's answer across the two contexts and leaves Alice's two free; let one glue of that pair answer $(0,0)$ in $(0,1)$ and $(1,0)$ in $(1,1)$, and a second answer $(1,1)$ and $(0,1)$. Each answers a winning pair in each of its two contexts, and each returns one and the same outcome for $B_1$. Do the same on $10\leftrightarrow 11$ with the parties exchanged: the shared measurement is $A_1$, so Alice's answer is pinned and Bob's two are free. On the two pairs containing $(0,0)$ the glues with $a=b$ already answer a winning pair in both contexts, and the cores answer their labels throughout. Give two glues of every pair the weight $t$, give the two cores of each context that answer a winning pair the weight $\tfrac12-2t$, and give the remaining records the weight $0$. Every context weighs $4t+2(\tfrac12-2t)=1$, every neighbouring pair shares $2t$, the statistics are $\wp(a,b|x,y)=\tfrac12\,\delta_{a\oplus b,\,xy}$ with both marginals $\tfrac12$, and $S=4$. For $0<t\le\tfrac14$ this model is locally causal, shares positive weight on every neighbouring pair, and satisfies the axioms of Sec.~\ref{sec:pointlessness} by Proposition~\ref{prop:realization}.
\end{proof}

The family of localic models above reproduces the statistics of the Bell scenario (Eq.~\eqref{eq:family_table}) with no fine-tuning. Their causal structure is the standard local one. The record a round realizes screens off both outcomes, each outcome depends on its own party's setting alone, and no setting is an argument of any weight (Fig.~\ref{fig:dags}). They do not posit causal influences that are not reflected in the operational statistics---influences whose effects stay hidden only by setting the parameters to particular values.\footnote{A localic ontological model can also be fine-tuned (though the models of this family are not). Take any member with $u+v\le\ell$ and let the glue $G^{00\leftrightarrow 01}_{00}$ answer $(0,0)$ in $(0,0)$ and $(1,0)$ in $(0,1)$, so that Alice returns a different outcome for $A_0$ according to Bob's setting, an arrow from Bob's setting to Alice's outcome absent from the models above. Resetting two cores of $(0,1)$, $K^{01}_{00}$ to $w$ and $K^{01}_{10}$ to $\ell-v-u$, leaves Eq.~\eqref{eq:family_table} and both marginals unchanged, so the arrow leaves no trace in the statistics. The arrow is what makes the model fine-tuned. An influence is posited, and the two cores take the values that hide it from the statistics.} The localic models are nonetheless not standard classical causal models~\cite{Pearl}, and cannot be recast as ones without changing their causal structure. Were the records the values of a single variable carrying a setting-independent distribution, Theorem~\ref{thm:chsh} would apply and $S$ could not exceed $2$. Localic ontological models thus provide a new framework for causal models, in which the web of records is a common cause of a new kind.

The non-modularity of the weight across incompatible contexts, established in general in Sec.~\ref{sec:weight}, shows itself in the family at the joins of the context regions. For two incompatible contexts $c$ and $c'$, modularity would give $\hat q(L_{\Lambda^{c}}\vee L_{\Lambda^{c'}})=2-g_{cc'}$, which exceeds $1$, while a weight valued in $[0,1]$ gives $1$; by Theorem~\ref{thm:completions} the modular identity therefore fails by the defect $1-g_{cc'}$. Here, that defect is $1$ on each of the two diagonal pairs, which share nothing, and $1-2u$ or $1-2v$ on the four neighbouring pairs. 
The glue weights thus fix the defect on each neighbouring pair, yet the Bell--CHSH value does not depend on the precise values they take. Equation~\eqref{eq:family_table} depends on $w$ and $\ell$ alone, $u$ and $v$ cancelling in every sector, so all the models of the family give the same statistics, and the same $S$, for all $u$ and $v$ allowed by Eq.~\eqref{eq:ranges}. The glues give local causality something to constrain, as the cases set aside above showed. How much they must carry is not decided by the statistics, which leave $u$ and $v$ free, but must be decided by some further principle, as we discuss in Sec.~\ref{sec:Discussion}.

Theorem~\ref{thm:go} also points to a new version of Fine's theorem. The latter, Theorem~\ref{thm:fine}, says that the statistics reproduced by a locally causal ontological model are exactly those obtained by marginalizing a single joint distribution over $A_0$, $A_1$, $B_0$ and $B_1$, and no such distribution exists beyond the Bell bound. Here a single weight on the records again yields the statistics of all four contexts, each by summing over the records of that context, and the statistics so obtained are the no-signalling ones instead of the local ones, once the records answering a measurement are asked to weigh alike in both contexts containing it. What has changed is the weight: it is defined on the records, not on global assignments, and it is not modular across incompatible contexts, so it is not a Kolmogorov distribution and carries no bound with it. Appendix~\ref{Appendix:Fine} makes the correspondence precise, stating and proving the \emph{localic Fine's theorem}.

The second half of the simple story of Sec.~\ref{sec:Intro} now has its mathematics. The story said that the chances of two incompatible contexts do not add into a single total, and that their total exceeds one by exactly what the two contexts do not share. What they share is fixed in the models we characterize here by the weight, Eq.~\eqref{eq:glueweights} giving a neighbouring pair the total $g_{cc'}=2u$ or $2v$, and Theorem~\ref{thm:completions} turns that number into the defect. The story said that without records the four contexts would be four unrelated descriptions with no shared past to account for the correlations, and that with a plan they would be a single description to which Bell's limit applies---the cases $u=v=0$ and $g_{cc'}=1$. It said, finally, that none of this calls for an influence between Alice's region and Bob's, which is local causality, holding for every model above, for any weight that populates the records. The simple story is thus realized, in Theorem~\ref{thm:go}, by explicit locally causal localic ontological models that achieve every value of $S$ from the classical bound to the algebraic maximum.

\begin{figure*}[t]
\centering
\includegraphics[width=0.95\linewidth]{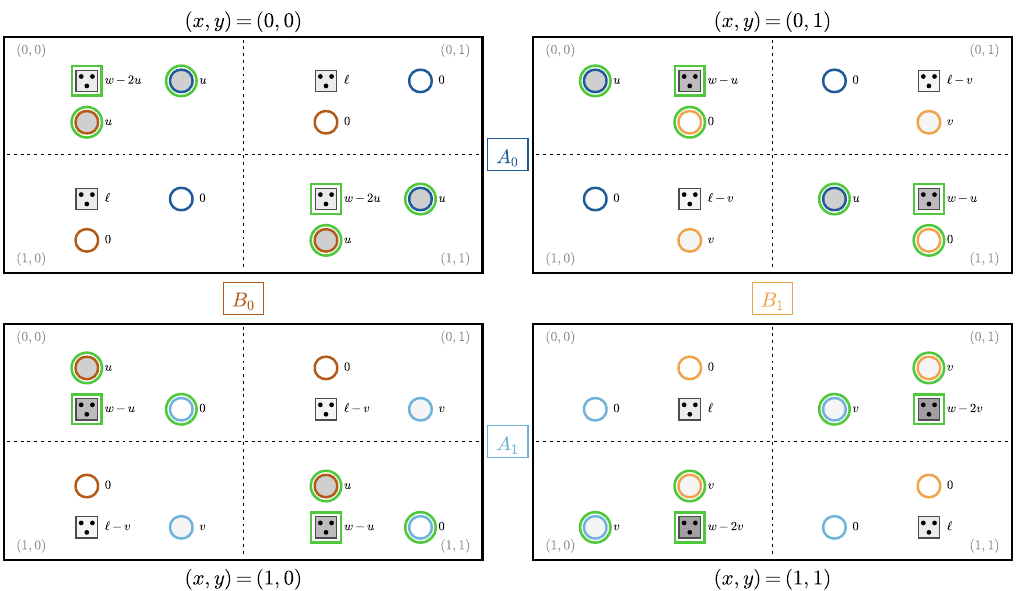}
\caption{\textbf{Weight and response functions of the localic ontological models with symmetric glues.} The figure supplies to Fig.~\ref{fig:localicOM_Bell} the weight each record carries, from Eqs.~\eqref{eq:glueweights} and~\eqref{eq:coreweights}, and the response functions of Eq.~\eqref{eq:familyresponses}, so that label and answer coincide at every record of this figure, which is a choice and not a rule. The weight is written beside the record and shown as its shade, white for a record of weight zero and darker the more weight the record carries; the symbols are exact for every $\eta$, $u$ and $v$, while the shades are drawn at $\eta=\sqrt2-1$, $u=\sqrt2/8$ and $v=\ell/2$. A record is highlighted in green when the response functions of its context send it to an outcome pair winning the game of Eq.~\eqref{eq:pwin}, so that the highlighted records of a context are the support of its winning outcome pairs. Notice that a glue carries one weight and so one shade in both of its drawings, but it answers in each of its two contexts separately, and an outcome pair that wins in one may lose in the other. The three records of a winning sector are all highlighted and add to $w$, and the three of a losing sector add to $\ell$, which is Eq.~\eqref{eq:family_table}; the highlighted records of a context add to $2w$, the weight with which it is won. At $\eta=\sqrt2-1$, where $2w=\cos^2(\pi/8)$, the figure is the localic representation of the optimal quantum strategy of Sec.~\ref{sec:quantum_strategies}.}
\label{fig:family_weights}
\end{figure*}

\subsection{Localic representations of quantum strategies}
\label{sec:quantum_strategies}

Theorem~\ref{thm:go} shows that there is a locally causal model at every Bell--CHSH value from the classical bound to the algebraic maximum, the Tsirelson value among them. A reader may still be unsatisfied with this, on the ground that we have not yet proven that quantum theory itself can be reproduced by localic models of this kind, and that if it cannot be, little of interest remains. Reproducing quantum theory would mean assigning a weight to each state and response functions to each measurement once and for all, and returning the Born rule in every scenario in which they appear. Nothing established so far does that, and neither does the localic Fine's theorem of App.~\ref{Appendix:Fine}, which builds a separate model for each statistics given in advance, and only in the Bell scenario. While a localic model of the whole of quantum theory is a matter for future work, we show here how three quantum strategies in the Bell scenario are represented on the locale of Sec.~\ref{sec:localic_bell}. A strategy is a state together with a measurement for each setting, with which Alice and Bob play the game of Eq.~\eqref{eq:pwin}, a round being won when $a\oplus b=xy$; we keep that language throughout. Each of the three is checked against the Born rule, and each shows something the other two do not: an optimal quantum strategy, a stabilizer strategy, and a strategy on a product state.

Let us start from a quantum strategy achieving the Tsirelson bound, which we refer to as \emph{the optimal quantum strategy}. Let the state be $|\Phi^+\rangle=(|00\rangle+|11\rangle)/\sqrt2$, Alice's measurements the Pauli observables $A_0=\sigma_z$ and $A_1=\sigma_x$, and Bob's $B_0=(\sigma_z+\sigma_x)/\sqrt2$ and $B_1=(\sigma_z-\sigma_x)/\sqrt2$. The correlators are $\langle A_xB_y\rangle=(-1)^{xy}/\sqrt2$ and both marginals are uniform, so the Born rule gives
\begin{equation}
\label{eq:optimal_born}
\wp_{\Phi^+}(a,b|x,y)\;=\;\frac14\Bigl(1+\frac{(-1)^{a+b+xy}}{\sqrt2}\Bigr),
\end{equation}
which equals Eq.~\eqref{eq:family_table} at $w=(2+\sqrt2)/8$, that is, at $\eta=\sqrt2-1$. The weight of Eqs.~\eqref{eq:glueweights} and~\eqref{eq:coreweights} at that value of $\eta$, read by the response functions of Sec.~\ref{sec:family}, therefore reproduces this strategy exactly and returns $S=2\sqrt2$ by Eq.~\eqref{eq:family_S}. Figure~\ref{fig:family_weights} is drawn at that value of $\eta$, so it is at once the picture of the family and the picture of this strategy, its shades being the weight $q_{\Phi^+}$ that represents the state. Each context is won with weight $2w=\cos^2(\pi/8)$.

The optimal strategy does not pick out a single model. Equation~\eqref{eq:family_table} depends on $w$ and $\ell$ and on nothing else, so Eq.~\eqref{eq:optimal_born} constrains $\eta$ alone, and every pair $(u,v)$ satisfying Eq.~\eqref{eq:ranges}, with both positive as in the models of Theorem~\ref{thm:go}, represents this state and these measurements equally well. Moving weight between the core of a sector and its two glues leaves the weight of the sector unchanged, and in the family every record answers its label (Eq.~\eqref{eq:familyresponses}), so each operational probability is the weight of a sector. Such a move changes only how much of the state's weight lies on records shared by two contexts, an amount the next strategy constrains.

Keeping the state and giving Bob Alice's measurements produces a strategy that stays well inside the bound of Theorem~\ref{thm:chsh}. We call it \emph{the stabilizer strategy}. With $A_0=B_0=\sigma_z$ and $A_1=B_1=\sigma_x$ on $|\Phi^+\rangle$ the Born rule gives
\begin{equation}
\label{eq:stab_born}
\wp^{\sigma}_{\Phi^+}(a,b|x,y)\;=\;
\begin{cases} \tfrac12\,\delta_{ab}, & x=y,\\[2pt] \tfrac14, & x\neq y,\end{cases}
\end{equation}
with $E_{00}=E_{11}=1$ and $E_{01}=E_{10}=0$, hence $S=0$. The two parties measure the same observable in the contexts $(0,0)$ and $(1,1)$, and $|\Phi^+\rangle$ is a $+1$ eigenstate of $\sigma_z\otimes\sigma_z$ and of $\sigma_x\otimes\sigma_x$ alike, which is what makes these the stabilizer measurements. The outcomes therefore agree with certainty in those two contexts and are uncorrelated in $(0,1)$ and $(1,0)$. The game is thus won with certainty in $(0,0)$, half the time in $(0,1)$ and $(1,0)$, and never in $(1,1)$, where $a\oplus b=xy$ asks for outcomes that differ. The optimal quantum and stabilizer strategies prepare the same state and differ only in Bob's measurements. We therefore represent $|\Phi^+\rangle$ by the same weight $q_{\Phi^+}$ in both, and ask more: that everything the two strategies share---Alice's two measurements included---be represented in the same way, so that Bob's response functions carry the entire difference between $S=2\sqrt2$ and $S=0$.

The weight is thereby given in advance. With deterministic responses each record answers a single outcome pair in each of its contexts, so the response function of a pair takes the value one at the records answering it and zero elsewhere. These records form its \emph{support}, and by Eq.~\eqref{eq:localic_opdata} their total weight is the probability of the pair. No record can therefore weigh more than the probability of the pair it answers. In the context $(0,1)$ every outcome pair has probability $\tfrac14$, while the core $K^{01}_{00}$ weighs $w-u$ and is the heaviest record there. Hence $w-u\le\tfrac14$, that is $u\ge w-\tfrac14=\sqrt2/8$, which with Eq.~\eqref{eq:ranges} confines $u$ to $\bigl[\sqrt2/8,\,(2+\sqrt2)/16\bigr]$. We set $u=\sqrt2/8$, where
\begin{equation}
\label{eq:stab_point}
w-2u=\ell,\qquad w-u=\ell+u=\tfrac14 ,
\end{equation}
and leave $v$ free in $(0,\ell)$; Figs.~\ref{fig:family_weights} and~\ref{fig:stabilizer} are drawn at $v=\ell/2$.\footnote{Setting $u=\sqrt2/8$ is a choice. With deterministic responses, Eq.~\eqref{eq:stab_born} is met exactly at those $(u,v)$ allowed by Eq.~\eqref{eq:ranges} that satisfy one of $u=\sqrt2/8$, $u+v=\tfrac14$ with $v\ge\ell/2$, or $u-v=\sqrt2/8$ with $v\le\ell/2$, a condition imposed by the contexts $(0,1)$ and $(1,0)$ alone. With the stochastic responses that Definition~\ref{def:localic_model} also permits, Eq.~\eqref{eq:stab_born} is met at every admissible $(u,v)$.}

Given the statistics to be reproduced, keeping Alice's response functions unchanged determines Bob's. Alice answers, at every record and in every context, the $a$ that labels it, which is what she does in Sec.~\ref{sec:family} and so in the optimal strategy. The statistics leave open what a record of weight zero answers, and we let it answer its label throughout. In the contexts $(0,0)$ and $(1,1)$ Eq.~\eqref{eq:stab_born} gives weight only to the outcome pairs with $a=b$, so at every record of positive weight there, glues included, Bob answers what Alice answers. Every glue of positive weight lies below one of those two contexts, and therefore answers a pair with $a=b$ there. Four of these glues share a measurement of Bob's, and for those parameter independence entails that Bob repeats the same answer in the neighbouring context.

Consider now the context $(0,1)$, in which the supports of the four outcome pairs weigh $\tfrac14$ each. Its records of positive weight labelled with $a=0$ are two cores, labelled $(0,0)$ and $(0,1)$ and weighing $w-u$ and $\ell-v$, and two glues, $G^{00\leftrightarrow01}_{00}$ weighing $u$ and $G^{01\leftrightarrow11}_{01}$ weighing $v$. The glue $G^{01\leftrightarrow11}_{01}$ shares $B_1$ with the context $(1,1)$, where Bob answers $0$, so parameter independence gives him the same answer here. Alice answers $0$ as well, so that glue lies in the support of the outcome pair $(0,0)$ and leaves $\tfrac14-v$ of that support to be found among the two cores and the glue $G^{00\leftrightarrow01}_{00}$. The core weighing $w-u$ is too heavy for what is left, since $w-u=\tfrac14$ by Eq.~\eqref{eq:stab_point} and $v>0$, while the core weighing $\ell-v$ and the glue $G^{00\leftrightarrow01}_{00}$ together weigh $(\ell-v)+u=\tfrac14-v$. The support of the outcome pair $(0,0)$ therefore holds that core and that glue, and the core weighing $w-u$ is the only record of positive weight answering the outcome pair $(0,1)$.

The same argument divides the records labelled with $a=1$. It divides the context $(1,0)$ as well, where the glue already determined by parameter independence is $G^{00\leftrightarrow10}_{00}$, of weight $u$. The support of the outcome pair $(0,0)$ needs $\tfrac14-u=\ell$ beyond that glue, the core weighing $w-u$ is again too heavy, and the core weighing $\ell-v$ and the glue $G^{10\leftrightarrow11}_{01}$ together weigh $(\ell-v)+v=\ell$. In every context, then, each glue of positive weight answers a pair with $a=b$, and at the eight cores of the contexts $(0,1)$ and $(1,0)$ Bob answers the reverse of the $b$ that labels the record. Parameter independence holds at every glue, and local causality with it, since Alice gives each glue its label in both of its contexts and, where the shared measurement is Bob's, his answer was carried across.

The model is drawn in Fig.~\ref{fig:stabilizer}. Coarse-graining the four outcomes of the context $(x,y)$ into won and lost, according to $a\oplus b=xy$, yields a two-outcome measurement whose winning response function is $\sum_{a\oplus b=xy}q(a,b|x,y,L_\lambda)$. With deterministic responses this function takes the value one at some of the twelve records and zero elsewhere, its support is what the figure highlights in green, and the context is won with the total weight of the highlighted records.

A record is drawn in the sector it belongs to, and answers the label of that sector unless a second pair is written in grey beneath its weight, in which case it answers that one. In Fig.~\ref{fig:family_weights} no grey pair appears, since there every record answers its label, and the winning support is accordingly the union of the two winning sectors. In Fig.~\ref{fig:stabilizer} twenty grey pairs appear, and the winning support may then take part of a sector but not all of it. In the context $(0,1)$, for instance, it holds the two glues of the sector labelled $(0,0)$ but not that sector's core, which carries $(0,1)$ in grey and so answers a losing pair. A record of weight zero answers its label too, and by Eq.~\eqref{eq:glueweights} the empty glues lying below the context $(0,0)$ are labelled with $a\neq b$ and are highlighted nowhere. Those below $(1,1)$ are labelled with $a=b$ and are highlighted in the contexts $(0,1)$ and $(1,0)$ but not in $(1,1)$.

Reading the four contexts in turn shows where the strategy gains and where it loses. In the context $(0,0)$ the measurements return only $a=b$, which wins there. The support of the outcome pair $(0,0)$ holds the three records of the sector labelled $(0,0)$ together with the core of the sector labelled $(0,1)$, which carries $(0,0)$ in grey, and the support of the outcome pair $(1,1)$ is built the same way from the sectors labelled $(1,1)$ and $(1,0)$. Each of the two weighs $(w-2u)+2u+\ell=w+\ell=\tfrac12$, and between them they hold every record of positive weight, the four left unhighlighted carrying none, so the context is won with weight $1$.

In the contexts $(0,1)$ and $(1,0)$ each of the four supports weighs $\tfrac14$: the two winning ones hold a core and one glue from each neighbour, $(\ell-v)+u+v=\ell+u=\tfrac14$, and the two losing ones a single core of weight $w-u=\tfrac14$, once records of weight zero are set aside. Eight records are highlighted there and four are not, and of those four the two carrying weight are the heaviest on the panel, so eight highlighted markers carry half the weight and the context is won with weight $\tfrac12$. In the context $(1,1)$ the measurements return only $a=b$ once more, and there $a=b$ loses. The six records answering $(0,0)$ weigh $\ell+(w-2v)+2v=\tfrac12$, the six answering $(1,1)$ weigh the other half, and the winning response function takes the value one nowhere, so nothing at all is highlighted. Every round in that context yields an outcome, and the outcome always loses. The four contexts are won with weight $1,\tfrac12,\tfrac12,0$, averaging $\tfrac12$ and returning $S=0$ through Eq.~\eqref{eq:pwin}.

Figures~\ref{fig:family_weights} and~\ref{fig:stabilizer} carry the same shading and differ only in what the records answer. One state, one locale, one weight, and one pair of response functions for Alice: between $S=2\sqrt2$ and $S=0$ there is no difference at all in the ontological representation of the state, nor in that of the two measurements the strategies share. The whole of the difference is in what Bob's measurements make of the same records. The state plants the weight and the measurements harvest it, as the introduction put it, and here one planting is harvested twice. The optimal measurements take $\cos^2(\pi/8)$ of it from every context, the stabilizer measurements all of it from $(0,0)$, half from each of $(0,1)$ and $(1,0)$, and none at all from $(1,1)$.

The last strategy we consider is what we call \emph{the product state strategy}. It keeps the measurements of the optimal strategy and changes the state. With $|\Psi\rangle=|00\rangle$, Alice's outcome is $0$ with certainty at setting $0$ and uniform at setting $1$, Bob's is $0$ with probability $2w$ at either setting, and the two are independent, so $E_{00}=E_{01}=1/\sqrt2$ and $E_{10}=E_{11}=0$, giving $S=\sqrt2$. On the locale of Fig.~\ref{fig:localicOM_Bell} the weights are
\begin{equation}
\label{eq:product_weight}
q_{\Psi}\bigl(K^{xy}_{ab}\bigr)=\wp_{\Psi}(a,b|x,y),\qquad q_{\Psi}(G)=0
\end{equation}
for every glue $G$, and every record answers its label. These are the response functions the optimal strategy already gave to these four measurements, so between that strategy and this one only the weight has changed, and it now populates no glue. Each context is normalized, local causality holds for the reason it holds in Sec.~\ref{sec:family}, and each sector weighs what the Born rule asks of it, the cores alone carrying the statistics. As in Fig.~\ref{fig:family_weights}, no record of Fig.~\ref{fig:productstate} departs from its label, so the winning support of every context is again the union of two whole sectors. The four contexts are won with weight $2w,2w,\tfrac12,\tfrac12$, averaging to $\tfrac12+\sqrt2/8$ and returning $S=\sqrt2$.

This model is not among those of Theorem~\ref{thm:go}, which give positive weight below every neighbouring pair. The framework does not forbid such a model, and the theorem simply makes no claim about it, for the reason given in Sec.~\ref{sec:family}. Where no shared record carries weight, nothing relates the four contexts, and local causality, which holds at every record here as elsewhere, has no shared weight through which to reach the statistics. Nothing is lost by leaving such models aside. The statistics of $|\Psi\rangle$ come from a single distribution over the outcomes of all four measurements, so by Theorem~\ref{thm:fine} a standard locally causal ontological model reproduces them as well; the localic account and the standard one agree on every prediction and differ in what they posit. A localic model with positive weight below every neighbouring pair reproduces them too, as it does every no-signalling statistics (Theorem~\ref{thm:localic_fine}), so asking for positive shared weight leaves out representations, such as the one above, and none of the no-signalling statistics.

The three strategies treated here do not answer the objection that opened this subsection, as was said at the outset. Their role is to indicate concretely how paradigmatic quantum strategies can be represented by locally causal localic ontological models, and how much such representations already constrain. One locale carried the optimal strategy at the Tsirelson value, a stabilizer strategy on the same state, and a product state under the same measurements as the first, each checked against the Born rule. Unlike the models of the localic Fine's theorem, the first two representations were not built one statistics at a time. The first strategy fixed the sector weights of $q_{\Phi^+}$; the second, served by the same weight and the same response functions for Alice and with deterministic responses, confined the glue weight $u$ to the interval $[\sqrt2/8,(2+\sqrt2)/16]$; and at $u=\sqrt2/8$ Bob's response functions were forced, not chosen. Rigidity of that kind will shape how a locally causal localic ontological model of quantum theory is built.

\begin{figure*}[t]
\centering
\includegraphics[width=0.95\linewidth]{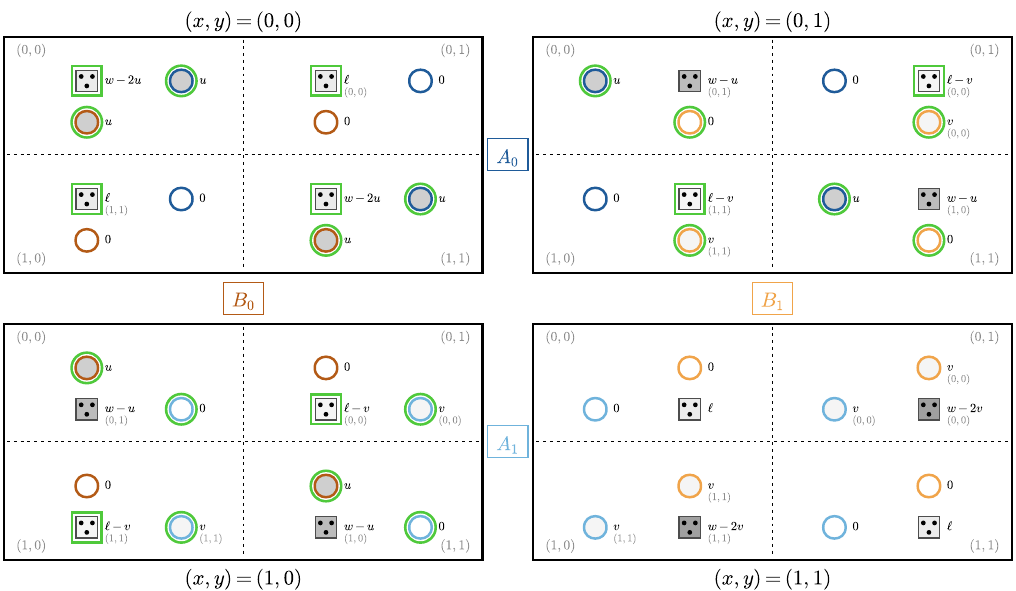}
\caption{\textbf{The stabilizer strategy.} Symbols, shades and geometry are those of Fig.~\ref{fig:family_weights}, drawn at the same $\eta=\sqrt2-1$, $u=\sqrt2/8$ and $v=\ell/2$: one weight on one locale, with only the response functions changed. A record answers the label of the sector it is drawn in, unless a pair is printed in grey beneath its weight, in which case it answers that pair. Twenty grey pairs appear, and every one of them departs in Bob's outcome alone. Highlighting follows the rule of Fig.~\ref{fig:family_weights}, applied context by context to these answers. The four contexts are won with weight $1,\tfrac12,\tfrac12,0$, and $S=0$.}
\label{fig:stabilizer}
\end{figure*}

\begin{figure*}[t]
\centering
\includegraphics[width=0.95\linewidth]{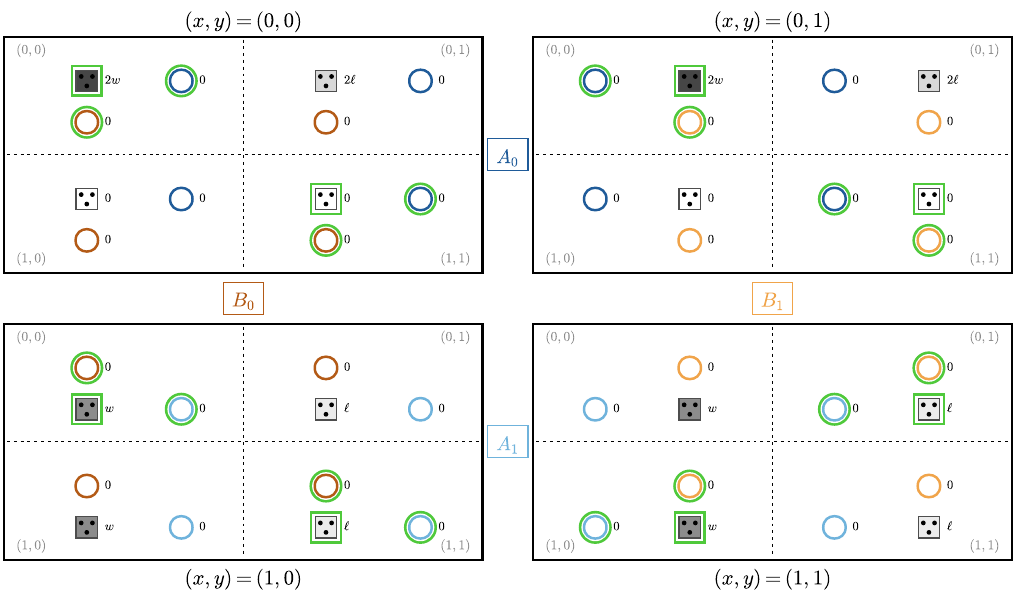}
\caption{\textbf{The product state strategy.} The locale, the symbols and the conventions are those of Figs.~\ref{fig:localicOM_Bell} and~\ref{fig:family_weights}. The response functions are the ones of Fig.~\ref{fig:family_weights}. Only the state has changed, so every record answers its label here too and the highlighting falls in the same places. Every glue is white, the cores alone carrying the statistics of Eq.~\eqref{eq:product_weight}. Records of weight zero are drawn and not deleted, the locale here being the locale of the other two figures; what changes with the state is which of its records are populated. The four contexts are won with weight $2w,2w,\tfrac12,\tfrac12$, and $S=\sqrt2$.}
\label{fig:productstate}
\end{figure*}

\section{Discussion}
\label{sec:Discussion}

\subsection{Redefining realism}
\label{sec:takeaway}

Bell's theorem assumes a notion of realism, mathematically represented by the standard ontological models framework, under which no account of the quantum statistics satisfies local causality. The localic ontological models framework represents a different notion of realism, under which, in the Bell scenario, such an account exists. Its elements of reality are no longer points of a measurable space---individual units defined by answering every measurement at once. They are sublocales of a locale, the records---relational units defined as modes of restriction of the whole. Records carry the sameness of a measurement across two incompatible contexts without those pairwise identities composing into an assignment across all of them. The common cause becomes a web of such records, and the models of Theorem~\ref{thm:go} keep the standard local causal structure while the Bell--CHSH value runs from the Bell bound to the algebraic maximum.

A crucial insight that made the result possible is that local causality need not be tied to the standard notion of realism. The Bell bound requires one collection of ontic states to serve all four contexts, each ontic state carrying a response to every measurement. Local causality is then the factorizability condition of Eq.~\eqref{eq:locality} over that collection, and Theorem~\ref{thm:chsh} derives the bound from the condition and the single collection together. In the simple story each such ontic state was a plan: one list answering every question either party might be asked, all at once, across contexts no round can jointly realize. Yet neither relativistic causality nor Reichenbach's principle, the two principles from which local causality is derived, asks for such a list. They demand that no outcome depend on the distant setting and that the correlation between the outcomes have a common cause in the shared past, conditioning on which it disappears. A web of records supplies the shared past without supplying the list, and both demands continue to hold, leading again to a factorizability condition (Definition~\ref{def:localic_local_causality}).

Theorem~\ref{thm:go} is proved by building localic models obeying specific constraints. They satisfy the three axioms of Sec.~\ref{sec:pointlessness} (Proposition~\ref{prop:realization}), which ground the sameness of a measurement shared by two contexts in a part the two have in common, and which require a point to report an outcome of the context it lies in and none of any context incompatible with that one. From the axioms a first key feature of these localic models follows: point-freeness. What two incompatible contexts share must exist and have no point (Theorem~\ref{thm:pointless}), which no space can provide. The second key feature is that the probability theory has to depart from Kolmogorov's. Each context is normalized on its own and nothing fixes a value at the join of two of them, so wherever two contexts share less than the whole of their weight, no account of the weight is both modular and valued in $[0,1]$. Staying inside $[0,1]$ costs the modular identity exactly what the two contexts do not share (Theorem~\ref{thm:completions}). 

Apart from these two features, the localic framework behaves like the standard one. It applies to any operational theory, and its definition contains the standard framework as the special case in which every procedure lives in the whole locale, the weight is defined on one and the same lattice of records for every measurement, and every nonzero record has a point (Table~\ref{tab:frameworks}), a case the axioms exclude wherever two measurements are incompatible. On one locale, Sec.~\ref{sec:quantum_strategies} represents three paradigmatic quantum strategies, a first indication of how a single localic ontological model might reproduce the whole of quantum theory.

The fine-tuning problem that motivates this work does not arise here. No weight and no response function takes a distant setting among its arguments, so there is no influence present in the ontology which is not reflected in the operational statistics. A realist committed to the no-fine-tuning principle now has a concrete way of keeping it. What has to go is the picture of reality made of independently individuated units, together with the classical probability theory that picture carries with it.

\subsection{The first step of a research programme}
\label{sec:programme}

Theorem~\ref{thm:go} entails that the quantum statistics in the Bell scenario need not be at odds with local causality. Whether the whole of quantum theory admits of a locally causal localic model, and what properties such a model would have, is the question it leaves open. Older questions would be worth revisiting in such a model, among them the measurement problem~\cite{Maudlin1995} and whether the quantum state is ontic or epistemic~\cite{Harrigan2010}. The present work is therefore the first step of a research programme whose ultimate aim is a reformulation of quantum theory that provides a realist account free of fine-tuning, and with it an unambiguous picture of the world the theory describes.

A localic ontological model of the whole of quantum theory would provide one locale, with a weight for each state, response functions for each measurement, and a representation of each transformation, fixed once and returning the Born rule in every scenario in which they appear. The main obstacle we see at present, and the one this work circumvents, is composition, namely how two localic models combine into a model of the pair and how the composition structure of quantum theory is then recovered. Quantum theory builds a bipartite system by the tensor product of Hilbert spaces, and a localic model needs a rule building a locale for the pair from a locale for each part. The rule we expect is built on the localic product, for two reasons. Composing systems that are spatial can produce parts with no points~\cite{Isbell1972}, so the composition could supply on its own the point-freeness that Theorem~\ref{thm:pointless} derives from the axioms. And the localic product is built similarly to the tensor product of Hilbert spaces. Both are generated by pairs consisting of one element from each factor, and what is imposed on the pairing is that it respect, in each argument separately, the structure that factor carries, which is addition of vectors in the Hilbert case and joins of opens in the localic one~\cite{PicadoPultr2012}.

Such a composition rule would also address the derivation of no-signalling discussed in Sec.~\ref{sec:localic_local_causality} and App.~\ref{Appendix:Fine}. In the standard framework that derivation needs two ingredients, local causality and the single collection of ontic states serving all four contexts. Local causality survives the move to records; the single collection does not, and the derivation goes with it. Nor can the derivation be recovered by requiring that the records answering $a$ to $A_x$ be the same collection whichever measurement Bob performs, and likewise for Bob, since that gives the four contexts one and the same family of records, and with it a standard locally causal model and the Bell bound. What is needed is a derivation from elsewhere in the ontology, and composition is the natural place. Under such a rule, the part where Alice answers $a$ to $A_x$ is determined by the locale of her wing alone, and if the rule also assigns to every state of the pair a state of her wing, the counterpart of a reduced density matrix, her marginal is the weight that state gives the part, the same for both of Bob's settings. Quantum theory delivers no-signalling from the way subsystems compose and how effects normalize, so a localic model composed in this way would derive it on the same grounds, and not only reproduce it.

A further question, which such a rule might also settle, is what fixes the parameters $u$ and $v$, left open in Sec.~\ref{sec:family}. This is where we would look for an answer to why quantum theory stops at $S=2\sqrt{2}$ instead of reaching $S=4$. The glue weights do not determine $S$, as Sec.~\ref{sec:family} showed, but the two constrain each other. For instance, combining Eqs.~\eqref{eq:wl}, \eqref{eq:family_S} and~\eqref{eq:ranges} gives $v\le(4-S)/16$, and since $v$ is carried by the glues that the context $(1,1)$ shares with its neighbours, the further a model goes beyond $S=2$ the less weight those glues can hold, until at $S=4$ they hold none. Read the other way, a lower bound on that weight is an upper bound on $S$ in this family, and requiring $v\ge(2-\sqrt{2})/8$ would give exactly $S\le 2\sqrt{2}$.
Since $2v$ is the weight the context $(1,1)$ has in common with each of its neighbours, a lower bound on it asks two incompatible contexts to keep a minimum amount in common. One may then speculate that quantum theory's own limits on incompatibility appear in a localic model as a lower bound on the glue weight, with the Tsirelson bound following from it. Were that so, the operational limits of quantum theory would be due to a bound on how much can be recorded of incompatible procedures at once, as those of relativity are due to a bound on how fast a signal can travel. We leave this speculation for future investigation.

Not every next step is as demanding as providing a localic model of the whole of quantum theory or understanding the origin of the Tsirelson bound. The mechanism at work here---records carrying the identity of what two incompatible contexts share while nothing composes those pairwise identities into a single object serving all the contexts at once---is not special to a Bell scenario. It should apply wherever incompatible contexts have to be reconciled: in the GHZ scenario, where no inequality is needed~\cite{GHZ}, in Kochen--Specker scenarios~\cite{KochenSpecker1967,CabelloProof,Budroni2022}, and where the contexts are made of preparations, as in proofs of preparation contextuality~\cite{Spekkens2005,pusey2018robust,CataniUncertainty2022,Selby2023,Khoshbin_2024}. The Bell scenario itself can be read as a prepare-and-measure scenario, in which Alice's setting and outcome steer a preparation on Bob's side and Bob measures it~\cite{liang2010specker,CoeckeKissinger2017,Henaut2018,CataniFaleiro2023,Wright_2023}; read that way, a proof of nonlocality becomes a proof of preparation contextuality, and the contexts to be reconciled are operationally equivalent preparation procedures, for instance the different convex decompositions of the maximally mixed state of a qubit. Nonlocality is also a special case of contextuality in the sheaf-theoretic~\cite{abramsky2011sheaf,ABShane} and graph-theoretic~\cite{CSW,AcinFritzLeverrierSainz2015,AmaralTerraCunha2018,Wagner2022b} accounts. We expect the ingredients this work uses for the Bell scenario to serve in these settings too, and the fine-tuning that accompanies contextuality~\cite{CataniLeifer2023} to dissolve as it does here.

Extended Wigner's friend no-go theorems~\cite{frauchiger2018quantum,bong2020strong,brukner2018no,wiseman2023thoughtful,schmid2023review} are a further target. A key assumption in these results, in one form or another, is what is known as the absoluteness of observed events, that an observed event is a single objective fact and not relative to anything or anyone. Our models suggest replacing it, since an outcome would have ontic status exactly while a record of it is present, so that a friend's outcome would cease to be a fact of the world once that record is erased.\footnote{The relation between extended Wigner's friend scenarios and fine-tuning is studied in~\cite{ying2023relating}. In~\cite{Walleghem2026} it is argued that absoluteness of observed events is indeed the assumption to give up if one adheres to the principle of no fine-tuning.} These theorems rest on there being no joint distribution over all the outcomes involved, including outcomes that have been erased, from which the empirical probabilities could be recovered. A localic model would turn them into go theorems as it turns Bell's, since one weight on the records, non-Kolmogorov and not a distribution over assignments to all contexts at once, returns the statistics of every context (App.~\ref{Appendix:Fine}).

So far we have used the localic framework to reproduce the kinematics of an operational theory, by considering prepare-and-measure scenarios. The key ontological move has been to take relationality as primary, giving centrality to parts, defined as modes of restriction of the whole. The framework will also have to reproduce the dynamics of an operational theory. One option is to follow the recipe of the standard framework, representing a transformation by a map carrying weights to weights and leaving the ontic state space untouched. A second, which we find more promising, is to make dynamics internal to the elements of reality. The move is of the same kind as making relationality internal to them: a record is then not defined and individuated independently of what comes before and after it, but as an interconnected part of a process. This calls for a sequential composition of parts, one recorded after another, which a locale cannot supply, since its only product is the meet, and a commutative and idempotent product keeps no order. A natural candidate is quantales~\cite{AbramskyVickers1993}, which generalize locales by replacing the meet with an associative product not necessarily commutative or idempotent, while the joins remain as they are. The primary ontological constituents then become processes of recording rather than just records, which places the notion of reality this framework endorses in the tradition of process philosophy~\cite{Whitehead1929,SeibtProcess}, where relationality and dynamics are both primary, as advocated in~\cite{CataniWhitehead2025}.

\subsection{Morals}
\label{sec:morals}

In a late paper titled ``On the impossible pilot wave''~\cite{BellPilotWave}, John Bell wrote a short history of impossibility proofs for hidden variable models of quantum theory. It is the paper where he recounts how the de Broglie--Bohm pilot wave theory made him realize that von Neumann's proof of the impossibility of hidden variables was flawed, and how the pilot wave being nonlocal led him to ask whether \emph{local} hidden variable models are possible, the question his own no-go theorem answered in the negative.

Bell's theorem is one of the main examples of how the methodology of no-go theorems has been key to making progress in the foundations of quantum theory and to identifying its genuine nonclassical features. As argued at length in \cite{ToyFieldTheory} and \cite{catani2022replycommentwhyinterference}, that methodology consists first in articulating what, precisely, one takes to be the principles defining classicality, then in formalizing these principles mathematically, and finally in proving a rigorous no-go theorem that demonstrates a contradiction between the formalized principles and the aspects of the quantum phenomenology in question. In this work we complement this with the methodology of \emph{go} theorems. Once the no-go theorem is proven, one keeps the principles defining classicality, such as the absence of fine-tuning~\cite{Wood2015,CataniLeifer2023} or the principle of Leibnizianity~\cite{Spekkens2019Leibniz}, and replaces the assumption concerning the mathematical framework in which they are expressed, with one equally general and equally precise, so that the contradiction dissolves. Bell himself left a remark pointing this way, singling out the impossibility proofs that concern local causality and expressing the wish that they too may one day be ``illuminated, perhaps harshly, by some simple constructive model''.

Bell called the closing section of that paper \emph{Morals}. He discusses there the lessons to take from this history of impossibility proofs and from his own theorem. One arrives at the end of it knowing what quantum reality cannot be, without feeling any closer to what it is. And one is left with these words, as famous as they are empowering:
\begin{quote}
Long may Louis de Broglie continue to inspire those who suspect that what is proved by impossibility proofs is lack of imagination.
\end{quote}
This work is an attempt to compensate for that lack of imagination, by rethinking what an element of reality is and turning an impossibility proof into a possibility proof.

\begin{acknowledgments}

We thank Chris Heunen for helpful discussions.

L.C.\ thanks Andrew Moshier and Alexander Kurz, who introduced him to point-free topology, and Matt Leifer and Rob Spekkens, who introduced him to the research programme of going beyond ontological models, when this project began in 2019. He also thanks the participants of the \href{https://basic-research.org/}{BRCP} event ``Stromboli Summer Research Intensive 2025'', in particular Jan Głowacki, Guilherme Franzmann, and Johannes Fankhauser, for numerous discussions on this work, and Markus Müller for encouraging him to pursue this line of research.

L.C.\ acknowledges funding by national funds through FCT -- Fundação para a Ciência e a Tecnologia, I.P., under the support UID/50014/2025 (\url{https://doi.org/10.54499/UID/50014/2025}).
N.v.d.S.\ has been partially funded by the French National Research Agency (ANR) within the framework of ``Plan France 2030'', under the research projects EPIQ ANR-22-PETQ-0007, HQI-Acquisition ANR-22-PNCQ-0001 and HQI-R\&D ANR-22-PNCQ-0002.
R.S.B.\ acknowledges funding from FCT -- Fundação para a Ciência e a Tecnologia through the project KaleidosQope, 2023.13603.PEX.

The authors used Claude Opus~5--5.5 and Claude Fable~5.1 (Anthropic), and GPT-5.4--GPT-6 (OpenAI) in the preparation of this manuscript, to assist with the development and verification of the proofs, to improve the presentation, to edit the language, and to produce the figures. The output was checked by the authors at every step, and they take full responsibility for the content of the paper.

\end{acknowledgments}

\bibliography{bib_new}

\appendix

\section{Localic Fine's theorem}
\label{Appendix:Fine}

Fine's theorem, Theorem~\ref{thm:fine}, both identifies the statistics that a locally causal ontological model can reproduce and shows that the canonical model of Sec.~\ref{sec:standard_bell}, with its sixteen plans and its deterministic responses, reproduces all of them. We prove the counterpart of the two claims for localic models, the weight on the records taking the place of the joint distribution and the thirty-two records of Sec.~\ref{sec:localic_bell} that of the plans, though not of their responses, a difference explained after the theorem.

Let us first recall that summing Eq.~\eqref{eq:localic_locality} over $b$ gives Alice's marginal in a locally causal localic model as a sum over the records of the context performed, $\wp(a|x,y)=\sum_{L_\lambda\in\mathcal{A}_{xy}}q(L_\lambda)\,q(a|x,L_\lambda)$, and that for the two contexts containing $A_x$ the sum runs over different collections, which have only the shared records in common (Sec.~\ref{sec:localic_local_causality}). The theorem below needs the two sums to agree, the condition Sec.~\ref{sec:localic_local_causality} named marginal no-signalling.

\begin{definition}[Marginal no-signalling]
\label{def:marginal_nosignalling}
A localic ontological model of the Bell scenario satisfies \emph{marginal no-signalling} if, for each outcome of each measurement, the records answering that outcome weigh the same in the two contexts containing that measurement, that is, for all $a$ and $x$,
\begin{equation}
\label{eq:marginal_nosignalling}
\sum_{L_\lambda\in\mathcal{A}_{x0}}q(L_\lambda)\,q(a|x,0,L_\lambda)\;=\;\sum_{L_\lambda\in\mathcal{A}_{x1}}q(L_\lambda)\,q(a|x,1,L_\lambda),
\end{equation}
where $q(a|x,y,L_\lambda)=\sum_b q(a,b|x,y,L_\lambda)$ is the response to $A_x$ in the context $(x,y)$, and likewise for Bob, with $\mathcal{A}_{0y}$ and $\mathcal{A}_{1y}$ in place of the two collections above, for all $b$ and $y$.
\end{definition}

\noindent Equation~\eqref{eq:marginal_nosignalling} has the same credentials as local causality, being motivated by the two principles from which Sec.~\ref{sec:local_causality} derives the latter, relativistic causality and Reichenbach's principle. Alice's outcome and Bob's setting are spacelike separated, so relativistic causality implies that neither is a direct cause of the other. Nothing in the common past stands behind the two either. Alice's outcome depends on the common past only through the record realized. Which records a round can realize does depend on the context performed, and so on Bob's setting among other things, but not causally, the context being part of what a record registers (Sec.~\ref{sec:records}). Measurement independence, which Definition~\ref{def:localic_model} has by construction, lets no weight depend on a setting, and clause~(i) of the same definition gives the records before any round, so that between them they exclude the superdeterministic and retrocausal mechanisms through which the common past could connect the two (Sec.~\ref{sec:standard_OMF}). A dependence of Alice's marginal on $y$ would then be a correlation explained neither by a direct influence nor by a common cause, which is what Reichenbach's principle forbids.

Parameter independence and marginal no-signalling formalize the absence of an influence of Bob's setting on Alice's outcome, the first at each record (Definition~\ref{def:localic_local_causality}) and the second after averaging over the records of a context. Parameter independence asks that no record change its answer to $A_x$ with $y$, and marginal no-signalling that the records answering a given outcome of $A_x$ weigh the same in the two contexts, whichever records those turn out to be. A localic ontological model can satisfy local causality and the three axioms of Sec.~\ref{sec:pointlessness} and still signal,\footnote{Take the locale and the responses of Sec.~\ref{sec:family}, with every record answering its own label and every glue carrying the weight $t=1/16$, and let three contexts have the statistics $\wp(a,b|x,y)=1/4$ while $(0,1)$ has $5/16$ on its two pairs with $a=0$ and $3/16$ on its two with $a=1$. Every core weighs $1/8$, except in $(0,1)$, where the cores weigh $3/16$ and $1/16$; each context weighs one, every neighbouring pair shares $4t=1/4$, and Proposition~\ref{prop:realization} gives the axioms. No setting is an argument of any weight or response, and yet Alice's marginal $\wp(0|0,y)$ is $1/2$ at $y=0$ and $5/8$ at $y=1$, so the model signals from Bob's wing to Alice's. The axioms of Sec.~\ref{sec:pointlessness} constrain which records are nonzero and which points they carry, and say nothing about the weight.} so Eq.~\eqref{eq:marginal_nosignalling} is independent of both and has to be required separately. It asks two collections of records to weigh the same, a weaker condition than being the same records. Demanding the same records of every measurement, Bob's as well as Alice's, would give the four contexts one and the same family of records, and with it a standard locally causal model, to which Theorem~\ref{thm:chsh} applies (Sec.~\ref{sec:programme}). The models of Sec.~\ref{sec:family}, by contrast, meet the weaker condition at every value of $S$ up to the algebraic maximum. We impose Eq.~\eqref{eq:marginal_nosignalling} here and expect, as anticipated in Sec.~\ref{sec:programme}, a rule composing two localic models into a model of the pair to deliver it, as the tensor product delivers no-signalling in quantum theory.

Local causality still constrains the marginals, through the records the two contexts share, and the constraint can be made quantitative. With deterministic responses, let $R^{xy}_a$ be the join of the members of $\mathcal{A}_{xy}$ answering $a$ for $A_x$, so that $\wp(a|x,y)=q(R^{xy}_a)$ (Sec.~\ref{sec:localic_OMF}).

\begin{proposition}[Local causality and the marginals]
\label{prop:marginal_bound}
In a locally causal localic model of the Bell scenario with deterministic responses, $R^{x0}_a\wedge L_{\Lambda^{x1}}=R^{x1}_a\wedge L_{\Lambda^{x0}}$, and
\begin{equation}
\label{eq:marginal_bound}
\bigl|\wp(a|x,0)-\wp(a|x,1)\bigr|\;\le\;1-g_{cc'},\quad c=(x,0),\ c'=(x,1);
\end{equation}
likewise for Bob. On the thirty-two records of Sec.~\ref{sec:localic_bell} the bound is attained at every value of $g_{cc'}$.
\end{proposition}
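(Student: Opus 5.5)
The plan has three parts: the lattice identity, the bound, and attainment. For the identity, write $c=(x,0)$, $c'=(x,1)$ and expand both sides as finite joins over members of the two families. The coframe $\mathsf{S}(L_\Lambda)$ is distributive, the joins are finite, and $L_{\Lambda^{x1}}=\bigvee\mathcal{A}_{x1}$. Hence
\[
R^{x0}_a\wedge L_{\Lambda^{x1}}=\bigvee\{L_\lambda\wedge L_{\lambda'}: L_\lambda\in\mathcal{A}_{x0}\ \text{answers}\ a,\ L_{\lambda'}\in\mathcal{A}_{x1}\}.
\]
A term is nonzero only when the two members overlap, and then Eq.~\eqref{eq:localic_coherence} says that $L_{\lambda'}$ also answers $a$ for $A_x$ (responses are deterministic). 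So the join equals $\bigvee\{L_\lambda\wedge L_{\lambda'}\}$ over overlapping pairs both answering $a$. That expression is symmetric in $c,c'$, so expanding $R^{x1}_a\wedge L_{\Lambda^{x0}}$ the same way gives the same record. Call it $D_a$.

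For the bound, work inside the lattice $\mathcal{L}_c$, which contains $R^{x0}_a$ and its meet with $L_{\Lambda^{x1}}$ (clause (ii) of Definition~\ref{def:localic_model} and its footnote). Modularity and normalization there give
\[
q(R^{x0}_a)=q(D_a)+q\bigl(R^{x0}_a\wedge\ \text{complement part}\bigr).
\]
It is cleaner to use modularity on $R^{x0}_a$ and $G:=L_{\Lambda^{c}}\wedge L_{\Lambda^{c'}}$. Their meet is $D_a$, since $R^{x0}_a\le L_{\Lambda^c}$, and their join lies below $L_{\Lambda^c}$, of weight $1$. Thus $q(R^{x0}_a)=q(R^{x0}_a\vee G)+q(D_a)-g_{cc'}\le 1-g_{cc'}+q(D_a)$. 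Monotonicity also gives $q(R^{x0}_a)\ge q(D_a)$. The same holds for $c'$, with the same $q(D_a)$ and the same $g_{cc'}$, because the weight carries no measurement index (measurement independence). Both marginals therefore lie in $[q(D_a),\,q(D_a)+1-g_{cc'}]$, and their difference is at most $1-g_{cc'}$. Bob's case is identical with the roles of the settings exchanged. The one point to check carefully is that $G$, $D_a$ and the join lie in $\mathcal{L}_c$ and $\mathcal{L}_{c'}$, so that modularity applies. That is guaranteed by the closure requirements on $\mathcal{L}_{PM}$: it contains the meets with other regions, and it is a sublattice.

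For attainment, take the locale of Sec.~\ref{sec:localic_bell}, the pair $(0,0),(0,1)$ sharing $A_0$, and a target $g\in[0,1]$. Give the four glues $G^{00\leftrightarrow01}_{ab}$ total weight $g$, and let all of them answer $a=0$ for $A_0$ in both contexts, which respects Eq.~\eqref{eq:localic_coherence}. In $(0,0)$, put the remaining weight $1-g$ on cores answering $a=0$. In $(0,1)$, put it on cores answering $a=1$. The glues shared with $(1,0)$ and $(1,1)$ get weight zero, and the other two contexts are normalized arbitrarily on their cores. Then $\wp(0|0,0)=1$ and $\wp(0|0,1)=g$, so the difference is exactly $1-g$.

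The main obstacle is the degenerate requirements. Locally causal parameter independence is automatic at cores, so the only thing to check is that the construction satisfies Definition~\ref{def:localic_model}, with contextual modularity obtained by summing over disjoint records as in the proof of Theorem~\ref{thm:go}. The extreme $g=1$ also needs care: it requires the cores of both contexts to carry zero weight, which is allowed.
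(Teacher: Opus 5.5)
Your proposal is correct and follows the paper's proof essentially step by step: the identity comes from distributivity together with Eq.~\eqref{eq:localic_coherence}, and the bound from modularity in each $\mathcal{L}_{xy}$ applied to $R^{xy}_a$ and $N=L_{\Lambda^{c}}\wedge L_{\Lambda^{c'}}$, whose meet carries one weight in both contexts; attainment puts $g_{cc'}$ on glues of the pair and $1-g_{cc'}$ on cores answering opposite outcomes in the two contexts. The only difference is cosmetic. You spread $g_{cc'}$ over all four glues and let them answer $a=0$ regardless of label, whereas the paper uses a single glue with the label responses of Sec.~\ref{sec:family}; you should also state that the remaining zero-weight glues answer their labels, so that Eq.~\eqref{eq:localic_coherence} holds at every record.
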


\begin{proof}
Let $N:=L_{\Lambda^{c}}\wedge L_{\Lambda^{c'}}$, so that $R^{xy}_a\wedge L_{\Lambda^{xy'}}=R^{xy}_a\wedge N$ for $y\neq y'$, each $R^{xy}_a$ lying below its own context. By distributivity, $R^{x0}_a\wedge N$ is the join of the meets of the members of $\mathcal{A}_{x0}$ answering $a$ with the members of $\mathcal{A}_{x1}$, and Eq.~\eqref{eq:localic_coherence} makes each nonzero meet of this kind lie below a member of $\mathcal{A}_{x1}$ answering $a$, hence below $R^{x1}_a\wedge N$. Exchanging the two contexts gives the reverse inequality, and with it the first claim. The record $R^{x0}_a\wedge N$ therefore lies in both $\mathcal{L}_{x0}$ and $\mathcal{L}_{x1}$, each of which contains $N$ (Definition~\ref{def:localic_model}), and carries a single weight $s_a$ in both. Modularity within $\mathcal{L}_{xy}$ gives $\wp(a|x,y)-s_a=q(R^{xy}_a\vee N)-q(N)$, a number between $0$ and $1-g_{cc'}$, since $N\le R^{xy}_a\vee N\le L_{\Lambda^{xy}}$ and the weight is monotone. The two marginals thus differ by at most $1-g_{cc'}$. On the thirty-two records, with the responses of Sec.~\ref{sec:family}, a weight giving $g_{cc'}$ to one glue of the pair, $1-g_{cc'}$ to a core of $c$ answering $a$ and to a core of $c'$ answering the other outcome, and $1$ to one core of each remaining context, the other records carrying none, attains the bound.
\end{proof}

\noindent Marginal no-signalling is thus a condition on the unshared records alone, the shared ones contributing alike to both marginals by local causality together with measurement independence. The bound is the defect of Theorem~\ref{thm:completions} on the pair, both being the weight that each context does not share with the other. At $g_{cc'}=1$ the bound vanishes, and this is the derivation of no-signalling in the standard framework, where the two contexts share everything; what makes the unshared weight vanish there is the globality of $\Lambda$ and not local causality (Sec.~\ref{sec:localic_local_causality}).

The proposition also makes precise what Sec.~\ref{sec:localic_local_causality} meant by a weight of Alice's wing, local causality supplying the record that carries it and marginal no-signalling its value. By the first claim and distributivity, $(R^{x0}_a\vee R^{x1}_a)\wedge L_{\Lambda^{xy}}=R^{xy}_a$ for both $y$, so the join $R^{x0}_a\vee R^{x1}_a$ is the record of Alice answering $a$ to $A_x$ whichever measurement Bob performs, its part in each context being the records that answer $a$ there. As a join across two incompatible contexts it lies in general in neither lattice $\mathcal{L}_{xy}$, and marginal no-signalling is exactly the condition that its two parts weigh the same, $\wp(a|x,0)=\wp(a|x,1)$, which gives it a weight read off either context. This weight is the counterpart of the reduced density matrix; the standard framework has it from the single $\Lambda$, and here it is what a composition rule has to supply (Sec.~\ref{sec:programme}). With Eq.~\eqref{eq:marginal_nosignalling} required, the correspondence announced in Sec.~\ref{sec:family} can be stated.

\begin{theorem}[Localic Fine's theorem]
\label{thm:localic_fine}
For the statistics $\wp(a,b|x,y)$ of the Bell scenario, the following are equivalent:
\begin{enumerate}[label=(\roman*)]
\item $\wp(a,b|x,y)$ admits of a locally causal localic ontological model satisfying marginal no-signalling;
\item $\wp(a,b|x,y)$ is no-signalling, Eq.~\eqref{eq:nosignalling};
\item $\wp(a,b|x,y)$ admits of a locally causal localic ontological model satisfying marginal no-signalling and the axioms of Sec.~\ref{sec:pointlessness}, on the locale of Eq.~\eqref{eq:realization}, with the thirty-two records of Sec.~\ref{sec:localic_bell}, deterministic responses, and positive weight on the records shared by each neighbouring pair of contexts.
\end{enumerate}
\end{theorem}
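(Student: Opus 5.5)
The proof will go round the cycle (iii)$\Rightarrow$(i)$\Rightarrow$(ii)$\Rightarrow$(iii).

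The first implication, (iii)$\Rightarrow$(i), is immediate: (iii) is (i) with extra requirements.

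For (i)$\Rightarrow$(ii), I will sum Eq.~\eqref{eq:localic_locality} over $b$. This gives $\wp(a|x,y)=\sum_{L_\lambda\in\mathcal{A}_{xy}}q(L_\lambda)\,q(a|x,y,L_\lambda)$. Marginal no-signalling, Eq.~\eqref{eq:marginal_nosignalling}, says this quantity is the same at $y=0$ and $y=1$, and likewise for Bob. That is Eq.~\eqref{eq:nosignalling}. The sum is legitimate because the members of $\mathcal{A}_{xy}$ are pairwise disjoint and the weight is modular within $\mathcal{L}_{xy}$ (Sec.~\ref{sec:localic_OMF}).

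The substance is (ii)$\Rightarrow$(iii). Given no-signalling statistics, I will build a weight on the thirty-two records, with deterministic responses, modelled on the $S=4$ construction in the proof of Theorem~\ref{thm:go}.

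\textbf{Cores.} Every core answers its own label. The cores carry the bulk of each context: $K^{xy}_{ab}$ gets weight $\wp(a,b|x,y)-\epsilon_{xy,ab}$, where $\epsilon_{xy,ab}$ is the total weight of the glues that answer $(a,b)$ in context $(x,y)$.

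\textbf{Glues.} On each neighbouring pair, say $00\leftrightarrow01$ sharing $A_0$, I use glues that answer $(a,b)$ in $(0,0)$ and $(a,b')$ in $(0,1)$. Parameter independence only requires Alice's answer to agree across the two contexts; Bob's two answers are free. So for each $a$ I choose one glue of the pair, answering $a$ for $A_0$ in both contexts. I pair a $b$ with $\wp(a,b|0,0)>0$ with a $b'$ with $\wp(a,b'|0,1)>0$, and give the glue a small weight $t$. Such $b,b'$ exist whenever $\wp(a|0)>0$, and the two marginals coincide by no-signalling. Responses of the unused records, which carry weight zero, are set to their labels.

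Local causality then holds by the same inspection as in Sec.~\ref{sec:family}, since the glues are the only records shared by two families. Marginal no-signalling follows automatically from Eq.~\eqref{eq:nosignalling}, because each context reproduces $\wp$ exactly. Contextual modularity and normalization follow by summation over disjoint records in a distributive lattice, as in the proof of Theorem~\ref{thm:go}. The axioms follow from Proposition~\ref{prop:realization}.

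The main obstacle is keeping every weight non-negative while every neighbouring pair still receives positive weight. Each sector receives glues from two neighbours, so a sector with small probability could be overdrawn. I will handle this as follows. Every context contains at least one pair with $\wp>0$, and by no-signalling some outcome $a$ of the shared measurement has positive probability. I choose $t$ smaller than $\tfrac12\min\{\wp(a,b|x,y):\wp(a,b|x,y)>0\}$ divided by the number of glues routed into a sector, which is at most two. This keeps every core non-negative, and $t>0$ guarantees the positive shared weight.

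A degenerate case remains: a measurement outcome of probability zero. Its glue then simply carries no weight, which is acceptable since only one glue per pair needs to be positive. I would check this case and the $S=4$ instance against the construction in Theorem~\ref{thm:go} as a sanity test.
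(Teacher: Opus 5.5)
Your proposal is correct and follows essentially the paper's proof. It uses the same cycle and the same short arguments for (iii)$\Rightarrow$(i) and (i)$\Rightarrow$(ii). For (ii)$\Rightarrow$(iii) it uses the same construction: cores answer their labels, glues answer positive-probability pairs that agree on the shared measurement, the glue weight is bounded via the smallest positive entry $m$ so that the at most two glues reaching a sector cannot overdraw it, and cores take the remainder. The only difference is cosmetic: the paper populates a single glue per neighbouring pair with weight exactly $m/2$, while you populate one glue per outcome of positive marginal with a smaller weight $t$. That works equally well, since glues from one pair answer distinct outcome pairs in each context.
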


\noindent Clause~(i) is permissive about which localic models are admitted. Any locale is allowed, with any records and with responses that need not be deterministic, no axiom of Sec.~\ref{sec:pointlessness} is imposed, and two contexts may share any part of the model or none at all. Besides local causality, which restricts no statistics on its own, the only constraint it contains is Eq.~\eqref{eq:marginal_nosignalling}. The most substantive part of the theorem is (ii)$\Rightarrow$(iii), which reaches all the no-signalling statistics on the thirty-two records, and not only the one-parameter family of Eq.~\eqref{eq:family_table}. The direction (i)$\Rightarrow$(ii) is immediate, since the reproduction formula turns Eq.~\eqref{eq:marginal_nosignalling}, a condition on the weight and the responses, into Eq.~\eqref{eq:nosignalling}, a condition on the statistics. Theorem~\ref{thm:fine} is proved along the same cycle, but with the work in the opposite place: its clauses~(ii) and~(iii) differ by a change of notation and its main content lies in (i)$\Rightarrow$(ii).

\begin{proof}
(iii)$\Rightarrow$(i) holds because a model of the third kind is a model of the first.

(i)$\Rightarrow$(ii). Within a context the integral of Eq.~\eqref{eq:localic_opdata} is the sum over the members of $\mathcal{A}_{xy}$ (Sec.~\ref{sec:localic_OMF}), so Eq.~\eqref{eq:marginal_nosignalling} reads $\wp(a|x,0)=\wp(a|x,1)$, and likewise for Bob, which is Eq.~\eqref{eq:nosignalling}.

(ii)$\Rightarrow$(iii). Let $\wp(a,b|x,y)$ be no-signalling. The locale of Eq.~\eqref{eq:realization} and its thirty-two records are given by clause~(iii) itself, so what has to be supplied is what each record answers in each context below which it lies, and what each record weighs. We let every core answer the outcome pair labelling it, in the one context below which it lies, and we let every glue return one and the same outcome for the measurement its two contexts share, which is required by parameter independence, leaving free the outcome it returns for the other measurement of each of the two. It remains to choose these outcomes and the weight of every record. We start with the glues, where Eq.~\eqref{eq:nosignalling} is used, and then continue with the cores, each one taking what is left of the entry it answers in its context. We write $m$ for the smallest positive entry of $\wp(a,b|x,y)$, which exists because the four entries of a context sum to one.

Take the pair $00\leftrightarrow01$, whose contexts share $A_0$; the other three pairs are treated identically, with the parties exchanged on the two that share one of Bob's measurements. Choose an outcome $a$ of $A_0$ with $\wp(a|x{=}0)>0$, which exists because the two values of that marginal sum to one, and which Eq.~\eqref{eq:nosignalling} makes independent of Bob's setting. Since $\sum_{b}\wp(a,b|0,0)=\wp(a|x{=}0)=\sum_{b}\wp(a,b|0,1)$, each of the two contexts has an outcome of Bob's at which the entry is positive, say $\wp(a,b|0,0)>0$ and $\wp(a,b'|0,1)>0$. We let one glue of the pair answer $(a,b)$ in $(0,0)$ and $(a,b')$ in $(0,1)$, returning $a$ for the shared $A_0$ in both, and give it the weight $m/2$. We choose the other three glues of the pair to answer their own labels, which likewise returns one outcome for $A_0$ in both contexts, and to carry no weight.

Two glues of positive weight now lie below each context, one from each of the two neighbouring pairs that context belongs to, and each of them answers there an outcome pair of positive probability. We give the core $K^{xy}_{ab}$ the weight $\wp(a,b|x,y)$ less the weight of the glues that answer $(a,b)$ in $(x,y)$. In this way, no core weight is negative; at most two glues of positive weight answer $(a,b)$ in $(x,y)$, each of weight $m/2$, and they do so only when $\wp(a,b|x,y)$ is positive and hence at least $m$.

With the construction above, the records of $(x,y)$ answering $(a,b)$ carry $\wp(a,b|x,y)$ between them, which is Eq.~\eqref{eq:localic_opdata} for deterministic responses, and the twelve records of the context carry $\sum_{a,b}\wp(a,b|x,y)=1$. Summation over those twelve extends the weight to a monotone modular weight on $\mathcal{L}_{xy}$, normalized at the context region, as in the proof of Theorem~\ref{thm:go}. Local causality is parameter independence here, for deterministic responses (Sec.~\ref{sec:localic_local_causality}), and it constrains only the records lying below two contexts, namely the glues, which meet it by the choice of outcomes above, and Eq.~\eqref{eq:localic_coherence} adds nothing, the thirty-two records being pairwise disjoint. The axioms of Sec.~\ref{sec:pointlessness} hold by Proposition~\ref{prop:realization}, which is independent of the responses, each neighbouring pair carries the weight $m/2$, and Eq.~\eqref{eq:marginal_nosignalling} holds because its two sums are the marginals of~$\wp$.
\end{proof}

\noindent Theorem~\ref{thm:localic_fine} stands to Theorem~\ref{thm:go} as Theorem~\ref{thm:fine} stands to Theorem~\ref{thm:chsh}. The statistics that Theorem~\ref{thm:fine} characterizes as reproducible in the standard framework obey the bound of Theorem~\ref{thm:chsh}, and those that Theorem~\ref{thm:localic_fine} characterizes as reproducible in the localic one are the no-signalling statistics, which include the ones reproduced by the models of Theorem~\ref{thm:go} at every value of $S$ up to the algebraic maximum. Theorem~\ref{thm:fine} and Theorem~\ref{thm:localic_fine} also agree in fixing a carrier once and for all, the sixteen plans in the one case and the thirty-two records in the other, on which every statistics of the set each theorem names is reproduced.

Two things differ. The first is how the canonical model is obtained. In a locally causal ontological model with epistemic state $p(\lambda)$, multiplying at a given $\lambda$ the response values of the four measurements gives each global assignment $(a_0,a_1,b_0,b_1)$ a probability, so that every $\lambda$ carries a distribution over the sixteen of them. Averaging these distributions over $p(\lambda)$ leaves a single one, which Theorem~\ref{thm:fine} takes as the epistemic state of the canonical model. Here, no record answers all four measurements, so that route is closed, and the model of clause~(iii) is instead read off the statistics, its weights being the entries $\wp(a,b|x,y)$ distributed among cores and glues and the outcomes its glues return for the unshared measurements being chosen where those entries are positive. The construction takes from a model of clause~(i) only the fact that those entries are no-signalling. Unlike the responses of the canonical model, Eq.~\eqref{eq:standard_responses}, those of the glues are therefore not fixed once for all statistics, and with four glues per neighbouring pair they cannot be. A glue of the pair $00\leftrightarrow01$ lies below both contexts and so answers three measurements, $A_0$, $B_0$ and $B_1$, its responses forming a triple $(a_0,b_0,b_1)$. There are eight such triples, and the canonical model has a sector overlap for each of them, while this locale has four glues, which with fixed responses would answer at most four triples. The statistics of a global assignment whose triple is left unanswered would then give every glue of the pair an outcome of probability zero in one of its two contexts, hence no weight, against clause~(iii). Eight glues per neighbouring pair would allow the responses to be fixed once for all statistics; four suffice for each statistics on its own and carry the models of Sec.~\ref{sec:family}.

The second difference is what restricts the reproducible statistics to the set each theorem names. In Theorem~\ref{thm:fine} the restriction comes from local causality together with the single collection of ontic states that serves all four contexts. A localic model need not have such a collection, so local causality no longer restricts anything on its own (Sec.~\ref{sec:family}). The restriction to the no-signalling statistics comes instead from Eq.~\eqref{eq:marginal_nosignalling}, in the hypothesis of (i)$\Rightarrow$(ii).

Theorem~\ref{thm:fine} is recovered from the locally causal localic models that meet the first two conditions of the reduction of Table~\ref{tab:frameworks}, namely that every procedure be performed on the whole of $L_\Lambda$ and that the weight be defined on one and the same lattice of records for every measurement. Under these two conditions the meets of one member from each of the four families are pairwise disjoint and, $\mathsf{S}(L_\Lambda)$ being distributive, join to $L_\Lambda$, so their weights sum to one, the weight being modular on the lattice they lie in; each nonzero meet lies below exactly one member of every family and answers in each context what that member answers. Equation~\eqref{eq:localic_factorizability} makes the joint answer at a meet the product of the two parties' answers, and Eq.~\eqref{eq:localic_coherence} makes Alice's answer to $A_x$ there the same in both contexts containing $A_x$, the two members above the meet overlapping in it, and likewise for Bob. The meets, with these weights and responses, are thus a locally causal ontological model, Definition~\ref{def:local_causality}, reproducing the same statistics by modularity. Equation~\eqref{eq:marginal_nosignalling} is then an identity, holding in every such model and imposed on none, and the statistics characterized are no longer the no-signalling ones but those of Theorem~\ref{thm:fine}. The third condition, that every nonzero record have a point, makes the meets sets of points and plays no part in this. What separates the two theorems is what separates the two frameworks, whether one collection of elements of reality serves all four contexts.

\section{The paraconsistent logic of records}
\label{Appendix:Paraconsistency}

While the parts of the ontic state space of the standard framework are subsets, and the logic they carry is Boolean, the records of a localic ontological model carry a paraconsistent logic, whose departure from the classical rules falls exactly on what two incompatible contexts share.

The reading of a lattice as a logic is the one the standard framework already uses. There a proposition is a subset $A\subseteq\Lambda$, and to assert it is to say that the ontic state of the round lies in $A$; conjunction is intersection, disjunction is union, and $A$ implies $B$ exactly when $A\subseteq B$. Here, a proposition is a record~$S$, and to assert it is to say that the record the round realizes is a part of~$S$; conjunction is the meet, disjunction is the join, and~$S$ implies~$T$ exactly when $S\le T$. The order is inclusion in both cases, written $\subseteq$ between subsets and $\le$ between parts. What changes is that a record is not fixed by the points it contains (Sec.~\ref{sec:pointfree}), which leaves room for records without complements, and it is their presence that changes the logic.

A complete lattice is a \emph{complete Heyting algebra}~\cite{PicadoPultr2012,Vickers1989} when for all elements~$U$ and~$V$ there exists a largest element~$U\to V$ whose meet with~$U$ lies below~$V$. That is, it admits an \emph{implication} operation~$\to$ with the following universal characterization:
\[
W\wedge U\le V
\quad\iff\quad
W\le U\to V.
\]
The frame of opens is one, since by Eq.~\eqref{eq:frame} the join of all opens whose meet with $U$ lies below $V$ again has this property.\footnote{In fact, a complete lattice is a frame if and only if it is a complete Heyting algebra.}
In a topological space $X$, the subset $U\to V$ is the interior of $(X\setminus U)\cup V$. The case $V=0$ is the negation $\neg U:=U\to0$ used in Sec.~\ref{sec:pointfree}, the largest open disjoint from $U$, so that $U\wedge\neg U=0$.

Since every open lies above $0$, an open and its negation together imply every open. That rule, ${U\wedge\neg U\le V}$ for every open $V$, is \emph{explosion}, and a logic is \emph{paraconsistent} when it fails, i.e., a contradiction can be tolerated without everything else following from it~\cite{PriestParaconsistent}. Boolean logic satisfies explosion for the same reason. There negation is \emph{complementation}, a subset and its complement excluding one another and covering the whole between them. Negation in the frame keeps only the first half. The second condition, the \emph{law of the excluded middle}, fails: for $U=(0,\infty)$ on the real line the negation is $\neg U=(-\infty,0)$, and $U\vee\neg U$ is the line with the point $0$ removed, not the whole line. That is intuitionistic logic.

In the coframe of records, excluded middle and explosion change places. Finite joins distribute over arbitrary meets there (Sec.~\ref{sec:pointfree}), according to the law dual to Eq.~\eqref{eq:frame}, so the records $R$ with $T\le S\vee R$ are closed under meets and have a least member, the \emph{subtraction} $T\smallsetminus S$, the smallest record that covers $T$ together with $S$. The universal property here is the following:
\[
T\le S\vee R
\quad\iff\quad
T\smallsetminus S \le R.
\]
A complete lattice equipped with such a subtraction is a \emph{complete co-Heyting algebra}~\cite{Rauszer1974,Urbas1996}. In place of the negation~$\neg$, a co-Heyting algebra supplies the \emph{supplement} $\cneg S:=L_\Lambda\smallsetminus S$, the smallest record that covers the whole together with $S$. For the supplement, the law of the excluded middle now holds:~${S\vee\cneg S=L_\Lambda}$. However, a record and its supplement need not exclude one another, and what they have in common is the \emph{boundary} ${\partial S:=S\wedge\cneg S}$~\cite{Lawvere1991,ReyesZolfaghari1996}, so named because for closed subsets of a space it is the topological boundary. A record~$B$ is called a \emph{boundary} if~$\partial B = B$, or equivalently~$\cneg B = L_\Lambda$.

Explosion in the coframe reads $S\wedge\cneg S\le T$ for every record $T$, and taking $T=0$ shows this to be the condition $\partial S=0$, that a record and its supplement exclude one another. Suppose $S$ has a complement $S'$. The supplement $\cneg S$ is the smallest record covering the whole together with $S$, so it lies below $S'$, and since $S$ excludes $S'$ it excludes the supplement as well, which makes the boundary $\partial S$ vanish. Suppose instead that $S$ has no complement. Were the supplement to exclude $S$, it would be a complement of $S$, covering the whole together with $S$ as every supplement does. So it does not exclude $S$, and the boundary $\partial S$ is nonzero. Explosion therefore holds at a record exactly when that record has a complement. Which records lack one is settled by the locale of Sec.~\ref{sec:localic_bell}, where the boundary falls on the glues and nowhere else. This rests on a general fact about spatial locales. A point~$p$ of a locale~$L$ determines a \emph{one-point sublocale}~$\mathfrak{b}(p)$, the part whose only point is~$p$~\cite[\S III.10.1]{PicadoPultr2012}.

\begin{lemma}
\label{lem:pointless_boundary}
    If $L$ is a spatial locale, then any pointless sublocale~$G\in\mathsf{S}(L)$ satisfies $\partial G=G$.
\end{lemma}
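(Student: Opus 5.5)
The plan is to establish the equivalent form of the statement recorded just before the lemma: a record $B$ is a boundary exactly when $\cneg B=L$. By the universal property of subtraction, $\cneg G$ is the least sublocale $R$ with $L\le G\vee R$. It therefore suffices to show that every sublocale $R$ with $G\vee R=L$ is already $L$. That gives $\cneg G=L$, and then $\partial G=G\wedge\cneg G=G\wedge L=G$.

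The argument rests on two facts about one-point sublocales.

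\emph{Spatial locales are joins of one-point parts.} A spatial locale is the join of the one-point sublocales below it, $L=\bigvee_{p\in\mathrm{pt}(L)}\mathfrak{b}(p)$, as already cited in Sec.~\ref{sec:localic_OMF} from~\cite{PicadoPultr2012}.

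\emph{One-point sublocales behave like points.}
\begin{enumerate}[label=(\alph*)]
\item $\mathfrak{b}(p)\le S$ holds if and only if $p$ is a point of $S$, so that $\mathfrak{b}(p)$ is the least sublocale containing $p$.
\item If $\mathfrak{b}(p)\le S\vee T$, then $\mathfrak{b}(p)\le S$ or $\mathfrak{b}(p)\le T$.
\end{enumerate}

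Granting these, the proof is short. Suppose $G\vee R=L$. Every point $p$ of $L$ then satisfies $\mathfrak{b}(p)\le G\vee R$, so by (b) $\mathfrak{b}(p)\le G$ or $\mathfrak{b}(p)\le R$. The first option would make $p$ a point of $G$ by (a), which is impossible because $G$ is pointless. Hence $\mathfrak{b}(p)\le R$ for every $p$, and the first fact gives $L=\bigvee_p\mathfrak{b}(p)\le R$, that is, $R=L$.

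The main obstacle is justifying (a) and (b). To prove them I would use the concrete description of sublocales in~\cite[Ch.~III]{PicadoPultr2012}. There a sublocale is a subset $S\subseteq\mathcal{O}(L)$ closed under arbitrary meets and under $a\to s$ for $s\in S$, ordered by inclusion. A point is a prime element $p$, meaning $p\neq1$ and $a\wedge b\le p$ implies $a\le p$ or $b\le p$, and the corresponding one-point sublocale is $\mathfrak{b}(p)=\{1,p\}$. In this description, $p$ lies in $S$ exactly when $p\in S$ as an element, which is also exactly when $\{1,p\}\subseteq S$; this gives (a). The join of two sublocales is $S\vee T=\{s\wedge t: s\in S,\ t\in T\}$.

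For (b), suppose $p\in S\vee T$ and write $p=s\wedge t$ with $s\in S$ and $t\in T$. Primeness gives $s\le p$ or $t\le p$. Since also $p\le s$ and $p\le t$, it follows that $p=s\in S$ or $p=t\in T$, which is (b).

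Should the subset picture be unavailable, I would instead phrase $\mathfrak{b}(p)$ through the frame homomorphism $\mathcal{O}(L)\to\{0,1\}$ and argue that this homomorphism factors through the quotient $\mathcal{O}(L)\to\mathcal{O}(S\vee T)$. Since $\mathcal{O}(S\vee T)$ is the pullback (fibre product) of the quotients $\mathcal{O}(S)$ and $\mathcal{O}(T)$, primeness of the target $\{0,1\}$ would then force the homomorphism to factor through one of $\mathcal{O}(S)$ or $\mathcal{O}(T)$.
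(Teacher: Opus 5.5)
Your proof is correct and follows the paper's argument. You assume $G\vee T=L$, show that every one-point sublocale $\mathfrak{b}(p)$ lies below $T$, and conclude $T=L$ by spatiality. The only difference is how you get $\mathfrak{b}(p)\le G$ or $\mathfrak{b}(p)\le T$: you use the subset description $S\vee T=\{s\wedge t\}$ and primeness of $p$, whereas the paper uses the distributivity of $\mathsf{S}(L)$ together with the fact that only $0$ and $\mathfrak{b}(p)$ lie below $\mathfrak{b}(p)$.

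Drop your fallback paragraph, though. Its premise fails: $\mathcal{O}(S\vee T)$ is not in general the pullback of $\mathcal{O}(S)$ and $\mathcal{O}(T)$ over $\mathcal{O}(S\wedge T)$. For the open and the closed point of the Sierpi\'nski locale, that pullback is $\{0,1\}^2$, while $\mathcal{O}(L)$ is a three-element chain.
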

\begin{proof}
Let $G$ be a pointless sublocale of a spatial locale~$L$. By definition, $L\smallsetminus G$ is the least sublocale whose join with~$G$ is~$L$. We show that only $L$ itself has this property.

Suppose $G\vee T=L$ for some sublocale $T$ of $L$. Every one-point sublocale $\mathfrak{b}(p)$ of $L$ satisfies ${\mathfrak{b}(p)\wedge G=0}$, since the only sublocales below $\mathfrak{b}(p)$ are $0$ and $\mathfrak{b}(p)$ itself, and $G$ is pointless. Hence
\[
\mathfrak{b}(p)=\mathfrak{b}(p)\wedge L = \mathfrak{b}(p)\wedge(G\vee T)=\mathfrak{b}(p)\wedge T\le T,
\]
so $T$ contains the same points as~$L$. Spatiality implies that $L$ is the join of its one-point sublocales \cite[\S VI.3.1]{PicadoPultr2012}, so $T=L$. Therefore $L\smallsetminus G=L$, and
\[
\partial G=G\wedge(L\smallsetminus G)=G\wedge L = G.
\qedhere
\]
\end{proof}

\begin{proposition}[Glues are the boundaries]
\label{prop:boundaries}
Consider the set of thirty-two records of Sec.~\ref{sec:localic_bell}, realized in the locale $L_\Lambda$ of Eq.~\eqref{eq:realization}, and let $\mathcal G$ be its set of glues. For every family~$\mathcal F$ of records:
\[
\partial\left(\bigvee_{R\in\mathcal F}R\right)
=
\bigvee_{G\in\mathcal F\cap\mathcal G}G.
\]
In particular, every core satisfies $\partial K^{xy}_{ab}=0$, while every glue satisfies $\partial G=G$ and $\cneg G=L_\Lambda$. For each context~$(x,y)$ we have
\begin{equation}
\label{eq:boundaries}
\partial L_{\Lambda^{xy}}
=
\bigvee_{x'y'}
\left(
L_{\Lambda^{xy}}\wedge L_{\Lambda^{x'y'}}
\right),
\end{equation}
where the join runs over the two contexts~$(x',y')$ neighbouring~$(x,y)$.
\end{proposition}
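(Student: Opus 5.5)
The plan is to compute the supplement of a join of records copy by copy, exploiting the fact that $L_\Lambda$ is the disjoint union of the thirty-two copies $J_r$. Write $S=\bigvee_{R\in\mathcal F}R$ and split it as $S=S_K\vee S_G$, with $S_K$ the join of the cores in $\mathcal F$ and $S_G$ the join of the glues in $\mathcal F$. Let $C:=\bigvee_{r}\mathfrak{o}(J_r)$, taken over the copies carrying cores of $\mathcal F$, so that $S_K=C$, and let $C'$ be the join of the remaining copies. Because the copies are disjoint open parts of $L_\Lambda$ joining to the whole, $C$ and $C'$ are complemented: each is open, the other is its closed complement $\mathfrak{c}$, and $C\wedge C'=0$, $C\vee C'=L_\Lambda$.

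First, I will show $\cneg S=C'$. We have $S\vee C'\ge C\vee C'=L_\Lambda$, so $\cneg S\le C'$. In the other direction, suppose $S\vee T=L_\Lambda$. Meeting with $C'$ and using distributivity of the coframe together with $C\wedge C'=0$ gives $C'=(S_G\wedge C')\vee(T\wedge C')$. The part $C'$ is spatial, being an open part of a spatial locale, so $C'$ is itself a spatial locale. Moreover $S_G\wedge C'$ is a sublocale of $C'$ without points, since the glues are pointless by Proposition~\ref{prop:realization} and a finite join of pointless sublocales of a spatial locale is pointless (a point lying in a join of finitely many parts lies in one of them, the one-point sublocale $\mathfrak{b}(p)$ having only $0$ and itself below it). The argument of Lemma~\ref{lem:pointless_boundary}, run inside $C'$, then yields $T\wedge C'=C'$, that is $C'\le T$. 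Hence $\cneg S=C'$.

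With this in hand, $\partial S=S\wedge C'=(C\wedge C')\vee(S_G\wedge C')=S_G$, because every glue of $\mathcal F$ sits in a copy not carrying a core of $\mathcal F$ (each copy carries exactly one record). The special cases follow at once. For a single core, $\mathcal F\cap\mathcal G=\varnothing$ gives $\partial K^{xy}_{ab}=0$. For a single glue, $C=0$ gives $\cneg G=L_\Lambda$ and $\partial G=G$, in agreement with Lemma~\ref{lem:pointless_boundary}. For Eq.~\eqref{eq:boundaries}, I will apply the formula to the twelve records of $\mathcal{A}_{xy}$: the glues among them are exactly those shared with the two neighbours, and by Eq.~\eqref{eq:localic_meets} their join is the join of the two meets $L_{\Lambda^{xy}}\wedge L_{\Lambda^{x'y'}}$.

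The step I expect to need the most care is the one showing that a core is complemented and that pointlessness persists when one passes to $C'$. This rests on three facts: that open parts are complemented by their closed counterparts; that a glue living in copy $J_r$ satisfies $G\le\mathfrak{o}(J_r)$, so that $S_G\le C'$; and that Lemma~\ref{lem:pointless_boundary} relativizes to a spatial open sublocale. I will check each of these against \cite{PicadoPultr2012} (the open and closed sublocales in Ch.~III, and the spatiality of open sublocales of spatial locales) rather than re-deriving them.
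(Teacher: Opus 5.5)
Your proof is correct and is essentially the paper's argument. You split the join into the complemented open part $C$ carried by the cores and the pointless glue part $S_G\le C'$, show $\cneg S=C'$, and read off $\partial S=S\wedge C'=S_G$. The only difference is in the middle step: you meet $S\vee T=L_\Lambda$ with $C'$ and rerun Lemma~\ref{lem:pointless_boundary} inside the spatial open part $C'$, whereas the paper applies the lemma once in $L_\Lambda$ to get $\cneg S_G=L_\Lambda$, drops $S_G$ from $C\vee S_G\vee T=L_\Lambda$, and concludes $C'\le T$ from $C\vee T=L_\Lambda$ because $C$ is complemented.
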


\begin{proof}
Fix the family~$\mathcal F$. Let $C$ be the join of its cores and $\Gamma$ the join of its glues:
\[
C:= \bigvee_{K\in \mathcal{F}\setminus \mathcal{G}}K
\qquad\text{and}\qquad
\Gamma:= \bigvee_{G\in \mathcal{F}\cap\mathcal{G}}G,
\]
so that the join over~$\mathcal{F}$ is~$C\vee \Gamma$. By construction, each core is identified with an open sublocale, so any join of cores will be an open sublocale, meaning we can write~$C=\mathfrak{o}(V)$ for some~$V\in\mathcal{O}(L_\Lambda)$. In turn,~$C$ has complement~${C'=\mathfrak{c}(V)}$. The glues in $\mathcal F$ are disjoint from the cores, so~$\Gamma\le C'$. Moreover, since glues are pointless by Proposition~\ref{prop:realization}, $\Gamma$ is pointless as a finite join of pointless sublocales.

Since $L_{\Lambda}$ is spatial, Lemma~\ref{lem:pointless_boundary} gives $\partial\Gamma=\Gamma$, meaning equivalently that~$\cneg\Gamma=L_{\Lambda}$. For any sublocale $T$ of~$L_\Lambda$ it thus follows that
\[
(C\vee\Gamma)\vee T=L_{\Lambda}
\iff C\vee T=L_{\Lambda}
\iff C'\le T.
\]
Hence $C'$ is the least sublocale covering the whole together with $C\vee\Gamma$, so $\cneg(C\vee\Gamma)=C'$, and hence
\[
\partial(C\vee\Gamma)=(C\vee\Gamma)\wedge C'=\Gamma,
\]
which is precisely the first claimed equation. Taking a single core gives $\partial K^{xy}_{ab}=0$, and taking a single glue gives $\cneg G=L_{\Lambda}$ and $\partial G=G$. Finally, a context region is the join of its four cores and eight glues, and Eq.~\eqref{eq:localic_meets} identifies the join of those glues with the join of its overlaps with its two neighbours, yielding Eq.~\eqref{eq:boundaries}.
\end{proof}

\noindent Equation~\eqref{eq:boundaries} is where explosion fails. Take the sector $L_{\Lambda^{xy}_{ab}}$ of one context and the sector $L_{\Lambda^{x'y'}_{ab}}$ of a neighbouring one. The supplement of a sector is the complement of its core, by the proof of Proposition~\ref{prop:boundaries}, and the second sector lies below it, being disjoint from that core, so the second implies the negation of the first. By Theorem~\ref{thm:pointless} no point lies in both sectors, so no complete description of a round satisfies the two propositions together; by Eq.~\eqref{eq:localic_sectors} their conjunction is nonetheless the glue $G^{xy\leftrightarrow x'y'}_{ab}$, which is not $0$, so not everything follows from the two of them. What the two propositions share, then, is no round at all, but the record carrying the sameness of the measurement the two contexts have in common (Axiom~\ref{ax:relationality}). The contradiction is tolerated, and it is tolerated exactly where expected, since no round performs both contexts. Among subsets a nonzero conjunction without a common point is impossible, since a proposition is fixed by the points that satisfy it, so two propositions no point satisfies together have empty conjunction and everything follows from them.

The failure does not depend on the locale of Eq.~\eqref{eq:realization}. On any spatial $L_{\Lambda}$, Relationality and Theorem~\ref{thm:pointless} make the record shared by two incompatible contexts with a measurement in common nonzero and pointless, so by Lemma~\ref{lem:pointless_boundary} it is a nonzero boundary and has no complement.\footnote{Without spatiality this fails, a locale whose frame is a complete Boolean algebra having a complement for every record and a Boolean logic, pointless shared records notwithstanding.}

A record shared by two incompatible contexts has no points (Theorem~\ref{thm:pointless}), in a spatial locale a nonzero record with no points has no complement, and explosion fails at a record exactly when it has none. The logic of records is therefore Boolean on everything built from cores and paraconsistent exactly on the glues. A contradiction is tolerated in the record two incompatible contexts share.

\section{Comparison with other works}
\label{Appendix:Comparison}
 
In this appendix we compare the localic framework with other ways of going beyond the standard ontological models framework and of addressing the problem of fine-tuning. Two questions are asked of each approach. The first is whether it shares our aim, which Sec.~\ref{sec:Intro} states as three desiderata: a precise notion of reality, applicability to any operational theory, and the reproduction of the quantum statistics without fine-tuning. The second is which of the following four commitments of a locally causal ontological model the approach gives up.
\begin{itemize}
\item[(A)] \textit{Individual units.} The elements of reality are individual units, each individuated on its own, independently of any other and of any whole it may belong to, and thus mathematically represented by points of a measurable space; one collection of them serves every context, each carrying a response to every measurement (Secs.~\ref{sec:Intro} and~\ref{sec:takeaway}).
\item[(B)] \textit{Absoluteness of outcomes.} Each round realizes one context, and each measurement performed yields a single outcome, an objective fact not relative to anything or anyone (Sec.~\ref{sec:programme}).
\item[(C)] \textit{Kolmogorov probability.} The statistics are recovered through classical (Kolmogorov) probability theory over that collection.
\item[(D)] \textit{Common-cause structure.} The model satisfies local causality and measurement independence.
\end{itemize}
 
This work gives up (A). Its elements of reality are relational units, each context having a collection of records of its own, and what two contexts share is a record they have in common. Giving up (A) carries the failure of (C) with it, since each context is normalized on its own and the contexts share less than the whole of their weight, so that no account of the weight is both modular and valued in $[0,1]$ (Theorem~\ref{thm:completions}). It keeps (D). Its only consequence for (B) comes from taking records as the elements of reality, since an outcome has ontic status exactly while a record of it is present. In the Bell scenario (B) is kept. The outcome obtained is what the record the round realizes answers in the context performed, and it is an objective fact; what a record lacks is an answer in the contexts it does not lie below, which is a departure from (A) rather than a relativity of the observed event. The situation is different in extended Wigner's friend scenarios, where a record of a performed measurement can be erased, and there the framework is expected to depart from (B) as well, as discussed in Sec.~\ref{sec:programme}.
 
\paragraph*{Giving up local causality or measurement independence.}
The interpretations of quantum theory that can be cast in the standard framework keep (A), (B), and (C) and give up local causality, and as interpretations of one theory they do not aim at theory-independence. They exhibit the fine-tuning in its original form and accept it. The pilot-wave theory of de Broglie and Bohm~\cite{Bohm1952,Bohm1952b} is the best-known example. Its ontic state is the pair of the wave function and the particle positions, and the outcome on Bob's side depends on Alice's setting, so parameter independence fails. That dependence leaves no operational trace because the positions are distributed according to the Born rule. Collapse theories~\cite{GRW1986,Bassi2013} can also be cast as ontological models, with the wave function as the complete state, evolving by a stochastic modification of the Schr\"odinger equation that is nonlinear for the individual state vector while remaining linear for the statistical operator, which keeps no-signalling intact. They too give up local causality, and they change the predictions of quantum theory.
 
Other proposals also change the predictions of quantum theory, and they do so in order to explain the fine-tuning of ontological models as emergent from physical processes that have not yet been observed. Valentini's reading of the pilot-wave theory is the paradigmatic case, where no-signalling only holds in quantum equilibrium (the Born distribution of positions) and fails outside it. A relaxation mechanism leads to the equilibrium~\cite{Valentini1991,Valentini2002}, and the absence of signalling is then no fine-tuning of the nonlocal influence, since signalling is possible in principle, only not in equilibrium. Reference~\cite{CataniGalleyGonda2024} puts forward a proposal of the same shape for generalized contextuality, where the operational equivalences that correspond to ontological distinctions emerge from a yet undiscovered process of information erasure. What these proposals ask of the world is that quantum theory fail in some regime nobody has reached, in ways that stand in tension with relativity, whether by permitting signalling outright or by admitting measurements finer than quantum theory allows. Explaining a fine-tuning as emergent is thus a different enterprise from removing it, which is what this work does, leaving the quantum predictions as they stand.
 
Retrocausal~\cite{PriceWharton2015,WhartonArgaman2020} and superdeterministic~\cite{tHooft2016,HossenfelderPalmer2020} accounts keep local causality and give up instead the measurement independence that we take the standard framework to assume, departing from it in that one respect while keeping (A), (B), and (C). The challenge they face is the fine-tuning. Meeting it asks for a clear formalization of the realist framework in which these accounts are defined, and several works address the question~\cite{Almada2016,LeiferPusey2017,Adlam2022Retrocausality,Adlam2021Teleological}.

\paragraph*{Giving up the absoluteness of outcomes.} The Everett interpretation~\cite{Everett1957,Wallace} gives up (B), and with it the standard framework, since a round does not yield a single outcome. Each outcome of non-zero amplitude occurs, in a world of its own within the universal quantum state, and Bell's argument then cannot run, its derivation requiring that a quantum experiment have a single outcome~\cite{Bacciagaluppi2002,BrownTimpson2016}. Commitments (A), (C), and (D) are defined for a locally causal ontological model, and each would have to be reformulated before it could be stated in this picture. How probability is to be understood when more than one outcome is realized has long been debated, and we do not enter into the question here~\cite{SaundersEtAl2010}. The account differs from ours in the picture of the world it describes and in interpreting the quantum formalism instead of offering a theory-independent framework.

A similar assessment applies to relational quantum mechanics~\cite{Rovelli1996,DiBiagioRovelli2021} and QBism~\cite{QBism1,QBism2}. They give up (B) by keeping one outcome per measurement and relativizing it, in the first case to the physical system with respect to which the event occurs, in the second to the agent whose experience it is. Whether the Bell scenario is thereby described locally and without fine-tuning is under debate~\cite{MartinDussaud2019,Pienaar2019,Pienaar2020}, and how to formalize the notion of reality these approaches endorse remains work in progress~\cite{Fuchs2017,Pienaar2024,AdlamRovelli2023,MartinDussaud2023,DeBrotaFuchsSchack2026}.
 
\paragraph*{Beyond classical causal models.} The fine-tuning problem is posed by Wood and Spekkens~\cite{Wood2015} within the framework of classical causal models~\cite{Pearl}, the counterpart of ontological models in causal modelling. One route beyond the standard ontological models framework is therefore to generalize the classical causal models framework. The present work can also be seen from this perspective (Fig.~\ref{fig:dags}), and the approaches differ in what sits at the nodes and at the arrows. Quantum causal models~\cite{QCC,BarrettLorenzOreshkov2020} put quantum systems at the nodes, the common cause of the Bell correlations becoming a system in an entangled state, and define the causal relations through unitary evolution. The fine-tuning does not arise in these models. They leave open, however, what a quantum system at a node really is, so they do not by themselves supply the clear notion of reality that our first desideratum asks for.
 
Recent work addresses this. Ormrod and Barrett argue for a notion of reality that resides in the dynamics~\cite{ormrod2024quantum}. The primitive is a unitary circuit, taken as an observer-independent fact. Relative to a subset of its systems, the causal structure singles out sets of mutually exclusive alternatives, represented by projectors, and one alternative from each set occurs, with a probability computed from the circuit alone. An event is then relative to the subset one starts from, and the proposal gives up (B) as the approaches above do. Being formulated for unitary circuits, it does not aim at theory-independence.
 
Another approach in which causal structure is ontologically fundamental, and one that shares the aim of going beyond the ontological models framework, is that of causal-inferential theories~\cite{SchmidSelbySpekkens2020}. The framework separates causation from inference, giving each a process-theoretic formulation of its own, and it takes the classical theory of inference to be Bayesian probability together with Boolean logic. The authors' motivation is to preserve Leibnizianity~\cite{Spekkens2019Leibniz}, a principle closely related to the requirement of no fine-tuning adopted here. The realist causal-inferential theory they call for is conservative with respect to causal structure, keeping the common-cause structure of the Bell scenario (Fig.~\ref{fig:dags}). Its notions of influence and inference, by contrast, are both intrinsically nonclassical. The authors set out criteria such a theory would have to meet and leave their exact formalization open. The localic models of the Bell scenario can be read in this light, since they depart from the classical theory of inference in both of its components, Boolean logic giving way to the paraconsistent logic of records (App.~\ref{Appendix:Paraconsistency}) and Kolmogorov probability to a non-modular weight, while the causal structure is left as it stands. A localic ontological model of quantum theory, were it built, would be a candidate for what they ask for. Phrasing our framework in their language is a natural next step.
 
A further line of research on causal structure also rests on point-free topology. Ordered locales~\cite{HeunenVanDerSchaaf,vanderschaaf2024PointFreeSpacetimes} replace the point-set description of the causal order of spacetime with a lattice of regions, and the two constructions are companions in taking regions rather than points as primitive. How the relativistic setting treated there combines with the quantum one treated here we leave for future research. 
 
\paragraph*{Changing the algebra of events or the probability.} The sheaf-theoretic account of contextuality and nonlocality~\cite{abramsky2011sheaf,ABShane} applies, like ours, to any operational theory. It represents the statistics of each context as a local section of a presheaf over the contexts and identifies nonclassicality with the obstruction to gluing them into a global section, a distribution over joint assignments to all measurements at once. The object whose absence it diagnoses is the one that (A), (B), and (C) together supply. The account stops at the diagnosis, since it does not aim at a new ontological framework, whereas a localic model supplies in place of the missing object one weight on the records, not a distribution over joint assignments, that returns the statistics of every context (App.~\ref{Appendix:Fine}).
 
Partial Boolean algebras~\cite{KochenSpecker1967,AbramskyBarbosa2021} formalize the event structure directly, as a family of Boolean algebras, one per context, overlapping in the events the contexts share, with meets and joins defined only between events lying in a common context. The localic framework extends these operations to every pair of sectors, whatever their contexts, and the nonzero meets of sectors of neighbouring contexts are exactly the glues, the records on which local causality does its work (Sec.~\ref{sec:localic_bell}). In quantum logic~\cite{BirkhoffvonNeumann1936} the operations are likewise extended to every pair of events, but the result is an orthomodular lattice that takes the quantum proposition structure as primitive, leaving it unexplained. While quantum logic gives up distributivity, every departure of this work, records without points, the non-modular weight, and the paraconsistency of App.~\ref{Appendix:Paraconsistency}, takes place inside a distributive lattice.
 
Topos approaches to quantum theory are the closest to this work in their mathematics, but they pursue a different aim. In the programme of Isham, Butterfield, and D\"oring~\cite{IshamButterfield1998,DoeringIsham2008} the state space is replaced by the spectral presheaf over the poset of contexts, whose lack of global elements is equivalent to the Kochen--Specker theorem. Propositions are its clopen subobjects, and a quantum state is represented by a measure on them defined context by context~\cite{Doering2009}. In the Bohrification programme of Heunen, Landsman, and Spitters~\cite{HeunenLandsmanSpitters2009} the state space is a locale, internal to a topos of functors on the contexts, whose lack of points again reflects the Kochen--Specker theorem, and a quantum state becomes a probability valuation on it. Both programmes thus work with a state space without global points and with per-context probability, and both build their logic on a Heyting algebra, the clopen subobjects in one case and the opens of the internal locale in the other, so that the logic is intuitionistic. The algebra of clopen subobjects is moreover bi-Heyting, as D\"oring later showed, so it carries a paraconsistent negation alongside the intuitionistic one on which the programme was built~\cite{Doering2016}. Here the propositions are records, which form a coframe, and the logic App.~\ref{Appendix:Paraconsistency} derives from it is paraconsistent, the intuitionistic logic living on the frame of opens instead. The aim and the objects differ as well. The topos constructions take the operator algebra of quantum theory as their input, treat contextuality as a feature of the world to be represented rather than dissolved, and do not address the problem of fine-tuning. Their state object is a presheaf, or an internal locale, over the poset of contexts ordered by inclusion, whereas here there is one locale, the contexts are its parts, and what two contexts share is their meet. Heunen, Landsman, and Spitters describe their construction as turning quantum theory into a classical theory, at the price of reasoning about it in an intuitionistic logic, in which the law of the excluded middle (for any proposition~$p$ we have~$p\vee\neg p$) does not hold in general.
The localic framework keeps the usual rules of reasoning and changes instead what the elements of reality are, so that its non-Boolean logic is the logic of the records themselves.
 
Since the Bell inequalities are the conditions for a single Kolmogorov probability space to serve all four contexts, as Pitowsky showed~\cite{Pitowsky1989,Pitowsky1994}, exceeding the Bell bound by departing from Kolmogorov probability alone is a natural route, and it has already been explored. Every no-signalling statistics, up to the algebraic maximum and so beyond the Tsirelson bound, is reproduced by a signed measure over the joint assignments~\cite{abramsky2011sheaf,AbramskyBrandenburger2014,AlSafiShort2013}, so a model of that kind exists for every correlation we exhibit, including ones outside the quantum set. The departure from (C) in this work differs from that route in three respects. The weights remain non-negative. The failure of the modular identity is confined to the joins of incompatible contexts, which no experiment reaches, and its size is what those contexts do not share (Theorem~\ref{thm:completions}). And it is derived rather than posited, following from each context being normalized on its own while the contexts share only part of their weight. Signed measures show how far a departure from (C) can reach, and the localic models show how such a departure can follow from a change in what the elements of reality are.

\paragraph*{Other directions.} Müller's algorithmic idealism goes beyond the ontological models framework by giving fundamental physics, quantum theory included, a different role: that of telling an observer what they should expect to see next rather than which reality accounts for the operational statistics~\cite{Muller2020,Muller2026}. Fankhauser, Gonda, and De les Coves stay within the ontological models framework and enrich it. The standard framework leaves the observer outside the model, represented only through the preparations and measurements performed, while they keep a classical ontology and model the observer as a physical system within it, asking what such an observer can learn about the world~\cite{Fankhauser2024,Gonda2026}. Neither approach sets out to address the problem of fine-tuning.
 
Among the approaches that posit elements of reality and depart from one of the four commitments, the departure is made at (B), (C), or (D), and (A) is either kept or left without a precise reformulation. The topos approaches are an exception, but in reformulating (A) they do not address the problem of fine-tuning. The localic framework reformulates (A) within an ontological model, which is what permits a locally causal account of the Bell correlations (Theorem~\ref{thm:go}).

\end{document}